\tikzset{->-/.style={decoration={
 markings,
 mark=at position #1 with {\arrow{>}}},postaction={decorate}}}
 \tikzset{cross/.style={cross out, draw=black, minimum size=5, inner sep=0pt, outer sep=0pt},
cross/.default={1pt}}
\definecolor{lightgray}{gray}{1}
\newtheorem{theorem}{Theorem}[section]
\newtheorem{corollary}[theorem]{Corollary}
\newtheorem{proposition}[theorem]{Proposition}
\newtheorem{lemma}[theorem]{Lemma}
\theoremstyle{remark}
\newtheorem{remark}[theorem]{Remark}
\def\blfootnote{\xdef\@thefnmark{}\@footnotetext}
\DeclarePairedDelimiterX\MeijerM[3]{\lparen}{\rparen}{\begin{matrix}#1 \\ #2\end{matrix}\delimsize\vert\,#3}
\DeclareMathOperator{\Tr}{Tr}
\DeclareMathOperator{\Ai}{Ai}
\DeclareMathOperator{\Real}{Re}
\DeclareMathOperator{\Imag}{Im}
\renewcommand{\Re}{\Real}
\renewcommand{\Im}{\Imag}
\newcommand{\R}{\mathbb{R}}
\newcommand{\C}{\mathbb{C}}
\def\le{\left}
\def\ri{\right}
\def\d{{\rm d}}
\def\1{\operatorname{Id}}
\def\K{\mathbb{K}}
\newcommand{\eins}{\leavevmode\hbox{\small1\kern-3.8pt\normalsize1}}
\DeclareMathOperator{\sgn}{sgn}
\title{Large gap asymptotics at the hard edge for product random matrices and Muttalib-Borodin ensembles}
\author{Tom Claeys\footnote{Institut de Recherche en Math\'ematique et Physique,  Universit\'e
catholique de Louvain, Chemin du Cyclotron 2, B-1348
Louvain-La-Neuve, BELGIUM}, Manuela Girotti\footnotemark[\value{footnote}] \ and Dries Stivigny\footnote{Department of Mathematics, KU Leuven, Celestijnenlaan 200B box 2400, B-3001 Heverlee, BELGIUM}}
\date{}
\numberwithin{equation}{section}
\begin{document}

\maketitle

\renewcommand{\abstractname}{}
\begin{abstract}
\noindent
\textsc{Abstract.} We study the distribution of the smallest eigenvalue for certain classes of positive-definite Hermitian random matrices, in the limit where the size of the matrices becomes large. Their limit distributions can be expressed as Fredholm determinants of integral operators associated to kernels built out of Meijer $G$-functions or Wright's generalized Bessel functions. They generalize in a natural way the hard edge Bessel kernel Fredholm determinant. We express the logarithmic derivatives of the Fredholm determinants identically in terms of a $2\times 2$ Riemann-Hilbert problem, and use this representation to obtain the so-called {\em large gap asymptotics}.
\end{abstract}

\section{Introduction} \label{section: introduction}

It is well known that gap probabilities and extreme eigenvalue distributions of random matrices whose eigenvalues follow a determinantal point process can be expressed as Fredholm determinants corresponding to integral kernel operators.
As the size of the random matrices tends to infinity, universal limit distributions arise, depending on the scaling regime.
Three classical limit distributions for Hermitian random matrices are the Fredholm determinants associated to the sine kernel, the Airy kernel, and the Bessel kernel. Loosely speaking, the sine kernel determinant describes gap probabilities in the bulk of the spectrum, the Airy kernel determinant describes extreme eigenvalue distributions near soft edges, and the Bessel kernel determinant describes extreme eigenvalue distributions near certain hard edges.
Those three determinants are well understood. In particular, they can be expressed identically in terms of Painlev\'e transcendents \cite{JMU, TWFredh, TWAiry, TWBessel} and their asymptotic behaviour for large gaps is known.

\medskip

Recently, there has been a lot of interest in Wishart-type products of random matrices \cite{AkemannIpsenKieburg, AkemannKieburgWei, ArnoDriesMario, ArnoDries, ArnoLun, LiuWangZhang} and in Muttalib-Borodin ensembles \cite{BloomLevenbergWielonsky, Borodin, Cheliotis, TomStefano, ForresterWang, Kuijlaars, Muttalib, LZhang, LZhang2}. 
New universal limiting kernels near the hard edge have been discovered in this context, associated to kernels built out of Meijer $G$-functions \cite{ArnoLun} and Wright's generalized Bessel functions \cite{Borodin}.
The study of the associated Fredholm determinants, which describe the limit distributions of the smallest eigenvalue, was initiated recently in \cite{Strahov} for products of random matrices and in \cite{LZhang2} for Muttalib-Borodin ensembles, and remarkable systems of differential equations have been obtained.
We contribute to these developments by obtaining large gap asymptotics, and by expressing the Fredholm determinants identically in terms of a $2\times 2$ Riemann-Hilbert problem.

\medskip

\paragraph{The random matrix models.}
We are interested in three different types of random matrices, which we describe below.

\begin{enumerate}
\item[(1)] Our first case of interest consists of random matrices $M^{(1)}$ of the form
\begin{equation}
M^{(1)}=\le(G_r\ldots G_{2}G_1\ri)^*G_r\ldots G_{2}G_1,
\end{equation}
where each factor $G_j$ is an independent complex Ginibre matrix of size $(n+\nu_j)\times(n+\nu_{j-1})$, with $\nu_0=0$, and $r\in\mathbb N$, $\nu_1,\ldots, \nu_r\in\mathbb N\cup\{0\}$. This means that all entries of $G_j$ are independent complex standard Gaussians. The notation $^*$ stands for the Hermitian conjugate.
\item[(2)] Our second model consists of products of truncations of Haar distributed unitary matrices. We let $M^{(2)}$ be of the form
\begin{equation}
M^{(2)}=\le(T_r\ldots T_{2}T_1\ri)^*T_r\ldots T_{2}T_1,
\end{equation}
where $r\in\mathbb N$ and $T_j$ is the upper left $(n+\nu_j)\times(n+\nu_{j-1})$ truncation of a Haar distributed unitary matrix $U_j$ of size $\ell_j \times \ell_j$. We assume that $U_1,\ldots, U_r$ are independent, that $\nu_0 = 0$, $\nu_1, ..., \nu_r\in\mathbb N\cup\{0\}$, and that $\ell_j \geq n+\nu_j+1$ for $j = 1, ..., r$. Moreover, we assume that $\sum_{j=1}^r(\ell_j-n-\nu_j)\geq n$.
\item[(3)] Finally, we consider random matrices $M^{(3)}$ whose eigenvalue joint probability distributions take the form
\begin{equation}\label{MB}
\frac{1}{Z_n}\Delta(x_1,\ldots, x_n)\Delta(x_1^\theta,\ldots, x_n^\theta)\prod_{k=1}^n x_k^\alpha e^{-nx_k}\d x_k,
\end{equation}
with $x_1,\ldots, x_n>0$, 
where $\alpha>-1$, $\theta>0$, and
\begin{equation}
\Delta(x_1,\ldots, x_n)=\prod_{1\leq i<k\leq n}(x_k-x_i).
\end{equation}
Such densities are known as Muttalib-Borodin Laguerre ensembles and were shown recently to arise naturally as joint probability densities for the squared singular values of certain upper-triangular matrix ensembles \cite{Cheliotis, ForresterWang}.
\end{enumerate}

The simplest case of models (1) and (3) is the Wishart/Laguerre ensemble. If we set $r=1$ in (1), the matrix $M^{(1)}$ has the form $G^*G$ with $G$ a complex Ginibre matrix of size $(n+\nu_1)\times n$. Such a matrix is called a Wishart/Laguerre random matrix, and the joint probability distribution of the eigenvalues is given by \eqref{MB} in the case $\theta=1$ and $\alpha=\nu_1$. 
The simplest case of model (2), corresponding to $r=1$, is the Jacobi Unitary Ensemble (see \cite[Section 4]{ArnoDriesMario}).

In each of the above models, the joint probability density function for the eigenvalues is a determinantal point process. The associated correlation kernels $K_n^{(j)}$ for the eigenvalues of $M^{(j)}$ have various representations; for us it is convenient that they can be expressed as double contour integrals of the form 
\begin{equation}\label{correlation kernel Kn intro}
 K_n^{(j)}(x,y)=\frac{1}{4\pi^2}\int_{\gamma}\d u\int_{\tilde\gamma}\d v\, \frac{F_n^{(j)}(u)}{F_n^{(j)}(v)}\frac{x^{-u}y^{v-1}}{u-v},  \qquad j=1,2,3
 \end{equation}
for some functions $F_n^{(j)}$ which depend on $j$ and on the parameters $\{\nu_j\}, \{\mu_k\}, \alpha, \theta$ in the model, and for some contours $\gamma$ and $\tilde\gamma$. Such expressions are known for each of the models corresponding to $j=1,2,3$. For convenience of the reader, we  describe the explicit form of $F_n^{(j)}$ and of the shape of the contours $\gamma$ and $\tilde\gamma$ in Appendix \ref{appendix: fredholm}, see Proposition \ref{prop: finite n kernels}, although we will not use this in what follows.
 
In cases (1) and (3), we will be interested in the limit where $n\to +\infty$ for fixed values of the parameters $\nu_1,\ldots, \nu_r$ and $\alpha, \theta$. In case (2), if we let $n\to + \infty$, we also need to let $\ell_1, ..., \ell_r$ go to infinity. For each $j$, we may choose either to let $\ell_j-n$ go to infinity, or to keep $\ell_j-n$ fixed. The large $n$ behaviour of the eigenvalues of $M^{(2)}$ will depend on these choices. We take $J \subseteq \{2,...,r\}$ a subset of indices with cardinality $0 \leq q := |J| < r$, and we let $\ell_1,...,\ell_r$ go to infinity in such a way that
\begin{align}
& \ell_k - n \rightarrow +\infty, &\mbox{ if } k \notin J,\\
&\ell_{k_j} - n = \mu_j>\nu_j, \, \nu_j\in\mathbb N\cup\{0\}, &\mbox{ if }k_j \in J.
\end{align}

\paragraph{Smallest eigenvalue distribution.} 
 
The smallest particle $x^*:=\min_{1\leq m\leq n}x_m$ in the models which we study has a distribution given by
\begin{equation}
{\rm Prob}\left(x^*>s\right)=1+\sum_{m=1}^n \frac{(-1)^m}{m!}\int_{[0,s]^m}\det\left(K_n^{(j)}(x_i,x_\ell)\right)_{i,\ell=1}^m \d x_1\ldots \d x_m.
\end{equation}
Since $K_n^{(j)}$ is of rank $n$, this is the standard series expansion of the Fredholm determinant of the integral operator acting on $L^2(0,s)$ with kernel $K_n^{(j)}$.
We denote this Fredholm determinant by $\det\left(1-\left.K_n^{(j)}\right|_{[0,s]}\right)$.
Near the origin, the eigenvalue correlation kernels $K_n^{(j)}$ admit scaling limits of the following form: for $x,y>0$,
we have
\begin{equation}\label{scaling limit}
\lim_{n\to +\infty}\frac{1}{c_n^{(j)}}K_n^{(j)}\left(\frac{x}{c_n^{(j)}},\frac{y}{c_n^{(j)}}\right)=\mathbb K^{(j)}(x,y),\qquad j=1,2,3,
\end{equation}
where
\begin{equation}\label{def cj}
c_n^{(1)}=n,\qquad c_n^{(2)} = n\prod_{k\notin J} (\ell_k-n),\qquad c_n^{(3)}=n^{\frac{1}{\theta}},
\end{equation}
for some limiting kernels $\mathbb K^{(j)}$. This was proved in \cite{ArnoLun} for $j=1$, in \cite{ArnoDriesMario} for $j=2$, and in \cite{Borodin} for $j=3$. The limiting kernels can be expressed in terms of Meijer $G$-functions for $j=1,2$ and in terms of Wright's generalized Bessel functions for $j=3$; in the special case $\theta = \frac{m}{n} \in \mathbb{Q}$, the Wright's generalized Bessel functions can be expressed as Meijer $G$-functions as well (see \cite{LZhang2}).
Furthermore, if $\theta  \in \mathbb N$ or $\frac{1}{\theta}  \in \mathbb N$, then the limiting kernels  $\mathbb K^{(3)}$ and $\mathbb K^{(1)}$ are related, as it was shown in  \cite{ArnoDries}.
For us, it is important to express the limiting kernels as double contour integrals. As we will show in Proposition \ref{prop: limiting kernels}, they can be expressed as
\begin{equation}
\label{limiting kernels1}\mathbb K^{(j)}(x,y)=\frac{1}{4\pi^2}\int_\gamma\d u\int_{\tilde\gamma}\d v\frac{F^{(j)}(u)}{F^{(j)}(v)}\frac{x^{-u}y^{v-1}}{u-v} ,\qquad j=1,2,3,
\end{equation}
with 
\begin{align}
&F^{(1)}(z)=\frac{\Gamma(z)}{  \prod_{j=1}^r \Gamma\le(1+\nu_j -z\ri)},\label{f1}\\
&F^{(2)}(z)=\frac{\Gamma(z)\prod_{k = 1}^q \Gamma\le(1+\mu_k -z\ri)}{  \prod_{j=1}^r \Gamma\le(1+\nu_j -z\ri)} \quad (q= |J|),\label{f2}\\
&F^{(3)}(z)=\frac{\Gamma(z+\frac{\alpha}{2})}{\Gamma\left(\frac{\frac{\alpha}{2}+1-z}{\theta}\right)}.\label{f3}
\end{align}

\tikzset{->-/.style={decoration={
 markings,
 mark=at position #1 with {\arrow{>}}},postaction={decorate}}}

\begin{figure}
\centering
\scalebox{.9}{
\begin{tikzpicture}[>=stealth]
\path (0,0) coordinate (O);

\draw[dashed, ->] (-4,0) -- (8,0) coordinate (x axis);
\draw[dashed, ->] (.8,-4) -- (.8,4) coordinate (y axis);

\path (-3,-4) coordinate (P);
\path (-3, 4) coordinate (Q);
\path (0,-1) coordinate (P2);
\path (0,1) coordinate (Q2);
\draw[->- = .7] (P) -- (P2);
\draw (Q) -- (Q2);
\node [above] at (0.2,1) {\large $\gamma$};
\draw[->- = .7]  (P2) .. controls + (45:1cm) and + (-45:1cm) .. (Q2);

\path (5,-4) coordinate (A);
\path (5, 4) coordinate (B);
\path (2,-1) coordinate (A2);
\path (2,1) coordinate (B2);
\draw[->- = .7] (A) -- (A2);
\draw (B) -- (B2);
\draw[->- = .7] (A2) .. controls + (135:1cm) and + (225:1cm) .. (B2);
\node [above] at (1.8,1) {\large $\tilde \gamma$};

\draw[fill] (O) circle [radius=0.05];
\node[below] at (-.1,0) {$-\frac{\alpha}{2}$};
\draw[fill] (-1,0) circle [radius=0.05];
\node[above] at (-1,0) {$-\frac{\alpha}{2}-1$};
\draw[fill] (-2,0) circle [radius=0.05];
\node[below] at (-2,0) {$-\frac{\alpha}{2}-2$};
\draw[fill] (-3,0) circle [radius=0.05];
\node[above] at (-3,.2) {\ldots};

\draw (1.2,0) node[cross,rotate=90] {};
\node[below] at (1.2, 0) {$\frac{1}{2}$};

\draw[fill] (2.5,0) circle [radius=0.05];
\node[below] at (2.5, 0) {$\frac{\alpha}{2} + 1$};
\draw[fill] (4,0) circle [radius=0.05];
\node[above] at (4, 0) {$\frac{\alpha}{2} + 1+\theta$};
\draw[fill] (5.5,0) circle [radius=0.05];
\node[below] at (5.5, 0) {$\frac{\alpha}{2} + 1+2\theta$};
\draw[fill] (7,0) circle [radius=0.05];
\node[above] at (7, .2) {\ldots};

\end{tikzpicture}}
\caption{The contours $\gamma, \tilde\gamma$ involved in the double-integral representation of the kernel $\mathbb{K}^{(3)}$, in the case $\alpha>0$, for $\theta>0$; the dots represents the poles and the zeroes of the function $F^{(3)}$. The real value $\frac{1}{2}$ lies always in between $\gamma$ and $\tilde\gamma$.}
\label{Kalphatheta}
\end{figure}
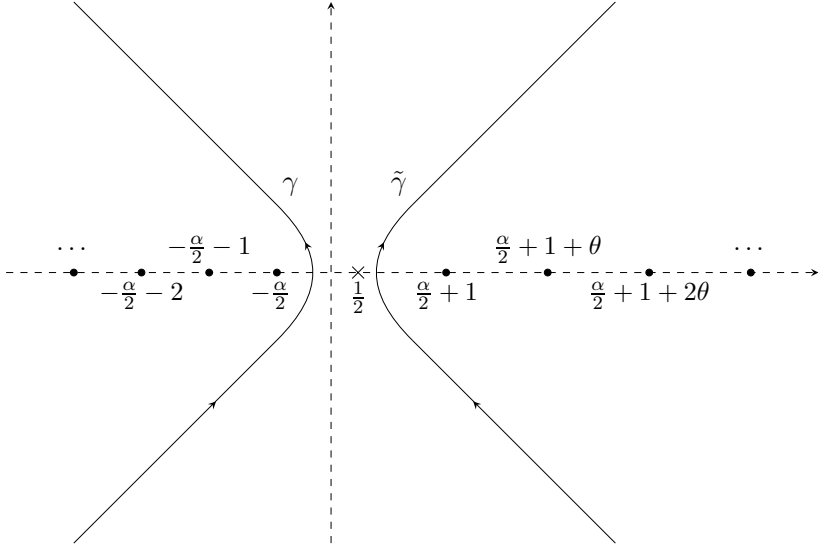
The contours $\gamma$ and $\tilde\gamma$ are such that $\gamma$ lies to the right of the poles of $F^{(j)}$ and $\tilde\gamma$ lies to the left of the zeros of $F^{(j)}$, such that the vertical line through $\frac{1}{2}$ lies in between $\gamma$ and $\tilde\gamma$ (note that this is always possible, since $\mu_k> \nu_k, \, \nu_k\in\mathbb N\cup\{0\}$ and $\alpha>-1$), and such that $\gamma$ and $\tilde\gamma$ do not intersect. Both contours are oriented upwards, and they tend to infinity in sectors lying strictly in the left half plane (for $\gamma$) or the right half plane (for $\tilde\gamma$). The contours are illustrated in Figure \ref{Kalphatheta} for $j=3$, $\alpha>0$ and $\theta>0$.

It is worth noting that the kernels $\mathbb K^{(1)}$ also appear in Cauchy multi-matrix models \cite{BertolaBothner, CauchyBGS, CauchyBGS2}.
The simplest case of all three limiting kernels is the same: if $r=1$, $\nu_1=\alpha$ in case $j=1$, or if $r=1$, $q=0$, $\nu_1=\alpha$ in case $j=2$, or if $\theta=1$ in case $j=3$, we have
\[F^{(1)}\le(z+ \frac{\alpha}{2}\ri)=F^{(2)}\le(z  +\frac{\alpha}{2}\ri)=F^{(3)}(z)=\frac{\Gamma(z+\frac{\alpha}{2})}{\Gamma\left(\frac{\alpha}{2}+1-z\right)},\]
and then the right hand side of \eqref{limiting kernels1} is a well-known (see e.g.\ \cite{LZhang}) integral representation of the hard edge Bessel kernel.

A slightly stronger version of \eqref{scaling limit} allows one to show that 
the large $n$ limit of the smallest particle distribution is given by
\begin{align}\label{lim eig distr Fredholm}
\lim_{n\to +\infty}{\rm Prob}\left(c_n^{(j)}x^*>s\right)=\lim_{n\to +\infty}\det\left(1-\left.K_n^{(j)}\right|_{[0,s/c_n^{(j)}]}\right) \nonumber \\
= \det\left(1-\left.\mathbb K^{(j)}\right|_{[0,s]}\right).
\end{align}
We will justify the last equality in the above formula in Lemma \ref{lemma scaling limit uniform} and Corollary \ref{corollary Fredholm}, using standard results from \cite[Theorem 2.21 and Addendum H]{Simon} about trace class operators.
The Fredholm determinants on the second line of \eqref{lim eig distr Fredholm} are the central objects in what follows.

\paragraph{Fredholm determinants and Riemann-Hilbert problems.}

Let $K$ be an integral operator acting on $L^2(\Sigma)$, with $\Sigma$ be a collection of oriented contours in the complex plane. We call the kernel $K$ of the {\em integrable} form if
\begin{equation*}
K(u, v) = \frac{\textbf{f}(u)^T \textbf{g} (v)}{u - v}, \label{JSkernel2}
\end{equation*}
where $\textbf{f}$ and $\textbf{g}$ are $p$-dimensional column-vectors of sufficiently smooth functions on $\Sigma$, satisfying the condition $\textbf{f}(u)^T \textbf{g}(u) =0$. 

Assume that the kernel $K$ additionally depends on some deformation parameters $\{\kappa_i\}$ and consider its Fredholm determinant $\det (1 - K)$ as a function of such parameters: a well-known procedure due to Its, Izergin, Korepin and Slavnov \cite{IIKS} allows  to express the logarithmic derivatives $\partial_{\kappa_i} \ln \det(1-K)$, for all $ i$, in terms of a Riemann-Hilbert (RH) problem of size $p\times p$. 
For $p=2$, this RH representation is often useful to derive asymptotic properties of the Fredholm determinants by applying the Deift/Zhou steepest descent method on the RH problem. The RH method has lead to rigorous large $s$ asymptotics (which are large gap asymptotics for the underlying determinantal processes) for, amongst others, the sine, Airy, and Bessel kernel Fredholm determinants \cite{BaikBuckinghamdiFranco, DeiftSashaIgor, DeiftItsKrasovskyZhou, DeiftIgorBessel, Krasovsky}.

It is known \cite{ArnoLun, Strahov} that $\mathbb K^{(1)}$ is of integrable form with $p=r+1$, and that $\mathbb K^{(3)}$ is of integrable form if $\theta=a/b\in\mathbb Q$ with $p$ depending on $a$ and $b$ \cite{LZhang2}. Overall, except in the simplest case corresponding to the Bessel kernel, the associated RH problems are of size $p\times p$ with $p>2$ and it is not clear whether such RH problems can be used to obtain large $s$ asymptotics.

A key observation in this paper (see Section \ref{section: identity}), based on ideas from \cite{MeM}, is that the integral operators
$\left.\mathbb K^{(j)}\right|_{[0,s]}$ can be factorized in the form $\mathcal{M}^{-1} \circ \mathbb H_s^{(j)}  \circ \mathcal{M}$ for some suitable operator $\mathcal M$, with $\mathbb H_s^{(j)}$ an integral operator with kernel of integrable form, with $p=2$.
Therefore, we have
\[\det\left(1-\left.\mathbb K^{(j)}\right|_{[0,s]}\right)=\det\left(1-\mathbb H_s^{(j)}\right),\]
and we can express $\frac{\d}{\d s}\ln \det\left(1-\mathbb H_s^{(j)}\right)$ identically in terms of a $2\times 2$ matrix RH problem. The fact that the RH problem is of size $2\times 2$ is remarkable, and it is important to derive large $s$ asymptotics.

We now state the relevant $2\times 2$ RH problem, with jump contours $\gamma$ and $\tilde\gamma$ as defined before.

\subsubsection*{RH problem for $Y$} \begin{itemize}\item[(a)] $Y: \C\setminus(\gamma\cup\tilde\gamma)\to \C^{2\times 2}$ is analytic;
\item[(b)] $Y(z)$ has continuous boundary values $Y_\pm(z)$ as $z$ approaches the contour $\gamma\cup\tilde\gamma$ from the left ($+$) and right ($-$), according to its orientation, and we have the jump relations
\begin{equation}\label{RHP Y: jump}
Y_+ (z)= Y_-(z) J^{(j)}(z),\qquad  z \in  \gamma \cup \tilde \gamma, 
\end{equation}
with jump matrix (the contours are as in \figurename \ \ref{Kalphatheta})
\begin{equation}\label{def Jj}
J^{(j)}(z) = \begin{dcases}
\begin{pmatrix}
1&0\\ s^z F^{(j)}(z)^{-1}&1
\end{pmatrix},& z\in\tilde\gamma,\\
\begin{pmatrix}
1&- s^{-z} F^{(j)}(z) \\0&1
\end{pmatrix},& z\in\gamma,
\end{dcases} 
\end{equation}
where $F^{(1)}, F^{(2)}, F^{(3)}$ are as in \eqref{limiting kernels1};
\item[(c)] as $z\to\infty$, there exists a matrix $Y_1=Y_1(s)$ such that
\begin{equation}\label{as Y}
Y = I + \frac{Y_1(s)}{z}+\mathcal{O}\le(\frac{1}{z^{2}}\ri).
\end{equation}
\end{itemize}
Using the Its-Izergin-Korepin-Slavnov procedure, we will show  that a solution to this RH probem exists; uniqueness of the solution can be shown using standard techniques.
The RH solution $Y$ depends on the value of $j=1,2,3$ and also on $s$ and on the values of the parameters in each of the models, but we will simply write $Y$ for notational convenience.


\paragraph{Statement of results.} As our first result, we establish an identity which expresses the logarithmic $s$-derivative of the Fredholm determinant in terms of the RH solution $Y$.

\begin{theorem}[\bf Differential identity for gap probabilities]\label{theorem identity}
Let $\mathbb K^{(j)}$ be the kernels defined in \eqref{limiting kernels1} for $j=1,2,3$, and let $\det\left(1-\left.\mathbb K^{(j)}\right|_{[0,s]}\right)$ be the Fredholm determinant of the associated operators acting on $[0,s]$ with $s>0$.
Then, we have the identity
\begin{gather}\label{diff id}
\frac{\d }{\d s} \ln \det\left(1-\left.\mathbb K^{(j)}\right|_{[0,s]}\right) =- \frac{1}{s} \le(Y_{1}(s)\ri)_{2,2}, 
\end{gather}
with $\le(Y_{1}(s)\ri)_{2,2} $ the $(2,2)$-entry of $Y_1(s)$, and $Y_1(s)$ defined by \eqref{as Y} in terms of the unique solution to the RH problem for $Y$.
\end{theorem}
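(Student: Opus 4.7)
The plan is to first transform the Fredholm determinant on $L^2([0,s])$ into the determinant of an operator of integrable form with $p=2$ on the contours $\gamma\cup\tilde\gamma$, and then to apply the Its-Izergin-Korepin-Slavnov (IIKS) formalism to express the $s$-derivative in terms of the large-$z$ coefficient $Y_1$ of the RH solution.

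For the reduction, I would decompose $\mathbb{K}^{(j)}|_{[0,s]}=\mathcal{A}\circ\mathcal{C}\circ\mathcal{B}$, where $\mathcal{A}\colon L^2(\gamma)\to L^2([0,s])$ and $\mathcal{B}\colon L^2([0,s])\to L^2(\tilde\gamma)$ correspond to the $u$- and $y$-integrations in the double-contour representation, and $\mathcal{C}\colon L^2(\tilde\gamma)\to L^2(\gamma)$ is (up to a sign) the Cauchy transform between the contours. The inner $y$-integral $\int_0^s y^{v-u-1}\,dy=s^{v-u}/(v-u)$ converges because $\mathrm{Re}(v)>\mathrm{Re}(u)$ for $u\in\gamma$ and $v\in\tilde\gamma$, and the Sylvester identity $\det(1-\mathcal{A}\mathcal{C}\mathcal{B})=\det(1-\mathcal{C}\mathcal{B}\mathcal{A})$ yields $\det(1-\mathbb{K}^{(j)}|_{[0,s]})=\det(1-\mathbb{H}_s^{(j)})$ for an operator $\mathbb{H}_s^{(j)}$ on $L^2(\gamma\cup\tilde\gamma)$ with integrable kernel $(u-v)^{-1}f(u)^Tg(v)$, where $f,g$ are each supported on a single contour and involve the factors $s^{\pm z}F^{(j)}(z)^{\mp 1}$. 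The IIKS jump $I+2\pi i\,fg^T$ then matches $J^{(j)}$ of \eqref{def Jj}, so $Y$ is exactly the RH solution attached to $\mathbb{H}_s^{(j)}$.

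Next I would differentiate in $s$. Since $f$ is $s$-independent while $\partial_s g(v)=\varepsilon(v)(v/s)g(v)$ with $\varepsilon=+1$ on $\tilde\gamma$ and $\varepsilon=-1$ on $\gamma$, one gets $\partial_s\mathbb{H}_s^{(j)}=\mathbb{H}_s^{(j)}\circ D$ where $D$ is multiplication by $\varepsilon(v)v/s$. The trace formula combined with the IIKS resolvent identity $R(u,v)=F(u)^TG(v)/(u-v)$, $F=Yf$, $G=Y^{-T}g$, then gives
\[
\frac{d}{ds}\ln\det(1-\mathbb{H}_s^{(j)})=-\mathrm{tr}\bigl[(1-\mathbb{H}_s^{(j)})^{-1}\partial_s\mathbb{H}_s^{(j)}\bigr]=-\frac{1}{s}\int_{\gamma\cup\tilde\gamma}\varepsilon(v)\,v\,R(v,v)\,dv,
\]
with $R(v,v)=F'(v)^TG(v)$ explicit on each contour in terms of entries of $Y$, $Y'$ and $s^{\pm v}F^{(j)}(v)^{\mp 1}$.

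The final step, and in my view the main obstacle, is to identify the resulting contour integral with $(Y_1(s))_{22}$. The natural route is via the large-$z$ Plemelj-Sokhotski expansion of $Y$, which yields $(Y_1(s))_{22}=\frac{1}{2\pi i}\int_\gamma Y_{-,21}(w)s^{-w}F^{(j)}(w)\,dw$. Using that the triangular form of $J^{(j)}$ makes $Y_{21}$ and $Y_{11}$ analytic across $\gamma$ while $Y_{12}$ and $Y_{22}$ are analytic across $\tilde\gamma$, and performing an integration by parts on the $\tilde\gamma$-integral, the two contour integrals from the trace formula collapse into this single expression. Controlling the boundary terms at infinity on the unbounded contours $\gamma,\tilde\gamma$ — where $s^{\pm z}$ grows or decays in opposite sectors — is delicate; the Stirling decay of the Gamma-factors in $F^{(j)}$ provides the necessary control, and once this is verified \eqref{diff id} follows.
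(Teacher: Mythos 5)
Your overall plan is sound and, at a high level, matches the paper's strategy (reduce the determinant on $[0,s]$ to one on the contours with IIKS structure, then relate the $s$-derivative to $Y_1$). However, there are two places where your argument as written has a gap.

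First, the reduction to $L^2(\gamma\cup\tilde\gamma)$. The paper conjugates by the Mellin transform to obtain an operator $\mathbb{H}_s^{(j)}$ on $L^2(\tilde\gamma)$ (Proposition \ref{prop conjugation}), then applies a separate block-matrix argument (Proposition \ref{prop J IIKS}) to pass to an integrable operator $\mathbb{M}_s^{(j)}$ on $L^2(\gamma\cup\tilde\gamma)$. Your route — factor $\mathbb{K}^{(j)}|_{[0,s]}=\mathcal{A}\mathcal{C}\mathcal{B}$ and invoke $\det(1-\mathcal{A}\mathcal{C}\mathcal{B})=\det(1-\mathcal{C}\mathcal{B}\mathcal{A})$ — is a genuine and arguably shorter alternative to the Mellin conjugation, since the inner $\int_0^s y^{v-u-1}\,dy$ produces the Cauchy-type kernel directly without introducing $L^2(\frac12+i\mathbb{R})$. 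But this cyclic permutation lands you on a single contour ($L^2(\gamma)$ or $L^2(\tilde\gamma)$, depending on where you start the cycle), not on $L^2(\gamma\cup\tilde\gamma)$ as you claim. The operator you get is essentially the paper's $\mathbb{H}_s^{(j)}$, which is a composition $A\circ B$ rather than an integrable kernel. You still need the paper's second step — writing $\det(1-AB)$ as the determinant of the off-diagonal $2\times 2$ block operator $\begin{pmatrix}0&A\\ B&0\end{pmatrix}$ on $L^2(\gamma)\oplus L^2(\tilde\gamma)$ — to expose the $p=2$ integrable form.

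Second, the passage from the trace formula to $(Y_1)_{22}$. Your derivation of $\partial_s\mathbb{H}_s^{(j)}=\mathbb{H}_s^{(j)}\circ D$ and the resulting identity $\frac{d}{ds}\ln\det=-\frac{1}{s}\int_{\gamma\cup\tilde\gamma}\varepsilon(v)\,v\,R(v,v)\,dv$ with $R(v,v)=F'(v)^TG(v)$ is correct, and after substituting $F=Yf$, $G=Y^{-T}g$ it is in fact term-by-term the same integral as the paper's Malgrange-type formula $\int \operatorname{Tr}[Y_-^{-1}Y_-'\,\partial_sJ\,J^{-1}]\frac{dz}{2\pi i}$. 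The paper then rewrites the integrand as a discontinuity $\frac{z}{2s}\big(\operatorname{Tr}[Y^{-1}Y'\sigma_3]_+-\operatorname{Tr}[Y^{-1}Y'\sigma_3]_-\big)$, cancels the interior boundary pieces, deforms what remains to a large circle, and uses $\det Y=1$ (hence $Y_1$ traceless) to read off $-\frac{1}{s}(Y_1)_{22}$. Your alternative — integration by parts on the $\tilde\gamma$ integral, exploiting the analyticity of the columns of $Y$ across the triangular jumps — is plausible but not carried out, and the manipulations of $-Y_{21}Y_{11}'+Y_{11}Y_{21}'$ etc.\ are nontrivial. You would need to verify that the boundary terms at infinity vanish on both contours (where $s^{\pm z}$ is exponentially small in the wrong sector but compensated by the Gamma growth) and that the resulting expression collapses to the IIKS moment formula $(Y_1)_{22}=\frac{1}{2\pi i}\int_\gamma Y_{21}(w)s^{-w}F^{(j)}(w)\,dw$ that you correctly state. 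The contour-deformation argument in the paper avoids these difficulties and is the more robust route to take here.
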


We prove the above result in Section \ref{section: identity}.
This RH representation for the Fredholm determinant is particularly useful to study large $s$ asymptotics. 
We will obtain large $s$ asymptotics for $Y$ using the Deift/Zhou steepest descent method and this will enable us to prove the following result.

\begin{theorem}[\bf Large gap asymptotics]
\label{theorem large gap}Let $\mathbb K^{(j)}$ be the kernels defined in \eqref{limiting kernels1} for $j=1,2,3$, and let $\det\left(1-\left.\mathbb K^{(j)}\right|_{[0,s]}\right)$ be the Fredholm determinant of the associated operators acting on $[0,s]$ with $s>0$.
For $j=1,2,3$, there exist constants $C^{(j)}$, $c^{(j)}$ such that, as $s \to +\infty$,
\begin{equation}
\det\left(1-\left.\mathbb K^{(j)}\right|_{[0,s]}\right)= C^{(j)}e^{- a^{(j)}s^{2\rho^{(j)}}+ b^{(j)}s^{\rho^{(j)}} +c^{(j)}\ln s} \le( 1 + o(1)\ri). \label{asymptoticstheta}
\end{equation}
Here, the values of $\rho^{(j)}$ are given by 
\begin{equation}
\label{def rho}
\rho^{(1)}=\frac{1}{r+1},\quad \rho^{(2)}=\frac{1}{r-q+1},\quad \rho^{(3)}=\frac{\theta}{\theta+1}, 
\end{equation}
the values of $a^{(j)}$ by
\begin{align}
&a^{(1)}= \frac{ r^{\frac{1-r}{1+r}} (r+1)^2}{4}      \\
&a^{(2)}= \frac{(r-q)^{\frac{1-r+q}{1+r-q}} (r-q+1)^2}{4}\\
&a^{(3)}= \frac{ \theta^{\frac{1-3\theta}{1+\theta}} (1+\theta)^2}{4},
\end{align}
and the values of $b^{(j)}$ by
\begin{align}
&\hspace{-0.2cm}b^{(1)} = (r + 1) r^{-\frac{r}{r + 1}}  \sum_{j = 1}^r \nu_j  \\
&\hspace{-0.2cm}b^{(2)} = (r -q+ 1) (r-q)^{-\frac{r-q}{r-q + 1}} \left[  \sum_{j = 1}^r \nu_j-\sum_{k = 1}^q \mu_k \right] \\
&\hspace{-0.2cm}b^{(3)} = \frac{\theta+1}{2} \theta^{-\frac{2\theta}{\theta+1}} \left[1+2\alpha-\theta\right],
\end{align}
with $\nu_{\min}=\min \{\nu_1, \ldots, \nu_r\}$.
\end{theorem}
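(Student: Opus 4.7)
The plan is to combine the differential identity of Theorem~\ref{theorem identity} with a Deift/Zhou steepest-descent analysis of the RH problem for $Y$, extract the large-$s$ expansion of $(Y_1(s))_{2,2}$, and integrate against $-1/s$. Since $F^{(1)}, F^{(2)}, F^{(3)}$ all have the same general structure as ratios of Gamma functions and differ only in the combinatorics of poles, zeros, and the parameter $\theta$, the three cases can be treated on essentially the same footing with only the saddle-point location and constants varying.

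The first step is to locate the saddle. The jump matrices carry factors $s^{\pm z}F^{(j)}(z)^{\mp 1}$, and a Stirling expansion shows that at large $|z|$ one has $\log F^{(j)}(z) \sim \kappa^{(j)} z\log z$ with $\kappa^{(1)}=r+1$, $\kappa^{(2)}=r-q+1$, $\kappa^{(3)}=1+1/\theta$. Balancing $\kappa^{(j)} z\log z$ against $z\log s$ yields a saddle at $|z|\asymp s^{\rho^{(j)}}$ with $\rho^{(j)}=1/\kappa^{(j)}$ exactly as in \eqref{def rho}, and an effective phase of order $s^{2\rho^{(j)}}$ at the saddle, which already foreshadows the term $-a^{(j)}s^{2\rho^{(j)}}$ in \eqref{asymptoticstheta}.

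Next I would rescale $z = s^{\rho^{(j)}}\zeta$ and introduce a $g$-function $g^{(j)}(\zeta)$ built from the leading Stirling part of the phase, and set $T(\zeta) = Y(s^{\rho^{(j)}}\zeta)\exp\bigl(-s^{2\rho^{(j)}} g^{(j)}(\zeta)\sigma_3\bigr)$ so that $T$ is normalized at infinity and has jump contours bounded in $\zeta$. The subleading Stirling corrections of order $z$ and $\log z$ feed a contribution of order $s^{\rho^{(j)}}\log s$, producing $b^{(j)}$, and an order $\log s$ contribution producing $c^{(j)}$. After opening lenses in the standard way, the jumps on the rescaled contour are exponentially close to $I$ away from a neighbourhood of the saddle. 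A global parametrix $P^{(\infty)}$ can then be written explicitly as a piecewise product of power- and Gamma-factor building blocks matching the leading jump structure. A local parametrix near the saddle must be built by solving a $2\times 2$ model RH problem with a quadratic phase and constant jumps; I expect this to be solvable in terms of parabolic cylinder or confluent hypergeometric functions. The explicit construction of this local model, together with the extraction of its large-argument expansion and matching to $P^{(\infty)}$, is the main technical obstacle, since the precise numerical values of $a^{(j)}$, $b^{(j)}$, and the logarithmic coefficient $c^{(j)}$ are pinned down by this matching.

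Once the parametrices are in place, a standard small-norm argument gives $Y = (I + R)P$ with $\|R\|$ of order $s^{-\rho^{(j)}}$ in the relevant operator norm, and expanding the $1/z$-coefficient of $Y$ at infinity yields
\begin{equation*}
(Y_1(s))_{2,2} = 2\rho^{(j)}a^{(j)}\,s^{2\rho^{(j)}} \;-\; \rho^{(j)} b^{(j)}\,s^{\rho^{(j)}} \;-\; c^{(j)} \;+\; o(1).
\end{equation*}
Substituting into \eqref{diff id} and integrating in $s$ from a large reference point up to $s$ produces the asymptotic formula \eqref{asymptoticstheta}, with the multiplicative constant $C^{(j)}$ absorbing the lower limit of integration; this constant remains undetermined by the Riemann--Hilbert method, as is standard for large-gap asymptotics at this level of generality.
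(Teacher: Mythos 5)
Your overall strategy tracks the paper: start from the differential identity of Theorem~\ref{theorem identity}, rescale the Riemann--Hilbert variable on the scale $s^{\rho^{(j)}}$, introduce a $g$-function to normalize the exponential phases, construct global and local parametrices, invoke a small-norm argument, extract the large-$s$ expansion of $(Y_1(s))_{2,2}$, and integrate. Your identification of $\rho^{(j)}$ from the Stirling growth rate $\kappa^{(j)}$ of $\log F^{(j)}$, and the template $(Y_1(s))_{2,2} = 2\rho a s^{2\rho}-\rho b s^{\rho}-c+o(1)$, are correct. (A small slip: the $g$-conjugation should carry the factor $s^{\rho^{(j)}}$, not $s^{2\rho^{(j)}}$; the extra power of $s^{\rho}$ in the final answer comes from the rescaling of the spectral variable, not from the conjugation itself, as is visible in the paper's equations \eqref{identity Fredholm U} and \eqref{S1}.)

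The genuine gap is in the local parametrix, which you flag as the main technical obstacle and then propose to solve with a ``quadratic phase'' model using parabolic cylinder or confluent hypergeometric functions. This would not work here. In this problem the $g$-function is supported on a band $\Sigma_5$ joining two endpoints $b_1,b_2$ (determined by matching conditions on $g_+'+g_-'$, see \eqref{def b1b2}--\eqref{eq: reb2}), and $g''$ behaves like $[(\zeta-b_1)(\zeta-b_2)]^{-1/2}$ near each endpoint; consequently $q(\zeta)=g(\zeta)-\tfrac{i}{2}h(\zeta)+\tfrac{\ell}{2}$ vanishes like $(\zeta-b_k)^{3/2}$, \emph{not} quadratically. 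The conformal map $f=(\tfrac{3}{2}q)^{2/3}$ is then locally linear and the correct local model is the \emph{Airy} parametrix, exactly as in the soft-edge situation for orthogonal polynomials. Parabolic cylinder models correspond to a quadratic vanishing of the phase (birth of a cut, gap-closing, or transition points of order two), which is a different geometric scenario. Using the wrong model problem would destroy the matching with $P^\infty$ on $\partial\mathbb D_\delta(b_k)$, so the error term and the computation of $a^{(j)}, b^{(j)}$ would come out wrong.

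A second, smaller gap is in the description of the global parametrix. You propose ``a piecewise product of power- and Gamma-factor building blocks,'' but the global parametrix here is the standard two-cut Szeg\H{o}-type construction: $P^\infty=e^{-p_0\sigma_3}Q^\infty e^{p\sigma_3}$, where $Q^\infty$ is the constant-jump $2\times 2$ model built from $\gamma(\zeta)=((\zeta-b_1)/(\zeta-b_2))^{1/4}$ and $p$ is a Cauchy-integral scalar Szeg\H{o} function solving $p_++p_-=-\ln G$ on $\Sigma_5$, see \eqref{eq: Pinfty}--\eqref{jump p}. It is the $1/\zeta$-coefficient $p_1$ of this Szeg\H{o} function, together with $g_1$, that actually produces $b^{(j)}$ (the constants $c_5, c_6$ feed in through the integrals $I_2, I_3$ in Section~\ref{sec: fredholm_det}), not any Gamma-factor building block. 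Without the correct $p$-function and the correct Airy parametrix, the claimed expansion of $(Y_1(s))_{2,2}$ cannot be justified, and the constants $a^{(j)}, b^{(j)}$ would not emerge.
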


\begin{remark}
We are not able to evaluate the multiplicative constants $C^{(j)}$ explicitly, since they arise as integration constants after integration of the logarithmic derivative \eqref{diff id}. The evaluation of such constants is in general a hard task \cite{Krasovsky}, and our method does not allow to do this. The constants $c^{(j)}$ on the other hand can be computed in principle, but their expressions are involved. We comment on this in Subsection \ref{subsec: log term}.
\end{remark}
\begin{remark}
As stated before, in the case $\theta=1$, the kernel $\mathbb K^{(3)}$ reduces to the hard edge Bessel kernel up to a re-scaling:
\begin{equation}
\mathbb K^{(3)}_{\alpha,\theta=1}(x, y) = 4  \mathbb K_{\operatorname{Bessel},\alpha}(4x,4y), \label{K3andKBessel}
\end{equation}
with
\begin{equation}
\mathbb K_{\operatorname{Bessel},\alpha}(x,y)=  \frac{ J_{\alpha}(\sqrt{x}) \sqrt{y} J_{\alpha}'(\sqrt{y}) - J_{\alpha}(\sqrt{y}) \sqrt{x} J_{\alpha}'(\sqrt{x}) }{ 2 (x-y)}.
\end{equation}
Similarly, if $r=1$, $\nu_1=\alpha\in\mathbb N\cup\{0\}$, and $q=0$, we have
\begin{equation}
\mathbb K^{(1)}_{r=1,\nu_1=\alpha}(x, y) =
\mathbb K^{(2)}_{q=0, r=1, \nu_1=\alpha}(x, y)= 4 \left(\frac{y}{x}\right)^{\alpha/2} \mathbb K_{\operatorname{Bessel},\alpha}(4x,4y).
\end{equation}
Therefore, in these cases, we have
\begin{gather}
 \det\le(1 - \K^{(j)}\bigg|_{[0, s]}\ri)  =  \det\le(1 - \K_{\text{Bessel}}\bigg|_{[0, 4s]}\ri).
\end{gather}
We then have
\[\rho^{(j)}=\frac{1}{2},\quad a^{(j)}=1, \quad b^{(j)}=2\alpha,\] and we obtain
\[\det\left(1-\left.\mathbb K^{(j)}\right|_{[0,s]}\right)= C^{(j)}e^{-s+ 2\alpha \sqrt{s} +c^{(j)}\ln s} \le( 1 + o(1)\ri),\qquad s\to +\infty,\]
which is consistent with 
\cite[formula (9)]{DeiftIgorBessel} and \cite[formula (5)]{Ehrhardt}.
\end{remark}
\begin{remark}
In the case of the product of two Ginibre matrices ($j=1$ and $r=2$), we have
\[\rho^{(1)}=\frac{1}{3},\quad a^{(1)}=\frac{9}{2^{7/3}},\]
and thus
\begin{equation*}
\det\left(1-\left.\mathbb K^{(1)}\right|_{[0,s]}\right)= C^{(1)}e^{-\frac{9}{2^{7/3}}s^{\frac{2}{3}}+ b^{(1)} s^{\frac{1}{3}} +c^{(1)}\ln s} \le( 1 + o(1)\ri),\qquad s\to +\infty,
\end{equation*}
which is consistent with a recent result obtained by Witte and Forrester in \cite[Corollary 3.1]{ForresterWitte}.

Furthermore, in the special case $\nu_1= -\frac{1}{2}$ and $\nu_2=0$, the sub-leading coefficient is equal to $b^{(1)}=-\frac{3}{2^{5/3}}$, which agrees with formula (3.82) from the same paper \cite{ForresterWitte}. 
The same agreement is achieved for $j=3$, $\alpha=0$, $\theta=2$ after the identification $s\mapsto 2\sqrt{s}$ (see \cite[formula (3.79)]{ForresterWitte}), the coefficients being $a^{(3)}=\frac{9}{2^{11/3}}$ and  $b^{(3)}=-\frac{3}{2^{7/3}}$.
\end{remark}


\paragraph{Outline.}
The rest of the paper is organized as follows. In Section \ref{section: identity}, we will use the approach developed by Its-Izergin-Korepin-Slavnov \cite{IIKS} together with ideas from \cite{MeM} to prove the differential identity in Theorem \ref{theorem identity}.
In Section \ref{section: RH}, we then apply the Deift/Zhou steepest descent method to the RH problem for $Y$ to obtain large $s$ asymptotics. This will allow us to prove Theorem \ref{theorem large gap} in Section \ref{sec: fredholm_det}.

\section{Differential identity in terms of RH problem}
\label{section: identity}

\subsection{Some considerations on the limiting kernels}

We note first that \cite[formula (5.11.9)]{NIST} for $x,y\in \R$,
\begin{equation}
\Gamma(x+iy)\sim\sqrt{2\pi}|y|^{x-\frac{1}{2}}e^{-\frac{\pi|y|}{2}},\qquad y\to \pm\infty,
\end{equation}
which implies that, as $y\to\pm\infty$, uniformly in $x$,
\begin{align}
&\label{as F1}F^{(1)}(x+iy)=\le|y\ri|^{(r+1)\le(x-\frac{1}{2}\ri)} e^{\frac{\pi(r-1)}{2}\le|y\ri|}\le|y\ri|^{-\sum_{j=1}^r\nu_j}\mathcal O(1) ,\\
&\label{as F2}F^{(2)}(x+iy)=\le|y\ri|^{(r-q+1)\le(x-\frac{1}{2}\ri)} e^{\frac{\pi(r-q-1)}{2}\le|y\ri|}\le|y\ri|^{\sum_{k=1}^q\mu_k-\sum_{j=1}^r\nu_j}\mathcal O(1) ,\\
&\label{as F3}F^{(3)}(x+iy)= |y|^{\left(1+\frac{1}{\theta}\right) x - \frac{1}{\theta}}e^{\frac{\pi|y|}{2}\left(\frac{1}{\theta}-1 \right)}|y|^{\frac{\alpha}{2}\left(1-\frac{1}{\theta}\right)}\theta^{-\frac{x}{\theta}}\mathcal{O}(1).
\end{align}
As a consequence, it is easily seen that the double integral in \eqref{limiting kernels1} is convergent for our choices of contours $\gamma,\tilde\gamma$ (recall that $\gamma$ lies to the left of the line $\Re z=1/2$, and that $\tilde\gamma$ lies to the right of it).

From the definition of the kernels $\mathbb K^{(j)}$ in \eqref{limiting kernels1} and the functions $F^{(j)}$ in \eqref{f1}-\eqref{f3}, it follows that, for any choice of parameters $\{\nu_j\}, \{\mu_j\}, \alpha, \theta$, there exists $C,\delta>0$ such that for $x,y$ sufficiently small,
\begin{equation}
\left|\mathbb K^{(j)}(x,y)\right|\leq C |xy|^{-\frac{1}{2}+\delta}.
\end{equation}
This implies that, for $j=1,2,3$ and $s>0$,
\begin{equation}\label{norm operators lim}
\int_0^s\int_0^s \left|\mathbb K^{(j)}(x,y)\right|^2 \d x\d y<\infty,
\end{equation}
which means that the integral operators
defined by
\begin{equation}\label{def integral operator}\left.\mathbb K^{(j)}\right|_{[0,s]} f(y)=\int_{0}^s \mathbb K^{(j)}(x,y)f(x)\d x
\end{equation}
are bounded linear operators from $L^2(0,s)$ to itself.

\subsection{Conjugation with Mellin transform}

We will now use the Mellin transform to write the Fredholm determinants in a simpler form. We recall the definition of the Mellin operator and its inverse
\begin{gather}
\mathcal{M}[f](t) = \int_0^\infty x^{t-1} f(x) \d x, \quad \quad 
\mathcal{M}^{-1}[\varphi](x) = \frac{1}{2\pi i}\int_{\frac{1}{2}+i\R} x^{-t} \varphi(t)\d t, 
\end{gather}
which are isometries between $L^2(0,+\infty)$ and $L^2\le( \frac{1}{2}+i\R \ri)$.

\begin{proposition}\label{prop conjugation}
Let $s>0$. For $j=1,2,3$, we have the identity
\begin{equation}\label{identity K H}
\det\left(1-\left.\mathbb K^{(j)}\right|_{[0,s]}\right)=\det\left(1-\mathbb H_s^{(j)}\right),
\end{equation}
where $\mathbb H_s^{(j)}$ is the integral operator acting on $L^2(\tilde\gamma)$ with kernel
\begin{gather}\label{kernel J}
\mathbb H_s^{(j)}(v,z) = \int_\gamma \frac{\d u}{2\pi i} s^{z-u} \frac{F^{(j)}(u)}{F^{(j)}(v)(v-u)(z-u)} .
\end{gather}
\end{proposition}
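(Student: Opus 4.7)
The plan is to factorize $\mathbb{K}^{(j)}|_{[0,s]}$ as a product $\mathcal{A}\mathcal{B}$ of two operators between $L^2(0,s)$ and $L^2(\tilde\gamma)$, to apply the cyclic property $\det(1-\mathcal{A}\mathcal{B}) = \det(1-\mathcal{B}\mathcal{A})$ of Fredholm determinants, and to recognize $\mathcal{B}\mathcal{A}$ as precisely the operator $\mathbb{H}_s^{(j)}$ by a direct computation. This is the abstract mechanism behind the cryptic phrase ``conjugation with Mellin transform'': the two factors $\mathcal A$ and $\mathcal B$ will play the role of (truncated, contour-deformed) inverse and forward Mellin transforms.

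First I would rewrite the double-integral kernel \eqref{limiting kernels1}, using $\frac{1}{4\pi^2(u-v)} = \frac{1}{(2\pi i)^2(v-u)}$, in the form
\begin{equation*}
\mathbb{K}^{(j)}(x,y) = \int_{\tilde\gamma}\frac{\d v}{2\pi i}\, y^{v-1} \cdot \frac{1}{F^{(j)}(v)} \int_\gamma \frac{\d u}{2\pi i}\, \frac{F^{(j)}(u)\, x^{-u}}{v-u},
\end{equation*}
which suggests the factorization $\mathbb{K}^{(j)}|_{[0,s]} = \mathcal{A}\mathcal{B}$ with
\begin{align*}
\mathcal{A}\phi(y) &= \int_{\tilde\gamma}\frac{\d v}{2\pi i}\, y^{v-1}\phi(v), & y&\in(0,s),\\
\mathcal{B} f(v) &= \frac{1}{F^{(j)}(v)}\int_0^s f(x) \left[\int_\gamma \frac{\d u}{2\pi i}\, \frac{F^{(j)}(u)\, x^{-u}}{v-u}\right]\d x, & v&\in\tilde\gamma.
\end{align*}
A Fubini argument, whose absolute convergence follows from the decay/growth estimates \eqref{as F1}--\eqref{as F3}, establishes this factorization. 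To compute $\mathcal{B}\mathcal{A}$, the key step is the elementary evaluation
\begin{equation*}
\int_0^s x^{z-1-u}\,\d x = \frac{s^{z-u}}{z-u},
\end{equation*}
which converges because $\Re(z-u)>0$ for $z\in\tilde\gamma$ and $u\in\gamma$. Substituting gives
\begin{equation*}
(\mathcal{B}\mathcal{A})(v,z) = \frac{1}{F^{(j)}(v)} \int_\gamma \frac{\d u}{2\pi i}\, \frac{F^{(j)}(u)\, s^{z-u}}{(v-u)(z-u)} = \mathbb{H}_s^{(j)}(v,z),
\end{equation*}
where the kernel acts via integration against $\frac{\d z}{2\pi i}$ on $\tilde\gamma$. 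The identity \eqref{identity K H} then follows from cyclicity of Fredholm determinants.

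The hard part will be verifying that $\mathcal{A}\mathcal{B}$ and $\mathcal{B}\mathcal{A}$ are trace class on the relevant Hilbert spaces, so that the cyclic identity is rigorously justified. This requires using the asymptotic estimates \eqref{as F1}--\eqref{as F3} along the tails of $\gamma$ and $\tilde\gamma$, combined with the fact that these contours escape to infinity in sectors strictly in the left (respectively right) half-plane; the factors $F^{(j)}(u)$ and $1/F^{(j)}(v)$ appearing in $\mathcal{B}$ compensate each other to produce enough decay for the double integrals to converge absolutely and for the compositions to be trace class. Once these analytic points are in place, the Fubini manipulations and the cyclic identity proceed as standard.
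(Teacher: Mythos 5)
Your approach is genuinely different from the paper's. The paper conjugates $\left.\mathbb K^{(j)}\right|_{[0,s]}$ by the Mellin isometry $\mathcal M\colon L^2(0,\infty)\to L^2\bigl(\tfrac12+i\R\bigr)$: it first rewrites the truncated kernel $\mathbb K^{(j)}(x,y)\chi_{[0,s]}(x)$ as a triple contour integral by a residue/contour-deformation argument, then reads off $\left.\mathbb K^{(j)}\right|_{[0,s]} = \mathcal M^{-1}\circ\mathbb H_s^{(j)}\circ\mathcal M$ and uses invariance of the Fredholm determinant under isometric conjugation. You instead factor $\left.\mathbb K^{(j)}\right|_{[0,s]} = \mathcal A\mathcal B$ with $\mathcal A,\mathcal B$ mapping between $L^2(0,s)$ and $L^2(\tilde\gamma)$, and invoke the commutation identity $\det(1-\mathcal A\mathcal B)=\det(1-\mathcal B\mathcal A)$. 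The kernel computation of $\mathcal B\mathcal A$ is correct and does reproduce $\mathbb H_s^{(j)}(v,z)$; in particular the elementary step $\int_0^s x^{z-u-1}\,\d x = s^{z-u}/(z-u)$, valid since $\Re(z-u)>0$ for $z\in\tilde\gamma$, $u\in\gamma$, replaces the paper's contour deformation and is arguably more transparent. Both routes are really the same mechanism -- conjugation invariance of the determinant -- implemented differently, but yours avoids extending the operator to $L^2(0,\infty)$.

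There is, however, a genuine gap in the specific splitting. The factor $\mathcal A$ with kernel $y^{v-1}/(2\pi i)$ on $\tilde\gamma\times(0,s)$ is not a bounded operator $L^2(\tilde\gamma)\to L^2(0,s)$ once $s\geq 1$: for fixed $y>1$ one has $|y^{v-1}| = y^{\Re v-1}\to\infty$ along $\tilde\gamma$, so $\mathcal A\phi(y)$ is not even defined for a general $\phi\in L^2(\tilde\gamma)$; and already for $s=1$ the Hilbert--Schmidt integral $\int_0^s\int_{\tilde\gamma}|y^{v-1}|^2\,|\d v|\,\d y = \int_{\tilde\gamma}\frac{s^{2\Re v-1}}{2\Re v-1}|\d v|$ diverges. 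The commutation identity $\det(1-\mathcal A\mathcal B)=\det(1-\mathcal B\mathcal A)$ requires at minimum that $\mathcal A$ and $\mathcal B$ be bounded, and one typically wants both Hilbert--Schmidt so that the two compositions are automatically trace class. The compensation you point to from $F^{(j)}(u)$ and $1/F^{(j)}(v)$ lives entirely inside $\mathcal B$ and cannot rescue $\mathcal A$. The fix is to redistribute a regularizing weight: for example, move the $F^{(j)}(v)^{-1}$ factor from $\mathcal B$ into $\mathcal A$, or equivalently insert $(s^{-v+1})\cdot(s^{v-1})$ between the two factors, so that $\mathcal A$ acquires a kernel that decays as $\Re v\to\infty$. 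With a suitable redistribution one checks via \eqref{as F1}--\eqref{as F3} that both factors are Hilbert--Schmidt, and the resulting $\mathcal B\mathcal A$ is a diagonal similarity transform of $\mathbb H_s^{(j)}$, which has the same Fredholm determinant. You do flag trace-class considerations as ``the hard part'', but the issue is not absolute convergence of the double integrals for the compositions; it is that one of your factors is not an operator on the stated Hilbert spaces at all.
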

\begin{proof}
Consider the operator $\left.\mathbb K^{(j)}\right|_{[0,s]}$ acting on $L^2(0,s)$ defined by \eqref{def integral operator}. As an operator on $L^2(0,+\infty)$, its kernel is given by
$\mathbb K^{(j)}(x,y)\chi_{[0,s]}(x)$, where $\chi_{[0,s]}$ is the characteristic function of the interval $[0,s]$.

As a function of $z$, the function \[k(z;s,y):=\int_{\tilde \gamma} \frac{\d v}{2\pi i}  \int_\gamma \frac{\d u}{2\pi i}  \frac{s^{-u}F^{(j)}(u)}{F^{(j)}(v) ( v-u)(z-u)} y^{v  -1}\] is analytic for $z$ in the region at the right of $\gamma$ and for sufficiently large $z$, it can be bounded in absolute value by $C/|z|$ for some $C>0$. For $s\neq x\in \R_+$, we now evaluate the improper integral
\begin{equation}\int_{\frac{1}{2}+i\R}\frac{\d z}{2\pi i}\left(\frac{s}{x}\right)^z k(z;s,y):=\lim_{R\to +\infty}\int_{\frac{1}{2}-iR}^{\frac{1}{2}+iR}\frac{\d z}{2\pi i}\left(\frac{s}{x}\right)^z k(z;s,y).\label{integralk}\end{equation}

First, if $x>s$, by analyticity we can deform the integration contour $\le[\frac{1}{2}-iR,\frac{1}{2}+iR\ri]$ into a semi-circle in the half plane $\Re z>\frac{1}{2}$ with radius $R$.
For $R$ sufficiently large, we obtain in this way,
\[\left|\int_{\frac{1}{2}-iR}^{\frac{1}{2}+iR}\frac{\d z}{2\pi i}\left(\frac{s}{x}\right)^z k(z;s,y)\right|\leq \frac{C}{2\pi }\int_{-\pi/2}^{\pi/2}e^{R\log\left(\frac{s}{x}\right)\cos\theta}\d\theta.\] 
Clearly, the upper bound tends to $0$ as $R\to +\infty$, thus 
\[\int_{\frac{1}{2}+i \R}\frac{\d z}{2\pi i}\left(\frac{s}{x}\right)^z k(z;s,y)=0.\]

Next, if $x<s$, we need to deform the integration contour to the half plane $\Re z< \frac{1}{2}$. For example we can deform $\frac{1}{2}+i\R$ to a contour $\Sigma$ consisting of two straight half-lines starting at $z=\frac{1}{2}$. On such a contour, the $z$-integral in \eqref{integralk} is absolutely convergent, and we can use Fubini's theorem to move the $z$-integral inside the $u$- and $v$-integrals. Using the residue theorem, we then obtain by \eqref{limiting kernels1},
\begin{multline}
\int_{\frac{1}{2}+i \R}\frac{\d z}{2\pi i}\left(\frac{s}{x}\right)^z k(z;s,y)   \\
= \int_{\tilde \gamma} \frac{\d v}{2\pi i}  \int_\gamma \frac{\d u}{2\pi i} \frac{s^{-u}y^{v  -1}F^{(j)}(u)}{F^{(j)}(v) (v-u)}  \int_{\Sigma}\frac{\d z}{2\pi i}\left(\frac{s}{x}\right)^z\frac{1}{z-u} 
=\mathbb K^{(j)}(x,y).
\end{multline}

Thus, both for $x>s$ and $x<s$, we have
\begin{gather}
\mathbb K^{(j)}(x,y)\chi_{[0,s]}(x)
=\int_{\frac{1}{2}+i\R} \frac{\d z}{2\pi i} x^{-z}  \int_{\tilde \gamma} \frac{\d v}{2\pi i}  \int_\gamma \frac{\d u}{2\pi i}  \frac{s^{z-u}F^{(j)}(u)y^{v  -1}}{F^{(j)}(v) ( v-u)(z-u)} .
\end{gather}
Using this triple contour integral representation, we can show that, as an $L^2(0,+\infty)$ operator,  $\left.\mathbb K^{(j)}\right|_{[0,s]}$ can be written as a Mellin conjugation of a simpler integral operator $\mathbb H_s^{(j)}$: we have
\begin{equation}{\mathbb K}^{(j)} \bigg|_{[0,s]} =\mathcal{M}^{-1} \circ \mathbb H_s^{(j)}  \circ \mathcal{M},\label{conjugation}
\end{equation}
where $\mathbb H_s^{(j)}$ is the integral operator acting on $L^2(\tilde\gamma)$ with kernel
$
\mathbb H_s^{(j)}(v,z)$ given by \eqref{kernel J}.
This follows from the following computation on the level of the kernels:
\begin{equation}
\mathcal M\circ {\mathbb K}^{(j)} \bigg|_{[0,s]} (z,y)=s^zk(z;s,y) = \left(\mathbb H_s^{(j)}  \circ \mathcal{M}\right)(z,y).
\end{equation}
It follows from \eqref{conjugation} that \eqref{identity K H} holds,
which completes the proof.
\end{proof}

\subsection{Integrable form}
\begin{proposition}\label{prop J IIKS}
Let $s>0$. For $j=1,2,3$, we have the identity
\begin{gather}\det\left(1-\mathbb H_s^{(j)}\right)= \det \le(1-\mathbb M_s^{(j)} \ri),
\end{gather}
where $\mathbb M_s^{(j)}$ is the integral operator acting on $L^2(\gamma\cup\tilde\gamma)$ 
with kernel 
\begin{equation}\label{integrable form kernel}
\mathbb M_s^{(j)}(u,v) = \frac{\textbf{f}(u)^T  \textbf{g}(v)}{u-v} \end{equation}
where $\textbf{f}$ and $\textbf{g}$ are given by
\begin{equation}\textbf{f}(u) = \frac{1}{2\pi i}\le[ \begin{array}{c} \chi_{\gamma}(u) \\ s^u \chi_{\tilde \gamma} (u) \end{array}\ri] , \quad  \quad
\textbf{g}(v) = \le[\begin{array}{c} -F^{(j)}(v)^{-1} \chi_{\tilde \gamma}(v) \\ s^{-v}F^{(j)}(v) \chi_\gamma(v)  \end{array} \ri],
\end{equation}
with $F^{(1)}$, $F^{(2)}$, $F^{(3)}$ defined by  \eqref{f1}-\eqref{f2}-\eqref{f3}, and 
where $\chi_{\gamma}$ (resp. $\chi_{\tilde \gamma}$) is the characteristic function of the contour $\gamma$ (resp. $\tilde \gamma$).
\end{proposition}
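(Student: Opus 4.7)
The strategy is to factorize $\mathbb H_s^{(j)}$ as a composition of two integral operators and then realize the resulting block structure as an integrable operator on the larger space $L^2(\gamma\cup\tilde\gamma)$. Specifically, I would observe that in \eqref{kernel J} the integrand of $\mathbb H_s^{(j)}(v,z)$ separates into a purely $v$-dependent factor and a purely $z$-dependent factor connected through the integration variable $u\in\gamma$. This gives the factorization $\mathbb H_s^{(j)}=\mathbb A\circ \mathbb B$, with
\[
\mathbb A:L^2(\gamma)\to L^2(\tilde\gamma),\qquad \mathbb A(v,u)=\frac{1}{F^{(j)}(v)(v-u)},
\]
\[
\mathbb B:L^2(\tilde\gamma)\to L^2(\gamma),\qquad \mathbb B(u,z)=\frac{s^{z-u}F^{(j)}(u)}{2\pi i\,(z-u)}.
\]
The Stirling-type bounds \eqref{as F1}--\eqref{as F3}, together with the exponential decay of $s^{\pm w}$ along the tails of $\gamma$ and $\tilde\gamma$, ensure that $\mathbb A$ and $\mathbb B$ are trace class.

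Next, I would form the $2\times 2$ block operator
\[
\mathcal N = \begin{pmatrix}0 & \mathbb B\\ \mathbb A & 0\end{pmatrix}
\]
on $L^2(\gamma)\oplus L^2(\tilde\gamma)\cong L^2(\gamma\cup\tilde\gamma)$ and use the block LU factorization
\[
\begin{pmatrix}1 & -\mathbb B\\ -\mathbb A & 1\end{pmatrix}=
\begin{pmatrix}1 & 0\\ -\mathbb A & 1\end{pmatrix}\begin{pmatrix}1 & -\mathbb B\\ 0 & 1-\mathbb A\mathbb B\end{pmatrix},
\]
which is the operator-theoretic version of Sylvester's identity $\det(1-\mathbb A\mathbb B)=\det(1-\mathbb B\mathbb A)$. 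Taking determinants yields $\det(1-\mathcal N) = \det(1-\mathbb A\mathbb B) = \det(1-\mathbb H_s^{(j)})$.

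The final step is to identify $\mathcal N$ with $\mathbb M_s^{(j)}$. Reading off the off-diagonal blocks of $\mathcal N$ and comparing with the integrable form $\mathbf f(u)^T\mathbf g(v)/(u-v)$ built from the stated $\mathbf f,\mathbf g$, one checks that these two kernels agree after conjugation by the diagonal multiplication operator $D$ acting by $s^u/F^{(j)}(u)$ on $L^2(\gamma)$ and by $s^u F^{(j)}(u)$ on $L^2(\tilde\gamma)$; the factor $\tfrac{1}{2\pi i}$ in $\mathbf f$ is precisely what matches the normalization carried by $\mathbb B$. Since Fredholm determinants are invariant under such similarity transformations, one concludes $\det(1-\mathbb M_s^{(j)}) = \det(1-\mathcal N) = \det(1-\mathbb H_s^{(j)})$.

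The main obstacle I anticipate is the careful bookkeeping of the $\tfrac{1}{2\pi i}$ factors in the final gauge step, and the verification that all operators in sight are trace class so that the block LU factorization and Sylvester's identity apply rigorously. Both points follow from standard trace-ideal estimates combined with the decay information in \eqref{as F1}--\eqref{as F3} and the ``sandwich'' geometry of $\gamma$ and $\tilde\gamma$ relative to the line $\Re z=1/2$.
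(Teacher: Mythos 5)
Your overall strategy matches the paper's exactly: you factor $\mathbb H_s^{(j)}$ as a composition of two cross-operators between $L^2(\gamma)$ and $L^2(\tilde\gamma)$ (your $\mathbb A$ and $\mathbb B$ are, up to relabeling, the paper's $B^{(j)}$ and $A^{(j)}$), verify trace-class membership from the Stirling bounds, form the off-diagonal $2\times2$ block operator, and invoke the determinant identity $\det\bigl(1-\begin{smallmatrix}0 & X\\ Y & 0\end{smallmatrix}\bigr)=\det(1-XY)$ to pass between $\mathbb H_s^{(j)}$ and the block operator. That much is correct and is precisely the paper's argument.

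The problem is your final identification. You assert that the block operator $\mathcal N$ and $\mathbb M_s^{(j)}$ ``agree after conjugation by the diagonal multiplication operator $D$ acting by $s^u/F^{(j)}(u)$ on $L^2(\gamma)$ and by $s^u F^{(j)}(u)$ on $L^2(\tilde\gamma)$.'' This is not right, and in fact no such conjugation is needed. If you write out $\mathbb M_s^{(j)}(u,v)=\frac{\mathbf f(u)^T\mathbf g(v)}{u-v}$ for $u,v$ on different contours, you get exactly the kernels of $\mathbb B$ (for the $\tilde\gamma\to\gamma$ block) and of $\tfrac{1}{2\pi i}\mathbb A$ (for the $\gamma\to\tilde\gamma$ block); that is, $\mathcal N$ already \emph{is} $\mathbb M_s^{(j)}$ up to how the overall $\frac{1}{2\pi i}$ normalization is split between the two factors, and only that bookkeeping (a scalar, not a multiplication operator) needs to be addressed. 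Your proposed $D$ does something quite different: conjugating $\mathcal N$ by $D$ replaces the $(2,1)$-block kernel $\frac{1}{F^{(j)}(v)(v-u)}$ with $\frac{s^{v-u}F^{(j)}(u)}{v-u}$ and the $(1,2)$-block kernel $\frac{s^{z-u}F^{(j)}(u)}{2\pi i(z-u)}$ with $\frac{1}{2\pi i F^{(j)}(z)(z-u)}$ --- it swaps the functional forms between the two blocks rather than producing $\mathbb M_s^{(j)}$. So the conjugation as stated is incorrect, and it also cannot repair a scalar mismatch of $\tfrac{1}{2\pi i}$, since conjugation by a diagonal $D=\operatorname{diag}(D_1,D_2)$ multiplies the two off-diagonal blocks by $D_1 D_2^{-1}$ and its inverse, leaving the product $XY$ (and hence the determinant) unchanged. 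You should drop the conjugation entirely, compute the blocks of $\mathbb M_s^{(j)}$ directly from $\mathbf f$, $\mathbf g$, and observe that they coincide with $\mathbb B$ and $\mathbb A$; that is how the paper concludes.
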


\begin{proof}
The operator $\mathbb H_s^{(j)}$ on $L^2(\tilde\gamma)$ can be written as the composition $A^{(j)}\circ B^{(j)}$ of two operators $A^{(j)}:L^2(\gamma)\to L^2(\tilde\gamma)$ and $B^{(j)}:L^2(\tilde\gamma)\to L^2(\gamma)$, where $A^{(j)}$ and $B^{(j)}$ are integral operators with kernels
\begin{equation}
A^{(j)}(u,z) = \frac{s^{z-u} F^{(j)}(u)}{ 2\pi i (z-u)}, \quad \quad B^{(j)}(v,u) = \frac{1}{F^{(j)}(v)(v-u)}.
\end{equation}
It is then straightforward to check that  $\int_{\tilde\gamma} \int_{\gamma} \le|A^{(j)}\le(\eta,\xi\ri)\ri|^2 |\d \eta| |\d \xi|<+\infty$ and hence $A^{(j)}$ is Hilbert-Schmidt, and similarly for $B^{(j)}$. Therefore, as composition of two Hilbert-Schmidt operators, $\mathbb H_s^{(j)}$ is trace-class.

We now prove that the operators $A^{(j)}$ and $B^{(j)}$ are themselves trace-class. We can write $B^{(j)}=B_2\circ B_1$ as the composition of two Hilbert-Schmidt operators as follows:
\begin{gather}
L^2(\tilde \gamma) \xrightarrow{B_1} L^2\le( {\scriptstyle\frac{1}{2} }+ i\R \ri) \xrightarrow{B_2} L^2(\gamma) \nonumber \\
f(v) \mapsto \int_{\tilde \gamma} \d v \frac{f(v)}{F^{(j)}(v) (v-w)} \mapsto 
\int_{\frac{1}{2}+i\R} \frac{\d w}{2\pi i} \int_{\tilde \gamma}\d v\frac{f(v)}{F^{(j)}(v) (v-w)(u-w)}. \nonumber
\end{gather}

Similarly, we have $A^{(j)}=A_2\circ A_1$ with
\begin{gather}
L^2( \gamma) \xrightarrow{A_1} L^2\le( {\scriptstyle \frac{1}{2} }+ i\R \ri) \xrightarrow{A_2} L^2(\tilde \gamma) \nonumber \\
g(u) \mapsto \int_{ \gamma} \frac{\d u}{2\pi i} \frac{s^{-u}F^{(j)}(u)g(u)}{ w-u}  \mapsto 
\int_{\frac{1}{2}+i\R} \frac{\d w}{2\pi i} \int_{ \gamma} \frac{\d u}{2\pi i} \frac{s^{z-u}F^{(j)}(u)g(u)}{ (w-u)(z-w)}. \nonumber
\end{gather}
It is now easy to verify that the kernels of $A_1, A_2$ and $B_1, B_2$ are Hilbert-Schmidt, hence $A^{(j)}$ and $B^{(j)}$ are trace-class.

As an operator acting on the Hilbert space $L^2(\gamma) \oplus L^2(\tilde\gamma)$, we can write $\mathbb M_s^{(j)}$ as a $2\times 2$ matrix of operators,
\[\mathbb M_s^{(j)}=\le[\begin{array}{c|c} 0&A^{(j)}\\\hline B^{(j)} &0 \end{array}\ri].\]
Moreover, we have the following identities,
\begin{eqnarray*}\det  \le(1-\mathbb H_s^{(j)}\ri)&=&\det \le(1- \le[\begin{array}{c|c}  A^{(j)}\circ B^{(j)}&0\\\hline 0&0 \end{array}\ri]\ri)\\&=&\det \le(1- \le[\begin{array}{c|c}  A^{(j)}\circ B^{(j)}&0\\\hline B^{(j)} &0 \end{array}\ri]\ri)\\&=&
\det \le(1 + \le[\begin{array}{c|c} 0& A^{(j)}\\\hline 0 &0 \end{array}\ri]\ri)\det \le(1 - \le[\begin{array}{c|c} 0&A^{(j)}\\\hline B^{(j)} &0 \end{array}\ri]\ri)
\\&=& \det \le(1 - \mathbb{M}_s^{(j)} \ri),
\end{eqnarray*}
and the result is proved.
\end{proof}

Following the procedure developed by Itz, Izergin, Korepin, and Slavnov \cite{IIKS}, one can relate an integral operator with kernel of the {\em integrable} form \eqref{integrable form kernel} to the  RH problem for $Y$ given in \eqref{RHP Y: jump}-\eqref{as Y}, where we note that the jump matrix $J^{(j)}$ takes the form
\[ J^{(j)}(z) = 
 I - 2\pi i \textbf{f}(z) \textbf{g}(z)^T .
\]
In particular, the resolvent of the operator $\mathbb M_s^{(j)}$ exists if and only if the solution to the above RH problem exists as well, and logarithmic derivatives of the Fredholm determinant $\det(1-\mathbb M_s^{(j)})$ with respect to deformation parameters can be expressed in terms of $Y$. 
In our situation, $s$ plays the role of the deformation parameter, and we can use  the results from \cite[Section 5.1]{Misomonodromic} and \cite{MeM} (proved for general deformation parameters) to conclude that
\begin{gather}
\frac{\d}{\d s} \ln \det \left(1-\mathbb M_s^{(j)}\right) = \int_{\gamma\cup\tilde\gamma} \operatorname{Tr}\left[ Y_-^{-1}(z) Y_-'(z)\, \partial_s J (\lambda) J^{-1}(z) \, \right] \, \frac{\d z}{2\pi i}   \label{Misomonodromictau}
\end{gather}
where $'$ refers to the partial derivative with respect to $z$.
 In the original formula in \cite{MeM}, an additional term is present which depends exclusively on the jumps of the RH problem, but this term is identically zero in our case.
Furthermore, a simple calculation shows  that
\begin{gather}
\int_{\gamma\cup\tilde\gamma} \operatorname{Tr}\left[ Y_-^{-1}(z) Y_-'(z)\, \partial_s J (z) J^{-1}(z) \, \right] \, \frac{\d z}{2\pi i} \nonumber \\
=  \int_{\gamma\cup\tilde\gamma}  \frac{z}{2s}  \le( \operatorname{Tr}\left[ Y^{-1}(z) Y'(z) \sigma_3 \, \right]_+ -  \operatorname{Tr}\left[ Y^{-1}(z) Y'(z) \sigma_3 \, \right]_-\ri) \, \frac{\d z}{2\pi i},
\end{gather}
with $\sigma_3=\begin{pmatrix}1&0\\0&-1\end{pmatrix}$ the third Pauli matrix. Now we use a contour deformation argument to simplify this expression. Using the asympotic behaviour \eqref{as Y} of $Y$ at infinity, one sees that the contribution of the $-$ side of $\gamma$ cancels out with the one of the $+$ side of $\tilde\gamma$. Furthermore, the integrals over the $+$ side of $\gamma$ and the $-$ side of $\tilde\gamma$ can be deformed, and their contribution is equal to the integral of a large clockwise oriented circle. In this way, we obtain that the above expression is equal to
\begin{gather}
= - \lim_{R\to\infty}\frac{1}{2s} \int_{C_R}   \frac{\operatorname{Tr}\left[Y_1 \sigma_3 \, \right] }{z} \, \frac{\d z}{2\pi i} 
=  - \frac{1}{s}  \le(Y_{1}\ri)_{2,2} ,
\end{gather}
where $C_R$ is a counterclockwise oriented circle of radius $R$ around $0$.
The last equality follows because $Y_1$ is traceless, which in turn follows from the fact that $\det Y (z ) = 1$, for all $z \in \C \setminus \le(\gamma \cup \tilde \gamma\ri)$.
This completes the proof of Theorem \ref{theorem identity}.

\begin{remark}
The integral $\int_{\gamma\cup\tilde\gamma} \operatorname{Tr}\left[ Y_-^{-1} Y_-' \partial J J^{-1} \right] \frac{\d z}{2\pi i}$ can be interpreted in terms of the theory of isomonodromic $\tau$-functions \cite{JMU}, as explained in detail in \cite{Misomonodromic}.
\end{remark}

\begin{remark}
The above procedure can be generalized to the case of several intervals instead of the single interval $[0,s]$, similar to \cite{MeM}. 

Let $ J := \bigcup_{j=1}^{m} [a_{2j-1}, a_{2j}]$ be a collection of disjoint intervals on the positive real line: $0 \leq a_1 < a_2 < \ldots < a_{2m}$. Then,
\begin{gather}
\mathbb{K}^{(j)}\bigg|_{J}  = \sum_{k=1}^{2m} (-1)^k \mathbb{K}^{(j)}\bigg|_{[0,a_{k}]},
\end{gather}
and we can still write this operator as a conjugation with a Mellin transform as in \eqref{conjugation}. Then the adaption of the proofs of Propositions \ref{prop conjugation} and \ref{prop J IIKS} is straightforward, and it leads to an integrable operator $\mathbb M_s^{(j)}$ of the form \eqref{integrable form kernel}, but now with $\textbf{f}$ and $\textbf{g}$ of larger dimension. This would lead to a RH problem of larger size, depending on the number of endpoints $2m$.
\end{remark}

\section{Asymptotic analysis of the RH problem}
\label{section: RH}

\label{sgoestoinfty}

The aim of this section is to study the asymptotic behaviour as $s\to +\infty$ of the solution $Y=Y(z ;s)$ to the RH problem stated in Section \ref{section: introduction}.
In particular, we will be interested in the $(2,2)$-entry of the matrix $Y_1$, defined in \eqref{as Y}, which appears in the logarithmic derivative of the Fredholm determinant in each of the cases $j = 1, 2, 3$, as already proved in Section \ref{section: identity}.

To achieve the asymptotic results stated in Theorem \ref{theorem large gap}, we will apply a series of invertible transformations to the RH problem for $Y$ to obtain a RH problem for which we can easily inspect the large $s$ asymptotics of the solution. This procedure is known as the Deift/Zhou steepest descent method \cite{SDM,SDM2}.

\subsection{First transformation $Y \mapsto U$}\label{RHPforU}

In order to simplify the analysis of the RH problem, we define
\begin{equation} \label{eq: transfYT}
U(\zeta) := s^{\frac{\tau^{(j)}}{2}\sigma_3} Y\le(i s^{\rho^{(j)}} \zeta + \tau^{(j)} \ri)s^{-\frac{\tau^{(j)}}{2}\sigma_3},
\end{equation}
with $\tau^{(j)}$ depending on the parameters in the models corresponding to $j = 1, 2, 3$ as follows,
\begin{align*}
& \tau^{(1)} = \frac{\nu_{\min} + 1}{2}, && \rho^{(1)} = \frac{1}{r + 1}, \\
& \tau^{(2)} = \frac{\nu_{\min} + 1}{2}, && \rho^{(2)} = \frac{1}{r - q + 1}, \\
& \tau^{(3)} = \frac{1}{2}, && \rho^{(3)} = \frac{\theta}{\theta + 1},
\end{align*}
where we recall that $\nu_{\min} := \min \{\nu_1, \ldots, \nu_r\}$ and $q := |J|$. From now on, when there is no possible confusion, we omit the $j$-dependence in our notations, and we will simply write $\tau, \rho, F$ instead of $\tau^{(j)}, \rho^{(j)}, F^{(j)}$. The jump contours $\gamma$ and $\tilde\gamma$ are transformed into contours $\gamma_U:=\{\zeta \in \C :is^{\rho^{(j)}}\zeta+\tau^{(j)}\in\gamma\}$ in the upper half plane and $\tilde\gamma_U:=\{\zeta \in \C :is^{\rho^{(j)}}\zeta+\tau^{(j)}\in\tilde\gamma\}$ in the lower half plane, both oriented from left to right. The contours $\gamma_U$ and $\tilde\gamma_{U}$ depend on $s$ and as $s\to +\infty$ they both approach $0$. The poles of the jump matrices $J_U$ now lie on the imaginary axis and accumulate towards the origin in the large-$s$ limit.
$U$ satisfies the following conditions.
\subsubsection*{RH problem for $U$}
\begin{enumerate}
\item $U$ is analytic in $\C \setminus (\gamma_U\cup\tilde\gamma_U)$;
\item $U_+(\zeta) = U_-(\zeta) J_U(\zeta)$ for $\zeta\in\gamma_U\cup\tilde\gamma_U$ with
\begin{equation}\label{def: JT1}
J_U(\zeta) = \begin{dcases}\begin{pmatrix}1&-s^{-i s^{\rho} \zeta} F\le(i s^{\rho} \zeta + \tau\ri)\\0&1 \end{pmatrix} & \text{ if } \zeta \in \gamma_U, \\ \begin{pmatrix}1&0 \\ s^{i s^{\rho} \zeta}F\le(i s^{\rho} \zeta + \tau\ri)^{-1} &1 \end{pmatrix} & \text{ if } \zeta \in \tilde{\gamma}_U;\end{dcases}
\end{equation}
\item $\displaystyle U(\zeta) = I + \frac{U_1(s)}{\zeta} + \mathcal{O}\le(\frac{1}{\zeta^2}\ri)$ as $\zeta \to \infty.$
\end{enumerate}
On the other hand, from \eqref{eq: transfYT} and condition (c) in the RH problem for $Y$, it follows that
\begin{equation*}
U(\zeta) = I + \frac{Y_1(s)}{i s^{\rho}} \zeta^{-1} + \mathcal{O}\le(\zeta^{-2}\ri), \qquad \text{as } \zeta \to \infty
\end{equation*}
which implies the identity
\begin{equation}\label{identity Fredholm U}
\frac{\d}{\d s} \ln \det \le( 1 - \K^{(j)}\bigg|_{[0,s]}\ri) =- i s^{\rho^{(j)} - 1} \le(U_1(s)\ri)_{2,2}. 
\end{equation}

\begin{figure}[h]
\centering
 \scalebox{.9}{
\begin{tikzpicture}[>=stealth]

\draw[dashed, ->] (-5,0) -- (5,0) coordinate (x axis);
\draw[dashed, ->] (0,-4.5) -- (0,4.5) coordinate (y axis);

\begin{scope}[shift={(0,1)}, rotate=-90]
\path (-3,-4) coordinate (P);
\path (-3, 4) coordinate (Q);
\path (0,-1) coordinate (P2);
\path (0,1) coordinate (Q2);
\draw[->- = .7] (P) -- (P2);
\draw (Q) -- (Q2);
\draw[->- = .7]  (P2) .. controls + (45:1cm) and + (-45:1cm) .. (Q2);
\end{scope}
\node [below] at (3.5,3) {\large $\gamma_U$};

\begin{scope}[shift={(0,1)}, rotate=-90]
\path (5,-4) coordinate (A);
\path (5, 4) coordinate (B);
\path (2,-1) coordinate (A2);
\path (2,1) coordinate (B2);
\draw[->- = .7] (A) -- (A2);
\draw (B) -- (B2);
\draw[->- = .7] (A2) .. controls + (135:1cm) and + (225:1cm) .. (B2);
\end{scope}
\node [above] at (3.5,-3) {\large $\tilde \gamma_U$};

\draw[fill] (0,1) circle [radius=0.05];
\draw[fill] (0,2) circle [radius=0.05];
\draw[fill] (0,3) circle [radius=0.05];
\draw[fill] (0,4) circle [radius=0.05];




\draw[fill] (0,-1) circle [radius=0.05];
\draw[fill] (0,-2) circle [radius=0.05];
\draw[fill] (0,-3) circle [radius=0.05];
\draw[fill] (0,-4) circle [radius=0.05];

\end{tikzpicture}}
\caption{The contours in the RH problem for $U$. The poles appearing in the jump matrices lie now on the imaginary axis and accumulate towards the origin as $s \rightarrow + \infty$.}
\label{fig: RHPU}
\end{figure}
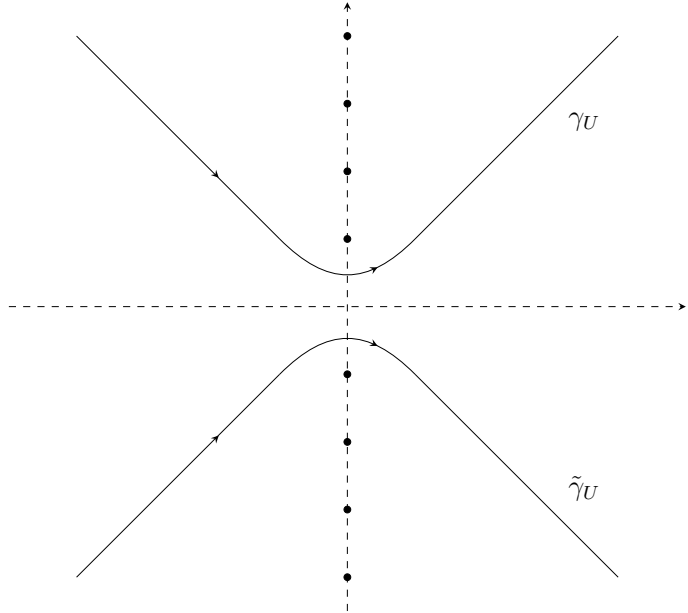

\subsection{Deformation of the contours and transformation $U\mapsto T$}

By analyticity of the jump matrices $J_U$, it follows that the RH solution $U$ can be analytically continued from the region above $\gamma_U$ to $\C\setminus[0,-i\infty)$. We write $U^{\rm I}$ for this analytic extension, which is defined as
\begin{equation}
U^{\rm I}(\zeta)=\begin{cases}U(\zeta),&\mbox{ above $\gamma_U$,}\\
U(\zeta)\begin{pmatrix}1&-s^{-i s^{\rho} \zeta} F\le(i s^{\rho} \zeta + \tau\ri)\\0&1 \end{pmatrix},&\mbox{ below $\gamma_U$.}
\end{cases}
\end{equation} Similarly, we can continue $U$ from the region between $\gamma_U$ and $\tilde\gamma_U$ to $ \C\setminus i\R$, and we denote this function by $U^{\rm II}$.
Finally, we can continue $U$ from the region below $\tilde\gamma_U$ to $\C\setminus[0,+i\infty)$, and we denote this function by $U^{\rm III}$.

Now, we define 
contours $\Sigma_1,\Sigma_2,\ldots, \Sigma_5$ and regions ${\rm I}, {\rm II},{\rm III}, {\rm IV}$ as shown in Figure \ref{fig: TomRHPstraight}, namely
\begin{equation}\label{def Sigma}
\Sigma_5=[b_1,0]\cup [0,b_2], \ \Sigma_2=-\overline{\Sigma_1}=b_2+e^{i(\phi+\epsilon)}(0,+\infty),\ \Sigma_4=-\overline{\Sigma_3}=b_2+e^{-i\epsilon}(0,+\infty),
\end{equation}
with $0<\epsilon<\pi/10$.
The endpoints $b_1 = b_1^{(j)}$ and $b_2=b_2^{(j)}$ are such that $b_2=-\overline{b_1}=be^{i\phi}$ and will be determined later. 
This type of contour will later turn out to be suitable in the steepest descent analysis for case $j=1$, $j=2$, and $j=3$ with $\theta \leq 1$. In what follows, we focus on these cases. If $j=3$ and $\theta > 1$, we need to deform the contours in a different way. We will comment on the changes which have to be made in this case in Remark \ref{remark: phinegative}.
 We define
\begin{equation}\label{RHPforT}
T(\zeta)=\begin{cases}U^{\rm I}(\zeta),&\mbox{ in region I,}\\
U^{\rm II}(\zeta),&\mbox{ in region II and IV,}\\
U^{\rm III}(\zeta),&\mbox{ in region III.}
\end{cases}
\end{equation}

It is straightforward to check that the jump matrices for $T$ on $\Sigma_1,\ldots, \Sigma_4$ are the same as the ones for $U$ on the corresponding contours $\gamma_U$ and $\tilde\gamma_U$. On $\Sigma_5$, the jump matrix for $T$ is obtained by multiplying the jump matrix $J_U$ in \eqref{def: JT1} on $\tilde\gamma$ with the one on $\gamma$. 
%

\begin{figure}[h]
\centering
 \scalebox{1.2}{
\begin{tikzpicture}[>=stealth]

\fill[left color = black!30,right color = white]  (3,1) -- (6,3.5) -- (6,.5) -- (3,1);
\fill[left color = white,right color = black!30]  (-3,1) -- (-6,3.5) -- (-6,.5) -- (-3,1);
\shade[top color=white, bottom color=black!10] (-6,3.5) --  (-3,1) -- (0,0) -- (3,1) -- (6,3.5);
\shade[top color=black!50, bottom color=white] (-6,.5) --  (-3,1) -- (0,0)-- (0,-1.5) -- (-6,-1.5);
\shade[top color=black!50, bottom color=white] (6,.5) --  (3,1) -- (0,0)-- (0,-1.5) -- (6,-1.5);

\draw[dashed, ->] (-6,0) -- (6,0) coordinate (x axis);
\draw[dashed, ->] (0,-1.5) -- (0,4) coordinate (y axis);

\draw[dashed]  (3,1) -- (6,2); 
\draw[dashed]  (-3,1) -- (-6,2);

\draw[->- = .4]  (-3,1) -- (0,0);
\draw[->- = .4]  (0,0) -- (3,1);
\draw[->- = .4]  (-6,.5) -- (-3,1);
\draw[->- = .4]  (3,1) -- (6,.5);
\draw[->- = .4]  (-6,3.5) -- (-3,1);
\draw[->- = .4]  (3,1) -- (6,3.5);

\draw[fill] (-3,1) circle [radius=0.03];
\draw[fill] (3,1) circle [radius=0.03];

\node [above right] at (-3,1) {\scriptsize $b_1$};
\node [above left] at (3,1) {\scriptsize $b_2$};

\node [above left] at (0,.3) {\scriptsize $\Sigma_5$};
\node [left] at (-3,1.9) {\scriptsize $\Sigma_1$};
\node [below left] at (-4,.7) {\scriptsize $\Sigma_3$};
\node [right] at (3,1.9) {\scriptsize $\Sigma_2$};
\node [below right] at (4,.7) {\scriptsize $\Sigma_4$};

\draw (1,0) arc (0:19:1);
\draw (-1,0) arc (180:161:1);


\node [below left] at (1,0) {\scriptsize $\phi$};
\node [below right] at (-1,0) {\scriptsize $\phi$};



\node [above right] at (0,3) { I};
\node [left] at (-4.5,1.9) { II};
\node [below right] at (0,-1) { III};
\node [right] at (4.5,1.9) { IV};

\end{tikzpicture}}
\caption{The contour setting for the RH problem $T(\zeta)$.
}
\label{fig: TomRHPstraight}
\end{figure}
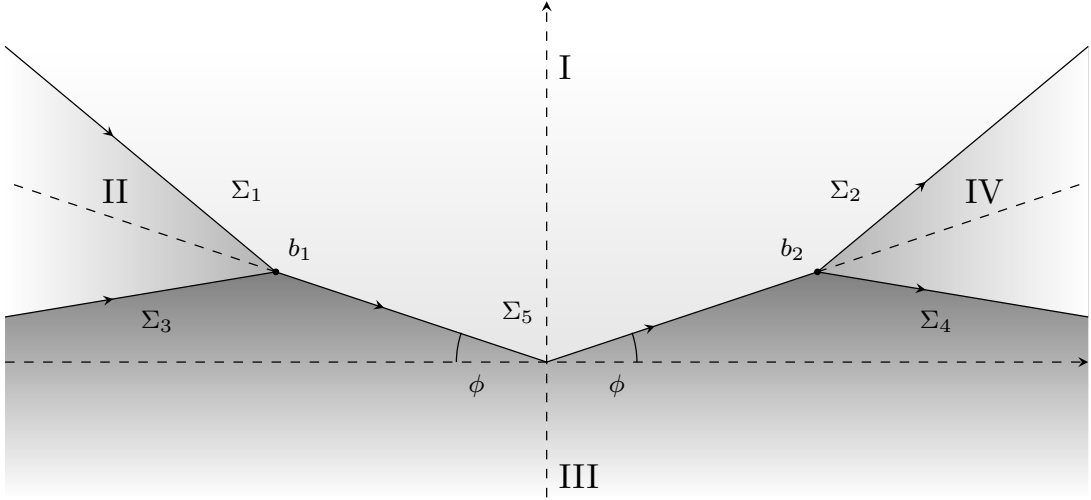

\subsubsection*{RH problem for $T$}
\begin{enumerate}
\item $T$ is analytic in $\C \setminus (\Sigma_1\cup\ldots\cup\Sigma_5)$;
\item $T_+(\zeta) = T_-(\zeta) J_T(\zeta)$ for $\zeta\in\Sigma_1\cup\ldots\cup\Sigma_5$ with
\begin{equation}\label{def: JT1-second}
J_T(\zeta) = \begin{dcases}\begin{pmatrix}1&-s^{-i s^{\rho} \zeta} F\le(i s^{\rho} \zeta + \tau\ri)\\0&1 \end{pmatrix} & \text{ if } \zeta \in \Sigma_1\cup\Sigma_2; \\ \begin{pmatrix}1&0 \\ s^{i s^{\rho} \zeta}F\le(i s^{\rho} \zeta + \tau\ri)^{-1} &1 \end{pmatrix} & \text{ if } \zeta \in \Sigma_3\cup\Sigma_4;\\ \begin{pmatrix}1&-s^{-i s^{\rho} \zeta} F\le(i s^{\rho} \zeta + \tau\ri) \\ s^{i s^{\rho} \zeta}F\le(i s^{\rho} \zeta + \tau\ri)^{-1} &0 \end{pmatrix} & \text{ if } \zeta \in \Sigma_5;\end{dcases}
\end{equation}
\item $\displaystyle T(\zeta) = I + \frac{T_1(s)}{\zeta} + \mathcal{O}\le(\frac{1}{\zeta^2}\ri)$ as $\zeta \to \infty,$
with $T_1(s)=U_1(s)$.
\end{enumerate}

Before proceeding with the next transformation, we will rewrite the jump matrix $J_T$. Recalling Stirling's approximation formula for $z \to \infty$ and $|\text{arg } z| < \pi$,
\begin{equation}
\ln \Gamma (z) = z \ln z - z - \frac{1}{2} \ln z + \frac{1}{2} \ln 2\pi +   \frac{1}{12 z} +   \mathcal{O}\le(\frac{1}{z^3}\ri),
\end{equation}
we obtain from \eqref{f1}--\eqref{f3} that as $s \to + \infty$ with $\zeta$ not too close to $0$ such that $s^{\rho} \zeta \to \infty$, we have 
\begin{multline}
\ln F\le(i s^{\rho}\zeta + \tau \ri) = i s^{\rho} \ln (s) \zeta + is^{\rho}\left[  c_1 \zeta \ln(i \zeta) + c_2 \zeta \ln(-i \zeta) + c_3 \zeta\right] \\
 + c_4 \ln(s)+ c_5  \ln(i \zeta) +c_6  \ln(-i \zeta) + c_7 +  \frac{c_8}{i s^\rho\zeta}  + \mathcal{O}\le(\frac{1}{s^{2\rho}\zeta^2}\ri)\label{as F}, 
\end{multline}
for some constants $\le\{ c_i =c_i^{(j)} \ri\}_{i=1,\ldots, 8}$, depending on the parameters $\{ \nu_j\}, \{\mu_k\}, \alpha, \theta$, and with principal branches of the logarithms. We will use this for $\zeta\in\Sigma_1\cup\ldots\cup\Sigma_5$ and  $s^\rho\zeta$ large. The constants $\{c_i\}_{i = 1, \ldots, 7}$ are given by
\begin{alignat}{3}
&j=1:  && c_1 = 1, && c_2 = r, \nonumber \\
& && c_3 = -(r+1), && \displaystyle c_4 = \frac{\nu_{\text{min}}}{2}  -  \frac{1}{r+1}  \sum_{j=1}^r \nu_j, \nonumber \\
& && \displaystyle c_5 = \frac{\nu_{\text{min}}}{2}, && \displaystyle c_6 = r\frac{\nu_{\text{min}}}{2}  - \sum_{j=1}^r \nu_j,  \nonumber \\
& && \displaystyle c_7 = \frac{1-r}{2}\ln 2\pi,  &&   \displaystyle c_8 = \frac{r+1}{8}\le(\nu_{\rm min}^2-\frac{1}{3}\ri) + \frac{1}{2} \sum_{j=1}^r \nu_j^2 - \frac{\nu_{\rm min}}{2}\sum_{j=1}^r \nu_j; \label{eq: ci_j1} \\ 
& j=2:  && c_1 = 1, && c_2 = r-q, \nonumber \\
& && c_3 = -(r-q+1), && \displaystyle c_4 = \frac{\nu_{\text{min}}}{2} +\frac{1}{r-q+1}\le[ \sum_{k=1}^q \mu_k  - \sum_{j=1}^r \nu_j \ri], \nonumber \displaybreak \\
& && \displaystyle c_5 = \frac{\nu_{\text{min}}}{2}, && \displaystyle c_6 = (r-q)\frac{\nu_{\text{min}}}{2} + \sum_{k=1}^q \mu_k - \sum_{j=1}^r \nu_j,  \nonumber \\
& && \displaystyle c_7 = \frac{1+q-r}{2}\ln 2\pi,  && c_8 = \displaystyle \frac{r-q+1}{8}\le(\nu_{\rm min}^2-\frac{1}{3}\ri) + \frac{1}{2} \le(\sum_{j=1}^r \nu_j^2 -  \sum_{k=1}^q \mu_k^2 \ri) \nonumber \\ 
& && &&\qquad \displaystyle - \frac{\nu_{\rm min}}{2} \le( \sum_{j=1}^r \nu_j - \sum_{k=1}^q \mu_k \ri);  \label{eq: ci_j2} \\
& j=3:  && c_1 = 1, && c_2 = \displaystyle \frac{1}{\theta}, \nonumber \\
& && \displaystyle c_3 = -\frac{\theta+1+\ln \theta}{\theta}, && \displaystyle c_4 = \frac{\theta+(\theta-1)\alpha -1}{2(\theta+1)}, \nonumber \\
& && \displaystyle c_5 = \frac{\alpha}{2}, && \displaystyle c_6 = \frac{\theta-\alpha-1}{2\theta}, \nonumber \\
& && \displaystyle c_7 = -\frac{\theta-\alpha-1}{2\theta}\ln \theta, && \displaystyle c_8 = -\frac{1}{6} + \frac{\alpha}{4} \le(\frac{1}{\theta}-1\ri) + \frac{\alpha^2}{2}\le(\frac{1}{4\theta}+1\ri). \label{eq: ci_j3}
\end{alignat}
The precise values of the constants $c_4$, $c_7$ and $c_8$ are not important for the proof of our results, but will play a role in the evaluation of the coefficient $c^{(j)}$ of the logarithmic term and in further subleading terms in the Fredholm determinant expansion in \eqref{asymptoticstheta}. 

We now define
\begin{equation} \label{eq: def_Gzeta}
G(\zeta)=G^{(j)}(\zeta) := F^{(j)}\le(i s^{\rho^{(j)}} \zeta + \tau^{(j)}\ri) e^{-i s^{\rho^{(j)}} \left(\ln (s) \zeta - h^{(j)}(\zeta)\right)}
\end{equation}
with
\begin{equation} \label{eq: def_hzeta}
h(\zeta)=h^{(j)}(\zeta) := -c_1^{(j)} \zeta \ln(i \zeta) - c_2^{(j)} \zeta \ln(-i \zeta) - c_3^{(j)} \zeta.
\end{equation}
As $s\to +\infty$ and $\zeta$ such that $s^\rho\zeta\to\infty$, we  have by \eqref{as F},
\begin{equation}\label{as G}
\ln G (\zeta)= c_4 \ln s + c_5 \ln \le( i \zeta \ri)+ c_6 \ln \le( -i\zeta\ri)+ c_7 + \frac{c_8}{is^\rho\zeta}  + \mathcal{O}\le(\frac{1}{s^{2\rho}\zeta^2}\ri).
\end{equation}
On the other hand, if $s\to +\infty$ and $\zeta\to 0$ in such a way that $s^\rho\zeta$ is bounded and such that $i s^\rho \zeta + \tau$ is away from the poles of $F$ (see \eqref{f1} - \eqref{f3}), we have 
\begin{gather}\label{lnGbdd}
\ln G(\zeta) = \mathcal{O}\le(1\ri). 
\end{gather}
The jump matrices in the RH problem for $T$ can now be rewritten in terms of $G$ and $h$. We have
\begin{equation}
J_T(\zeta) = 
\begin{dcases}\begin{pmatrix}1&- G(\zeta)e^{-is^{\rho}h(\zeta)}\\0&1 \end{pmatrix}, &\text{ if } \zeta \in \Sigma_1\cup\Sigma_2, \\ \begin{pmatrix}1&0 \\  G(\zeta)^{-1}e^{is^{\rho}h(\zeta)} &1 \end{pmatrix}, & \text{ if } \zeta \in \Sigma_3\cup\Sigma_4,\\
\begin{pmatrix} 1 & - G(\zeta)e^{-is^{\rho} h(\zeta)}  \\  G(\zeta)^{-1} e^{is^{\rho} h(\zeta)}  & 0\end{pmatrix}, & \text{ if } \zeta \in \Sigma_5,\end{dcases}
\end{equation}
with contours $\le\{\Sigma_j\ri\}_{j=1,\ldots,5}$ as before (see \figurename \ \ref{fig: TomRHPstraight}).

\subsection{Third transformation $T \mapsto S$}

We now proceed with a third transformation of the RH problem, where we introduce a $g$-function $g(\zeta) = g^{(j)}(\zeta)$ with specific properties. We define the modified matrix
\begin{equation} \label{eq: transfTS}
S(\zeta) := e^{s^{\rho} \frac{\ell}{2} \sigma_3} T(\zeta) e^{-s^{\rho} \cdot g(\zeta) \sigma_3} e^{-s^{\rho} \frac{\ell}{2} \sigma_3},
\end{equation}
where $\ell=\ell^{(j)} \in \C$ is a constant which is to be determined. 

We would like to construct a $g$-function $g(\zeta)=g^{(j)}(\zeta)$ which satisfies the following properties.

\subsubsection*{Properties for the $g$-function}

\begin{enumerate}
\item
$g$ is analytic in $\C \setminus \Sigma_5$,
\item there exists a constant $\ell$ such that $g$ satisfies the relation
\begin{equation}\label{eq: jump g}
g_+(\zeta) + g_-(\zeta) - i h(\zeta) + \ell=0,\qquad \mbox{for } \zeta\in\Sigma_5,
\end{equation}
\item $g$ has the asymptotics
\begin{equation} \label{eq: cond2_gfun}
g(\zeta) = \frac{g_1}{\zeta}+\mathcal{O}\le(\frac{1}{\zeta^2}\ri), \qquad \text{as } \zeta \to \infty,
\end{equation} 
for some constant $g_1$.
\end{enumerate}

Given such a $g$-function, one verifies that $S$ solves the RH problem below.
\subsubsection*{RH problem for $S$}
\begin{enumerate}
\item $S$ is analytic in $\C \setminus (\Sigma_1 \cup\ldots\cup\Sigma_5)$;
\item for $\zeta\in \Sigma_1 \cup\ldots\cup\Sigma_5$, we have $S_+(\zeta) = S_-(\zeta) J_S(\zeta)$ with 
\begin{equation} \label{eq: jump_S}
J_S(\zeta) = \begin{dcases}\begin{pmatrix}1&- G(\zeta) e^{s^{\rho}(2 g(\zeta) - i h(\zeta) + \ell)}\\0&1 \end{pmatrix}, & \text{ if } \zeta \in \Sigma_1\cup\Sigma_2, \\ \begin{pmatrix}1&0 \\  G(\zeta)^{-1} e^{-s^{\rho}(2 g(\zeta) - i h(\zeta) + \ell)} &1 \end{pmatrix}, & \text{ if } \zeta \in \Sigma_3\cup\Sigma_4, \\ \begin{pmatrix}e^{-s^{\rho}(g_+(\zeta) - g_-(\zeta))} & - G(\zeta)  \\  G(\zeta)^{-1} & 0\end{pmatrix}, & \text{ if } \zeta \in \Sigma_5;\end{dcases}
\end{equation}
\item $\displaystyle S(\zeta) = I + \frac{S_1(s)}{\zeta}+{}\mathcal{O}\le(\frac{1}{\zeta^2}\ri)$ as $\zeta \to \infty$, 
\begin{equation}\label{S1}
\text{with } \left(S_1(s)\right)_{2,2}=\left(U_1(s)\right)_{2,2}+s^\rho g_1.
\end{equation}
\end{enumerate}

\paragraph{Construction of the $g$-function.}

Instead of constructing the $g$-function directly, it turns out to be convenient to inspect its second derivative, and to impose some appropriate constraints to it afterwards. From the properties of $g$, it is clear that $g''$ needs to satisfy the following.

\subsubsection*{Properties for $g''$}
\begin{enumerate}
\item
$g''$ is analytic in $\C \setminus \Sigma_5$,
\item $g''$ satisfies the relation
\begin{equation}\label{eq: jump g2}
g''_+(\zeta) + g''_-(\zeta) = -i\frac{c_1+c_2}{\zeta},\qquad\mbox{for $\zeta\in\Sigma_5$,}
\end{equation}
\item as $\zeta\to\infty$, there is a constant $g_1$ such that
\begin{equation} \label{eq: cond2_gfunsecder}
g''(\zeta) = \frac{2g_1}{\zeta^3}+\mathcal{O}\le(\zeta^{-4}\ri).
\end{equation} 
\end{enumerate}
Given $\Sigma_5$ with endpoints $b_1$ and $b_2=-\overline{b_1}$ (see \figurename \ \ref{fig: TomRHPstraight}), there is a unique function satisfying these properties, and which is such that 
\begin{equation*}
 r(\zeta)g''(\zeta)=\mathcal{O}(1), \quad \text{as } \zeta\to b_1 \text{ and } \zeta \to b_2,
\end{equation*}
where 
\begin{equation}\label{def r}
r(\zeta):=\le[(\zeta-b_1)(\zeta-b_2)\ri]^{\frac{1}{2}},
\end{equation}
and the branch cut is chosen such that $r(\zeta)$ is analytic in $\C\setminus\Sigma_5$ and $r(\zeta)\sim\zeta$ as $\zeta\to\infty$. 
The unique function $g''$ satisfying these properties is given by
\begin{equation}\label{def g2}
g''(\zeta)=-i\frac{c_1+c_2}{2}\left(\frac{1}{\zeta}-\frac{1}{r(\zeta)}+\frac{i \Im b_1}{\zeta r(\zeta)}\right).
\end{equation}

By the asymptotic condition \eqref{eq: cond2_gfunsecder} for $g''$, we can define
\begin{equation}\label{def g g1}
g'(\zeta):=\int_{\infty}^\zeta g''(\xi)\d\xi,\qquad g(\zeta):=\int_\infty^\zeta g'(\xi)\d \xi,
\end{equation}
where the integration contour does not cross $\Sigma_5$. Note that the values of $g'(\zeta)$ and $g(\zeta)$ do not depend on the choice of integration contour.
For arbitrary choices of the endpoints $b_1, b_2$, the function $g$ defined in this way does not satisfy the required properties for $g$. Indeed, for $\zeta \in \Sigma_5$ with $\Re \zeta<0$, we have
\begin{equation}\label{id g1}
g_+'(\zeta)+g_-'(\zeta)=-i(c_1+c_2)\left(\int_{b_1}^\zeta \frac{\d\xi}{\xi}
+\int_\infty^{b_1}\left(\frac{1}{\xi}-\frac{1}{r(\xi)}+\frac{i\Im b_1}{\xi r(\xi)}\right)\d\xi\right).
\end{equation}
Here, the integration from $b_1$ to $\zeta$ can be taken along $\Sigma_5$ and the integration from $\infty$ to $b_1$ along the horizontal half-line from $b_1-\infty$ to $b_1$.
On the other hand, by \eqref{eq: jump g}, we need that
\begin{equation}\label{id g2}
g_+'(\zeta)+g_-'(\zeta)=-ic_1\log(i\zeta)-ic_2\log(-i\zeta)-i(c_1+c_2+c_3),\qquad \zeta\in\Sigma_5.
\end{equation}
Combining \eqref{id g1} and \eqref{id g2}, we obtain after a straightforward calculation the identity
\begin{multline}
-i(c_1+c_2)\left(\log\zeta -\log\frac{\left|\Re b_1\right|}{2}-\frac{i\pi}{2}\sin\phi\right) +i(c_1+c_2)\sin\phi\, {\rm arcsinh}\le[\tan\phi\ri] \\=-ic_1\log(i\zeta)-ic_2\log(-i\zeta)-i(c_1+c_2+c_3),
\end{multline}
where we define $\phi$ by 
\begin{equation}\label{def b1b2}
b_2 = -\overline{b_1}=be^{i\phi},\qquad \phi\in\left(-\frac{\pi}{2},\frac{\pi}{2}\right).
\end{equation}

Equating the real and imaginary parts of this equation, we obtain the value of the endpoints $b_1, b_2$ from the equations
\begin{align}
&\sin\phi =\frac{c_2-c_1}{c_2+c_1},\\
&\Re b_1=-\Re b_2=-2 \le( \frac{c_2}{c_1}\ri)^{-\frac{c_2-c_1}{2(c_2+c_1)}} e^{-\frac{c_1+c_2+c_3}{c_1+c_2}}. \label{eq: reb2}
\end{align}
Here we used the identities $\rm{arcsinh} [\tan(\phi)] = \ln \left(\tan\left(\frac{\phi}{2} + \frac{\pi}{4}\right)\right)$ and $\tan\left(\frac{\phi}{2} + \frac{\pi}{4}\right) = \sec(\phi) + \tan(\phi)$ which are valid for $\phi\in\left(-\frac{\pi}{2},\frac{\pi}{2}\right)$. Note that for $j = 1, 2$ we have that $c_1 + c_2 + c_3 = 0$ such that \eqref{eq: reb2} simplifies to read 
\begin{equation} \label{eq: reb2_j12}
\Re b_1=-\Re b_2 = -2 \left(r-q\right)^{ -\frac{r-q-1}{2(r-q+1)}},
\end{equation}
where $q=0$ for the case $j=1$.
For $j = 3$ we see that $c_1 + c_2 + c_3 = - \frac{\ln(\theta)}{\theta}$ and hence
\begin{equation} \label{eq: reb2_j3}
\Re b_1=-\Re b_2 = -2 \theta^{\frac{3-\theta}{2(1+\theta)}}.
\end{equation}

If now $c_1 + c_2 + c_3 = 0$ and moreover $c_1=c_2$ (this is true in the special case where either $\theta=1$ or $r=1, q=0$; in these cases the limiting kernel $\mathbb K^{(j)}$ is the Bessel kernel), we have $b_1,b_2\in \R$ and $b_1 =-b_2 = -2$.  In the cases we focus on, i.e. for $j=1$, $j=2$, and for $j=3$ with $\theta\leq 1$, we have $c_2>c_1$ and thus $\phi \geq 0$. For $j = 3$ with $\theta > 1$ on the other hand, we have $c_2 < c_1$ and thus $\phi < 0$. We return to this matter in Remark \ref{remark: phinegative}.

Finally, the constant $\ell$ can now be defined as $\ell := i h(b_1) - 2 g(b_1)$, i.e.
\begin{equation} \label{def: ell constant}
\ell = i h(b_1) - 2 \int_\infty^{b_1} g'(\xi) \d \xi.
\end{equation} 

We now have the following lemma.

\begin{lemma} \label{lemma3.1}
Let $b_1, b_2$ be given by \eqref{def b1b2} and suppose that $\phi\geq 0$.
Let $\Sigma_1,\ldots, \Sigma_5$ be as in \eqref{def Sigma}, see also Figure \ref{fig: TomRHPstraight}. Then the following inequalities hold:
\begin{align}
&\Re \le[ g_+(\zeta) - g_-(\zeta)\ri] > 0,&& \zeta \in \Sigma_5 \setminus \le\{b_1,b_2 \ri\}, \label{eq: cond4_gfun} \\
& \Re\le[2g(\zeta) - i h(\zeta) - \ell \ri] < 0,  && \zeta \in \Sigma_1\cup\Sigma_2, \label{eq: cond5_gfun} \\
&  \Re\le[2g(\zeta) - i h(\zeta) - \ell \ri] > 0,&& \zeta \in \Sigma_3\cup\Sigma_4. \label{eq: cond6_gfun}
\end{align}
%
\end{lemma}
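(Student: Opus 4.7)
I would organise the proof around the single auxiliary function $\Phi(\zeta) := 2g(\zeta) - ih(\zeta) - \ell$, since the three inequalities are exactly the signs of $\Re\Phi$ on $\Sigma_1\cup\Sigma_2$ and $\Sigma_3\cup\Sigma_4$, while on $\Sigma_5$ the quantity $\Re(g_+-g_-) = \tfrac12 \Re(\Phi_+ - \Phi_-)$ governs the jump. Differentiating \eqref{eq: jump g} shows $\Phi'_+(\zeta) + \Phi'_-(\zeta) = 0$ on $\Sigma_5$, so $\Phi'/r$ extends analytically across $\Sigma_5$; combined with \eqref{def g2} and $h''(\zeta) = -(c_1+c_2)/\zeta$ this yields the explicit formula
\begin{equation*}
\Phi''(\zeta) = \frac{(c_1+c_2)(i\zeta + \Im b_1)}{\zeta\, r(\zeta)},
\end{equation*}
from which $b_1$ and $b_2$ are identified as the only finite critical points of $\Phi$ away from the logarithmic cuts of $h$ on the imaginary axis.

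For \eqref{eq: cond4_gfun} on $\Sigma_5$ I use that $g$ and $g'$ are continuous at the endpoints $b_1,b_2$ (the singularity of $g''$ in \eqref{def g2} is only of $r^{-1}$-type), which allows me to write
\begin{equation*}
g_+(\zeta)-g_-(\zeta)=\int_{b_1}^{\zeta}\!\!\int_{b_1}^{\xi_1}\bigl(g''_+(\xi_2)-g''_-(\xi_2)\bigr)\,d\xi_2\,d\xi_1
\end{equation*}
along $\Sigma_5$, with $g''_+-g''_-$ read off from $r_+ = -r_-$. Parametrising $\Sigma_5$ by a real variable and carrying out the inner integral in closed form reduces the claim to a direct sign check of an elementary real expression.

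For \eqref{eq: cond5_gfun}--\eqref{eq: cond6_gfun} I combine a local saddle analysis at $b_1, b_2$ with the behaviour at infinity. The explicit $\Phi''$ above determines the four steepest descent/ascent rays emerging from each saddle, and one checks that the rays $e^{i(\phi+\epsilon)},\, e^{i(\pi-\phi-\epsilon)}$ defining $\Sigma_2,\Sigma_1$ lie in descent sectors for $\Re\Phi$, whereas $e^{-i\epsilon},\, e^{i(\pi+\epsilon)}$ defining $\Sigma_4,\Sigma_3$ lie in ascent sectors. Globally, $\Phi(\zeta)\sim -ih(\zeta)$ and a direct computation with $\zeta = Re^{i\theta}$ using the explicit form of $h$ gives the leading behaviour $\Re\Phi(\zeta) = -(c_1+c_2)\sin\theta\cdot R\log R + O(R)$, whose sign on the asymptotic direction of each of $\Sigma_1,\ldots,\Sigma_4$ is the one prescribed by the lemma (using $c_1+c_2 > 0$ and $\phi\geq 0$, so that $\sin(\phi+\epsilon),\,\sin(\pi-\phi-\epsilon)>0$ and $\sin(-\epsilon),\,\sin(\pi+\epsilon)<0$).

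The main obstacle is therefore the \emph{global} step: knowing the correct sign of $\Re\Phi$ both near the endpoints and at infinity, one still has to rule out an intermediate sign change along the open contours $\Sigma_1,\ldots,\Sigma_4$. I would handle this by proving that $\Re\Phi$ is monotone along each such contour, which amounts to showing that $\Phi'$ has no interior zeros there (other than at $b_1,b_2$); this is plausible because $\Phi'/r$ is analytic off the imaginary axis and behaves like $i(c_1+c_2)\log\zeta$ at infinity, so its zeros admit an explicit count via the argument principle. The freedom to pick $\epsilon\in(0,\pi/10)$ small enough is what keeps the rays away from the logarithmic cuts of $h$ and from the extra zero $\zeta=-i\Im b_1$ of $i\zeta+\Im b_1$ on the positive imaginary axis.
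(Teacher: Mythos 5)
Your route for \eqref{eq: cond4_gfun} is essentially the paper's: both express $g_+-g_-$ as a double integral of $g_+''-g_-''$ from $b_1$, with $g''_+-g''_-$ read off from $r_+=-r_-$. The paper does not, however, evaluate the inner integral in closed form; it shows directly that $\arg\bigl[(g_+''-g_-'')(\eta)\,\d\eta\,\d\xi\bigr]$ stays in the open right half-plane for every $\eta$ along the integration path, from which sign-definiteness of the real part of the double integral follows immediately. For \eqref{eq: cond5_gfun}--\eqref{eq: cond6_gfun} the paper applies the \emph{same} trick: it writes $2g-ih+\ell$ as an iterated integral of $2g''-ih''$ emanating from $b_1$ along $\Sigma_1$ (resp.\ $\Sigma_3$), and again checks that the argument of the integrand times $\d\eta\,\d\xi$ lies in a fixed half-plane. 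Nothing more is needed --- no endpoint asymptotics, no behaviour at infinity, and no monotonicity along the contour.

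Your proposal for \eqref{eq: cond5_gfun}--\eqref{eq: cond6_gfun} is genuinely different and has a concrete gap in the ``global step.'' You reduce ruling out intermediate sign changes to showing $\Phi'$ has no interior zeros on the rays (via the argument principle). That is not enough: $\Re\Phi$ is monotone along a ray of direction $e^{i\psi}$ iff $\Re\bigl[\Phi'(\zeta)e^{i\psi}\bigr]$ keeps a fixed sign, and this quantity can vanish wherever $\Phi'(\zeta)e^{i\psi}$ crosses the imaginary axis, not only where $\Phi'=0$. So an argument-principle count of zeros of $\Phi'$ does not establish the monotonicity you invoke. The paper's argument-sector bound on the integrand in the iterated integral circumvents precisely this difficulty, because it yields a pointwise constraint that forces the cumulative integral into a half-plane without needing any intermediate-value or monotonicity claim. (A secondary, minor error: near $b_k$ one has $\Phi\sim C(\zeta-b_k)^{3/2}$, so there are three steepest-descent and three steepest-ascent directions at each saddle, not four.) If you want to keep your local-plus-asymptotic strategy, you would need to replace the zero-count of $\Phi'$ by a direct control of $\Re\bigl[\Phi'(\zeta)e^{i\psi}\bigr]$ along each ray; in practice this ends up being equivalent in difficulty to the argument-sector bound the paper uses, so adopting that technique outright is the efficient fix.
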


\begin{proof}
For $\zeta \in \Sigma_5$, define 
\begin{equation}
\varphi(\zeta) := g_+ (\zeta) - g_-(\zeta).
\end{equation}
Since
\begin{gather}
g''(\zeta) = -i\frac{c_1+c_2}{2} \le( \frac{1}{\zeta} - \frac{1}{r(\zeta)} + \frac{i \Im(b_1)}{\zeta \, r(\zeta)} \ri) \quad  \text{with } r(\zeta) = \le[(\zeta-b_1)(\zeta-b_2)\ri]^{\frac{1}{2}},
\end{gather}
we have
\begin{gather}
\varphi''(\zeta) = g_+''(\zeta) - g_-''(\zeta) =  i (c_1+c_2)  \frac{\zeta - i \Im(b_1)}{\zeta \, r_+(\zeta)}.
\end{gather}

By the symmetry of the RH problem and the $g$-function, it is sufficient to prove (\ref{eq: cond4_gfun}) for $\zeta \in (b_1,0) \subseteq \Sigma_5$. First of all, we notice that
\begin{gather}
\Re \le[\varphi(\zeta)\ri] = \Re \le[ \int_{b_1}^\zeta \int_{b_1}^\xi \varphi''(\eta) \, \d \eta \, \d \xi\ri]. 
\end{gather} 
Therefore, in order to get \eqref{eq: cond4_gfun}, we only need to prove that $\arg\le[\varphi''(\zeta)\, \d \eta\,\d \xi\ri]$ belongs to the right half plane,
\[\arg\le[\varphi''(\zeta)\, \d\eta\,\d\xi\ri]=\arg \le[ i (c_1+c_2)  \frac{\zeta - i \Im(b_1)}{\zeta \, (\zeta-b_1)_+^{1/2}(\zeta-b_2)_+^{1/2}}\, \d\eta\,\d\xi\ri]\in \le(-\frac{\pi}{2}, \frac{\pi}{2}\ri).\] This is easily achieved, since 
\begin{align*}
&\arg\le[ i(c_1+c_2)\ri]=\frac{\pi}{2},  &&\arg\le[ \d\eta\,\d\xi\ri]=-2\phi, \\
&\arg \le[ \frac{1}{\zeta}\ri]=-\pi+\phi, && \arg\le[ (\zeta-b_1)_+^{-1/2}\ri]=\frac{\phi}{2},\\ 
&\arg\le[ (\zeta-b_2)_+^{-1/2} \ri] \in \le(-\frac{\pi+\phi}{2}, -\frac{\pi}{2} \ri), &&\arg\le[\zeta - i\Im(b_1)\ri]\in \le( -\pi, -\frac{\pi}{2}\ri), 
\end{align*} 
which implies that 
\[\arg\le[\varphi''(\zeta)\, \d\eta\,\d\xi\ri]\in \left(-\phi, \frac{\pi}{2}-\frac{\phi}{2}\right), \quad \text{with } \phi \in \le[0, \frac{\pi}{2} \ri).\]
%
%
%
%
%
%
%
%
%
%
%
%

Next, we want to show that the quantity $\Re \le[ 2g(\zeta) -ih(\zeta) + \ell  \ri]$ is positive on $\Sigma_3\cup\Sigma_4$ and negative on $\Sigma_1\cup\Sigma_2$. We focus again only on the parts of the contours lying in the left half of the complex plane. 

By construction of the $g$-function, we have $g_+(\zeta) + g_-(\zeta) -ih(\zeta) + \ell =0$ on $\Sigma_5$, hence
\begin{equation}
2g_+(\zeta) -ih(\zeta) + \ell = g_+(\zeta) - g_-(\zeta) = \varphi(\zeta) \qquad  \text{on }\Sigma_5.
\end{equation} 
This implies that $2g -ih + \ell $ is the analytic continuation of the function $\varphi$ to the positive side of the curve $\Sigma_5$, i.e. the region above $\Sigma_5$. 

The second derivative of $2g - i h + \ell$ is given by
\begin{equation}
2g''(\zeta) - i h''(\zeta) = i (c_1+c_2)  \frac{\zeta - i \Im(b_1)}{\zeta \, \check{r}(\zeta)} 
\end{equation}
with $\check{r}(\zeta) = \le[ (b_1-\zeta)(b_2-\zeta ) \ri]^{\frac{1}{2}}$ such that $\check{r}$ is analytic on $\C \setminus \left\{ (b_1, b_1 - i \infty) \cup (b_2, b_2 - i \infty) \right\}$ and $\check{r}(\zeta) \in i\R_+$ on the horizontal segment $(b_1,b_2)$. 

For $\zeta \in\Sigma_1$, we have
\begin{gather}
\Re \le[\varphi(\zeta)\ri] = \Re \le[ \int_\zeta^{b_1} \int_\xi^{b_1}\left(2g''(\eta) - ih''(\eta) \right) \, \d \eta \, \d \xi\ri], 
\end{gather}
and it suffices to show that $\arg \le[\left(2g''(\eta) - ih''(\eta)\right) \, \d \eta \, \d \xi\ri]$ lies in the left half plane, meaning 
$$\arg \le[\left(2g''(\eta) - ih''(\eta)\right) \, \d \eta \, \d \xi\ri] = \arg \le[ i (c_1+c_2)  \frac{\zeta - i \Im(b_1)}{\zeta \, \check{r}(\zeta)} \, \d \eta \, \d \xi\ri]  \in \le(\frac{\pi}{2}, \frac{3\pi}{2}\ri).$$
This follows from 
\begin{align*}
&\arg \le[ i(c_1+c_2)\ri]=\frac{\pi}{2},  &&\arg\le[ \d\eta\,\d\xi\ri]=2\pi -2\phi-2\epsilon, \\
&\arg \le[ \frac{1}{\zeta}\ri] \in \le( -\pi+\phi,-\pi+\phi+\epsilon \ri), && \arg\le[(\zeta-b_1)^{-1/2}\ri]=-\frac{\pi-\phi-\epsilon}{2},\\ 
&\arg\le[(\zeta-b_2)^{-1/2}\ri] \in \le(-\frac{\pi}{2}, -\frac{1}{2}\arg \le[\zeta - i\Im(b_1)\ri] \ri), &&\arg\le[ \zeta - i\Im(b_1)\ri] \in \le(\pi-\phi-\epsilon,\pi\ri),
\end{align*} 
which implies that the argument lies in
\[\left(\frac{3\pi}{2}-\frac{3\phi}{2}-\frac{5\epsilon}{2}, \frac{3\pi}{2}-\frac{\phi}{2}-\frac{\epsilon}{2}\right)\subset\left(\frac{\pi}{2}, \frac{3\pi}{2}\right)\]
for $0 \leq \phi < \pi/2$ and $0< \epsilon<\pi/10$.

Finally, in order to prove that $\Re \le[ 2g(\zeta) -ih(\zeta) + \ell  \ri] > 0$ on $\Sigma_3$, we need to show that 
$$\arg \le[i (c_1+c_2)  \frac{\zeta - i \Im(b_1)}{\zeta \, (\zeta-b_1)^{1/2}(\zeta-b_2)^{1/2}} \, \d\eta\,\d\xi\ri] \in \le( -\frac{\pi}{2}, \frac{\pi}{2}\ri),$$
and this follows from 
\begin{align*}
&\arg \le[ i(c_1+c_2)\ri] =\frac{\pi}{2},  &&\arg\le[\d\eta\, \d\xi\ri]= 2\pi + 2\epsilon, \\
&\arg \le[\frac{1}{\zeta}\ri] \in \le(-\pi -\epsilon,-\pi+\phi\ri), && \arg \le[(\zeta-b_1)^{-1/2} \ri] =  -\frac{\pi + \epsilon}{2},\\ 
&\arg\le[ (\zeta-b_2)^{-1/2}\ri] \in \le( -\frac{1}{2}\arg\le[\zeta - i \Im(b_1) \ri], - \frac{\pi}{2} \ri), &&\arg\le[ \zeta - i\Im(b_1)\ri] \in \le( \pi, \pi + \epsilon\ri), 
\end{align*} 
implying that the argument belongs to the interval
\begin{equation*}
\le( - \frac{\pi}{2}, -\frac{\pi}{2} + \phi + \frac{5}{2}\epsilon \ri)\subset\left(-\frac{\pi}{2}, \frac{\pi}{2}\right)
\end{equation*}
for $0 \leq \phi < \pi/2$ and $0< \epsilon<\pi/10$.
\end{proof}

As a consequence of the above lemma, we obtain that the jump matrices $J_S$ for $S$ converge exponentially fast, as $s\to +\infty$, to the identity matrix on $\Sigma_1, \Sigma_2, \Sigma_3, \Sigma_4$, and that the diagonal of $J_S$ converges to $0$ exponentially fast on $\Sigma_5$. This convergence is however not uniformly valid near the endpoints $b_1$ and $b_2$.

\subsection{The global parametrix}
We look for an approximation to $S$ that is valid for large $s$ away from the endpoints $b_1, b_2$. To that end, we want to find a matrix-valued function $P^{\infty}(\zeta)$
satisfying the following RH conditions.

\subsubsection*{RH problem for $P^{\infty}$}
\begin{enumerate}
\item ${P}^{\infty}$ is analytic in $\C \setminus \Sigma_5$;
\item $P^{\infty}_+(\zeta) = P^{\infty}_-(\zeta) J^{\infty}(\zeta)$ with 
\begin{equation} \label{eq: jump_pinfty}
J^{\infty}(\zeta) = \begin{pmatrix}0 & - G(\zeta) \\  G(\zeta)^{-1} & 0\end{pmatrix},\qquad \zeta\in\Sigma_5;
\end{equation}
\item as $\zeta\to\infty$, we have
\begin{equation}{P}^{\infty}(\zeta) = I + \frac{P^\infty_1(s)}{\zeta} + \mathcal{O}\le(\frac{1}{\zeta^2}\ri). \label{Pinftyinfty}
\end{equation}
\end{enumerate}
In order to construct the solution, we first solve a similar and simpler RH problem with constant jumps.

\subsubsection*{RH problem for $Q^\infty$}
\begin{enumerate}
\item $Q^{\infty}$ is analytic in $\C \setminus \Sigma_5$;
\item $\displaystyle Q^{\infty}_+(\zeta) = Q^{\infty}_-(\zeta) \begin{pmatrix}0 & -1 \\ 1 & 0\end{pmatrix}$ for $\zeta\in\Sigma_5$;
\item $\displaystyle Q^{\infty}(\zeta) = I + \mathcal{O}\le(\frac{1}{\zeta}\ri)$ as $\zeta \to \infty$.
\end{enumerate}
The solution to this RH problem is explicit and it is given by (see \cite[Chapter 7]{DeiftCourant} for a similar construction)
\begin{equation}
Q^{\infty}(\zeta) = \frac{1}{2} \begin{pmatrix}\gamma(\zeta)+\gamma(\zeta)^{-1} & \frac{1}{i} \left(\gamma(\zeta)-\gamma(\zeta)^{-1}\right) \\-\frac{1}{i} \left(\gamma(\zeta)-\gamma(\zeta)^{-1}\right) & \gamma(\zeta)+\gamma(\zeta)^{-1}\end{pmatrix},
\end{equation}
where 
\begin{equation}
\gamma(\zeta)=\left(\frac{\zeta-b_1}{\zeta - b_2}\right)^{1/4}
\end{equation}
which is defined and analytic on $\C \setminus \Sigma_5$, with branch cut on $\Sigma_5$. 

\subsubsection*{Construction of $P^\infty$}
Now, $P^\infty$ can be written in the form
\begin{equation} \label{eq: Pinfty}
P^{\infty}(\zeta) :=e^{-p_0 \sigma_3} Q^{\infty}(\zeta) e^{p(\zeta) \sigma_3} 
\end{equation}
with $p(\zeta)=p^{(j)}(\zeta)$ and $p_0=p_0^{(j)}$ suitably defined as 
\begin{align} 
p(\zeta) &= -\frac{r(\zeta)}{2\pi i} \int_{\Sigma_5} \frac{\ln  G(\xi)}{r_+(\xi)} \frac{\d \xi}{\xi-\zeta} \label{eq: pfunction}\\
 p(\zeta) &= p_0 + \frac{p_1}{\zeta} + \mathcal{O}\le(\frac{1}{\zeta^2}\ri) \qquad \text{as } \zeta \rightarrow \infty  ,
\end{align}
with $r(\zeta)$ as in \eqref{def r}, such that $p(\zeta)$ satisfies
\begin{equation}\label{jump p}
p_+(\zeta) + p_-(\zeta) = -\ln  G(\zeta),
\end{equation}
for $\zeta\in\Sigma_5$. After expanding (\ref{eq: pfunction}) at $\zeta = \infty$, we identify
\begin{align}
p_0 &= \frac{1}{2\pi i} \int_{\Sigma_5} \frac{\ln  G(\xi)}{r_+(\xi)} \d \xi, \\
p_1 &= - \frac{b_1+b_2}{4\pi i} \int_{\Sigma_5} \frac{\ln  G(\xi)}{r_+(\xi)} \d \xi + \frac{1}{2\pi i} \int_{\Sigma_5} \frac{\xi \ln  G(\xi)}{r_+(\xi)} \d\xi \nonumber \\
&= \frac{1}{2\pi i} \int_{\Sigma_5} \frac{(\xi-i\Im (b_1) )\ln  G(\xi)}{r_+(\xi)} \d\xi. \label{def p1}
\end{align}
As $\zeta\to b_k$, the solution $P^\infty$ behaves like
\begin{equation*}
P^{\infty}(\zeta) =\mathcal{O}\le((\zeta -b_k)^{-\frac{1}{4}}\ri),\qquad k=1,2.
\end{equation*}

\subsection{The local parametrix at the endpoints $b_1,b_2$}

Near the endpoints $b_1, b_2$ the global parametrix $P^{\infty}$ cannot be a good approximation for $S$, since it blows up, while $S$ remains bounded. Hence, we need to introduce local parametrices near these points. 
The local parametrix $P$ will be defined in small neighbourhoods of $b_1$ and $b_2$,
\begin{equation*}
\mathbb D_{\delta}(b_1) = \{z \in \C : |z - b_1| < \delta\},\qquad \mathbb D_{\delta}(b_2) = \{z \in \C : |z - b_2| < \delta\},
\end{equation*}
for some small but fixed $\delta > 0$, independent of $s$.
We will focus on the parametrix $P$ near the endpoint $b_1$. By symmetry, the parametrix near $b_2$ will be given by $P(\zeta)=\overline{P(-\overline{\zeta})}$.  We will construct $P$ in such a way that it has the same jumps as $S$ in $\mathbb D_{\delta}(b_1)$ and such that it matches with the global parametrix $P^{\infty}$ on the circle $\partial \mathbb D_{\delta}(b_1)$. \\ 

The RH problem that we require $P$ to satisfy is the following (see also Figure \ref{figure: Airy}):

\subsubsection*{RH problem for $P$}
\begin{enumerate}
\item $P$ is analytic in $\mathbb D_{\delta}(b_1) \setminus \left(\Sigma_5 \cup \Sigma_1\cup\Sigma_3\right)$;
\item $P_+(\zeta) = P_-(\zeta) \begin{dcases}\begin{pmatrix}1 & - G(\zeta) e^{s^{\rho} \le( 2g(\zeta) - ih(\zeta) +\ell\ri)} \\ 0 & 1\end{pmatrix}, & z \in \Sigma_1 \cap \mathbb D_{\delta}(b_1) ,\\ \begin{pmatrix}1 & 0 \\  G(\zeta)^{-1} e^{-s^{\rho} \le(2g(\zeta)- ih(\zeta) + \ell\ri)} & 1\end{pmatrix}, & z \in \Sigma_3 \cap \mathbb D_{\delta}(b_1), \\ \begin{pmatrix}e^{-s^{\rho} \le(g_+(\zeta) - g_-(\zeta) \ri)} & - G(\zeta) \\  G(\zeta)^{-1} & 0\end{pmatrix}, & z \in \Sigma_5 \cap \mathbb D_{\delta}(b_1);\end{dcases}$
\item $\displaystyle P(\zeta) =  P^{\infty}(\zeta)\le(I + o(1)\ri)$ as $s\to + \infty$ for $\zeta \in \partial \mathbb D_{\delta}(b_1)$.
\end{enumerate}

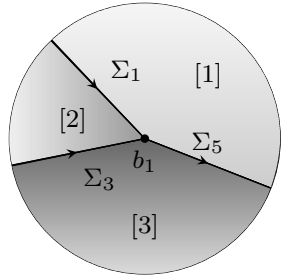
\begin{figure}[h!]
\centering
 \scalebox{1.2}{
\begin{tikzpicture}[>=stealth]

\draw[clip] (0,0) circle (1.5cm);

\filldraw[bottom color=black!20, top color=white] (0,0) -- (1.4,-.55) arc (-20:128:1.8cm) -- cycle;
\filldraw[top color=black!60, bottom color=white] (0,0) -- (1.4,-.55) arc (-20:-164.5:1.8cm) -- cycle;
\filldraw[right color=black!40, left color=white] (0,0) -- (-1.05,1.1) arc (128:181:1.8cm) -- cycle;

\draw[->- = .5] (0,0) -- (1.4,-.55);
\draw[->- = .5] (-1.05,1.1) -- (0,0);
\draw[->- = .5] (-1.5,-.3) -- (0,0);

\node [below] at (0,0) {\scriptsize $b_1$};

\node [above] at (-.2,.5) {\scriptsize $\Sigma_1$};
\node [below] at (-.5,-.2) {\scriptsize $\Sigma_3$};
\node [above] at (.7,-.3) {\scriptsize $\Sigma_5$};

\node at (.7,.7) {\scriptsize $[1]$};
\node at (-.8,.2) {\scriptsize $[2]$};
\node [below] at (0,-.7) {\scriptsize $[3]$};

\draw[fill] (0,0) circle [radius=0.04];

\end{tikzpicture}}
\caption{The jump contours for the local parametrix around the endpoint $b_1$.}
\label{figure: Airy}
\end{figure}

In the next paragraph, we construct $P$ explicitly in terms of the Airy function.

\paragraph{The Airy model RH problem} 

We need a slight variation of the standard model RH problem associated to the Airy function which was used for instance in \cite{DeiftCourant, DKMcLVZ2, DKMcLVZ, DZAiry}. 
Therefore, we follow \cite[Section 3.5.1]{ClaeysGrava} and define
\[
    y_\ell(\zeta)=e^{\frac{2\pi i \ell}{3}}\Ai(e^{\frac{2\pi i \ell}{3}}\zeta),\qquad \ell=0,1,2,
\]
 where $\Ai$ is the Airy function.
Let $A_1$, $A_2$, $A_3$ be entire functions given by
\begin{align}
& A_1(\zeta)=-i\sqrt{2\pi}
        \begin{pmatrix}
            -y_2(\zeta) & -y_0(\zeta)\\
            -y_2'(\zeta) & -y_0'(\zeta)
        \end{pmatrix},\\
& A_2(\zeta)=-i\sqrt{2\pi}
        \begin{pmatrix}
            -y_2(\zeta) & y_1(\zeta)\\
            -y_2'(\zeta) & y_1'(\zeta)
        \end{pmatrix},
        \\
& A_3(\zeta)=-i\sqrt{2\pi}
        \begin{pmatrix}
            y_0(\zeta) & y_1(\zeta)\\
            y_0'(\zeta) & y_1'(\zeta)
        \end{pmatrix}.
\end{align}
Using the well-known Airy function identity
$y_0+y_1+y_2=0$, one verifies the relations
\begin{align}
            \label{RHP A: b1}
            & A_1(\zeta)=A_2(\zeta)
                \begin{pmatrix}
                    1 & -1 \\
                    0 & 1
                \end{pmatrix},
            \\
            & A_2(\zeta)=A_3(\zeta)
                \begin{pmatrix}
                    1 & 0 \\
                    1 & 1
                \end{pmatrix},
            \\
            \label{RHP A: b3}
            & A_1(\zeta)=A_3(\zeta)
                \begin{pmatrix}
                    1 & -1 \\
                    1 & 0
                \end{pmatrix}.
        \end{align}
Moreover, by the asymptotic behaviour for the Airy function in the complex plane, we have that
\begin{align}\label{RHP:A-c}
            A_k(\zeta) &= \zeta^{-\frac{\sigma_3}{4}}\begin{pmatrix}1&i\\1&-i\end{pmatrix}
                \left[I+O\left(\zeta^{-3/2}\right)\right]
                e^{-\frac{2}{3}\zeta^{3/2}\sigma_3}
        \end{align}
        as $\zeta\to\infty$ in the sector $S_k$ for $k=1, 2, 3,$ with
\begin{equation}\label{def sector Sk}
S_k=\le\{\zeta\in\C: \frac{2k-3}{3}\pi+\delta\leq\arg\zeta\leq\frac{2k+1}{3}\pi-\delta \ri\}, \qquad k=1, 2, 3,
\end{equation}
for any $\delta>0$.

\paragraph{Construction of $P$}

We define the local parametrix in the following form,
\begin{equation}\label{def P}
P(\zeta) = E(\zeta) A_k\left(s^{\frac{2}{3} \rho}f(\zeta)\right) e^{s^{\rho} q(\zeta) \sigma_3}  G(\zeta)^{-\frac{\sigma_3}{2}}, 
\end{equation}
for $\zeta$ in region $[k]$ as shown in Figure \ref{figure: Airy}: region $[1]$ is the region between $\Sigma_5$ and $\Sigma_1$; region $[2]$ is the one between $\Sigma_1$ and $\Sigma_3$; region $[3]$ is the one between $\Sigma_3$ and $\Sigma_5$.
Here, $E$ will be an analytic function in $\mathbb D_{\delta}(b_1)$, $q(\zeta)$ is an analytic function on $\mathbb D_{\delta}(b_1) \setminus \Sigma_5$ given by
\begin{equation}
q(\zeta) := g(\zeta) - \frac{i}{2} h(\zeta) + \frac{\ell}{2},
\end{equation} and $f(\zeta)$ will be a conformal map from $\mathbb D_{\delta}(b_1)$ to a neighborhood of $0$ which we will determine below. 

Then, by the form of the jump matrices for $P$ and by the properties of $g$, it is straightforward to  verify that conditions (a) and (b) of the RH problem for $P$ are satisfied.
In order to achieve the matching condition (c) as well, we need to define the analytic prefactor $E$ and the conformal map $f$ appropriately.

\paragraph{The conformal map and the analytic prefactor}

First, we need that $f$ is such that it maps region $[k]$, for $k=1,2,3$, to a subset of the region $S_k$ defined in \eqref{def sector Sk}. If this is true, we can use the asymptotic behavior \eqref{RHP:A-c} of the functions $A_k$ to conclude from \eqref{def P} that
\begin{multline}\label{matching P1}
P(\zeta)=E(z)\left(s^{\frac{2}{3}\rho}f(\zeta)\right)^{-\frac{\sigma_3}{4}}
\begin{pmatrix}1&i\\1&-i\end{pmatrix}
               \left[I+O\left(s^{-\rho}\right)\right]\\ 
              \times\quad 
            e^{-\frac{2s^\rho}{3}f(\zeta)^{3/2}\sigma_3}e^{s^{\rho} q(\zeta) \sigma_3}  G(\zeta)^{-\frac{\sigma_3}{2}}
\end{multline}
for $\zeta\in\partial \mathbb D_{\delta}(b_1)$ as $s\to + \infty$, if $\delta>0$ is sufficiently small.
This has to be $P^{\infty}(\zeta)(1+o(1))$, which suggests us to take $f$ and $E$ as follows,
\begin{align}
&f(\zeta)=\left(\frac{3}{2}q(\zeta)\right)^{2/3},\\
&E(\zeta)= P^\infty(\zeta)  G(\zeta)^{\frac{\sigma_3}{2}} \begin{pmatrix}1&i\\1&-i\end{pmatrix}^{-1} \le(s^{\frac{2}{3} \rho}f(\zeta)\ri)^{\frac{\sigma_3}{4}}.
\end{align}
First, it can be verified using the jump relation for $P^\infty$ and by taking into account the branch cuts of the roots that $E$ is indeed analytic in $\mathbb D_{\delta}(b_1)$.
Secondly, using the properties of the $g$-function, namely \eqref{def g2}, we have as $z \to b_1$,
\begin{equation}
q''(z) = - \frac{(c_1 + c_2)}{2 \sqrt{2}} \frac{\sqrt{\left|\Re(b_1)\right|}}{b_1} (z - b_1)^{-\frac{1}{2}} + \mathcal{O}\left((z - b_1)^{\frac{1}{2}}\right) 
\end{equation}
and hence
\begin{equation}
q(z) = -\frac{2}{3} \frac{(c_1 + c_2)}{\sqrt{2}} \frac{\sqrt{\left|\Re(b_1)\right|}}{b_1} (z - b_1)^{\frac{3}{2}} + \mathcal{O}\left((z - b_1)^{\frac{5}{2}}\right).
\end{equation}
It follows that $f$ is indeed a conformal map and $f'(b_1) \in \C$ with
\begin{equation} \label{eq: der_conformalmap_b1}
\text{arg}\left[ f'(b_1)\right]=\frac{2\phi}{3} \in \left[0, \frac{\pi}{3}\right), \ \text{since } \phi \in \le[0, \frac{\pi}{2}\ri).
\end{equation}

We can now verify that $f$ maps the regions $[1], [2], [3]$ from Figure \ref{figure: Airy} to the admissible sectors $S_1, S_2, S_3$ in \eqref{def sector Sk}. From \eqref{eq: der_conformalmap_b1}, it follows indeed that region $[1]$, where $-\phi<\arg\le[ \zeta-b_1 \ri] <\pi-\phi-\epsilon$, is mapped into the sector $S_1$, that region $[2]$, defined by $\pi-\phi-\epsilon<\arg\le[ \zeta-b_1\ri] <\pi+\epsilon$, is mapped into the sector $S_2$, and that region $[3]$, where $-\pi+\epsilon<\arg\le[ \zeta-b_1\ri]<-\phi$, is mapped into the sector $S_3$.

\subsection{Final transformation $S \mapsto R$}

For the final transformation we define 
\begin{equation} \label{eq: transfSR}
R(\zeta) = S(\zeta) \begin{dcases}\le(P(\zeta)\ri)^{-1} & \text{if } \zeta \in \mathbb{D}_{\delta}(b_1) \cup \mathbb{D}_{\delta}(b_2) \\ \le(P^{\infty}(\zeta)\ri)^{-1} & \text{elsewhere}\end{dcases}.
\end{equation}

It follows that $R(\zeta)$ satisfies the following RH problem:

\subsubsection*{RH problem for $R$} 
\begin{enumerate}
\item $R$ is analytic in $\C \setminus \Gamma_R$ (see \figurename \ \ref{RHR} for the definition of the contour $\Gamma_R$);
\item $R_+(\zeta) = R_-(\zeta) J_R(\zeta)$ for $\zeta \in \Gamma_R$ with 
\begin{equation}
J_R(\zeta) = \begin{dcases}P_-^{\infty}(\zeta) J_S(\zeta) \le(P_+^{\infty}(\zeta)\ri)^{-1} & \text{ if } \zeta \in \Gamma_R \setminus (\partial\mathbb{D}_{\delta}(b_1) \cup \partial \mathbb{D}_{\delta}(b_2)) \\ P(\zeta) \le(P^\infty(\zeta)\ri)^{-1} & \text{ if } \zeta  \in \partial \mathbb{D}_{\delta}(b_1) \cup \partial \mathbb{D}_{\delta}(b_2),\end{dcases}
\end{equation}
 where $J_S(\zeta)$ is given by \eqref{eq: jump_S}, and where we choose the clockwise orientation for the circles around $b_1, b_2$;
\item as $\zeta \to \infty$ 
\begin{equation}
\displaystyle R(\zeta) = I + \frac{R_1(s)}{\zeta} + \mathcal{O}\le(\frac{1}{\zeta^2}\ri). \label{Rinfty}
\end{equation}
\end{enumerate}
By construction, the jump matrix for $R$ is close to the identity matrix as $s \to + \infty$ uniformly in $\zeta$. We have
\begin{equation}
J_R(\zeta) = \begin{dcases}I + \frac{J_R^{(1)}(\zeta)}{s^\rho}+ \mathcal{O}\le(s^{-2\rho}\ri) & \text{ if } \zeta \in \partial \mathbb{D}_{\delta}(b_1) \cup \partial \mathbb{D}_{\delta}(b_2), \\ I + \mathcal{O}\le(e^{- c s^{\rho}}\ri) & \text{ elsewhere,}\end{dcases} \label{jumpforR}
\end{equation}
for some fixed constant $c > 0$ and for some function $J_R^{(1)}(\zeta)$ independent of $s$. Hence, by standard arguments for small norm RH problems (see for example \cite[Section 5.1.3]{smallnormRH}), it follows that
\begin{equation}\label{expansion R s}
R(\zeta) = I + \frac{R^{(1)}(\zeta)}{s^\rho}+\mathcal{O}\le(s^{-2\rho}\ri) \qquad  \text{as } s \to + \infty,
\end{equation}
uniformly for $\zeta\in \C\setminus\Gamma_R$. The error term $R^{(1)}(\zeta)$ can be computed explicitly in terms of $J_R^{(1)}$ (see \cite{DKMcLVZ3}), and
as $\zeta\to\infty$ it behaves like
\begin{equation}\label{expansion R1 zeta}R^{(1)}(\zeta)=\frac{R_1^{(1)}}{\zeta}+\mathcal{O}\le(\frac{1}{\zeta^{2}}\ri),
\end{equation}
for some constant matrix $R_1^{(1)}$. In conclusion, the asymptotic value of $R_1(s)$ defined in (\ref{Rinfty}) is equal to
\begin{equation}\label{expansion R1 s}
R_1(s) =  \frac{R_1^{(1)}}{s^\rho}+\mathcal{O}\le(s^{-2\rho}\ri) \qquad  \text{as } s \to + \infty.
\end{equation}

\begin{figure}
\centering
\scalebox{1.7}{
\begin{tikzpicture}[>=stealth]
\path (0,0) coordinate (O);

\draw[->- = .7] (-2.7,1.7) -- (-1.23,0.18);
\draw[->- = .7] (1.23,0.18) -- (2.7,1.7);

 \draw[->- = .7]  (-0.75,-.15) -- (0,-.5);
  \draw (0,-.5) -- (0.75,-.15);

\node [above] at (-1.2,.5) {\tiny $\Sigma_1$};
\node [above] at (-2,-.2) {\tiny $\Sigma_3$};
\node [above] at (1.2,.5) {\tiny $\Sigma_2$};
\node [above] at (2,-.2) {\tiny $\Sigma_4$};
\node [below] at (0,0.1) {\tiny$\Sigma_5$};

\node [below] at (-1,-.4) {\tiny $\mathbb{D}_\delta(b_1)$};
\node [below] at (1,-.4) {\tiny $\mathbb{D}_\delta(b_2)$};

\draw[->- = .7]   (-2.8,-.4) -- (-1.3,-.1);
\draw[->- = .7]  (1.3,-.1) -- (2.8,-.4);

\draw (-1,0) circle (.3cm);
\draw (1,0) circle (.3cm);

\end{tikzpicture}}
\caption{The contour $\Gamma_R$ for the RH problem for $R(\zeta)$.}
\label{RHR}
\end{figure}
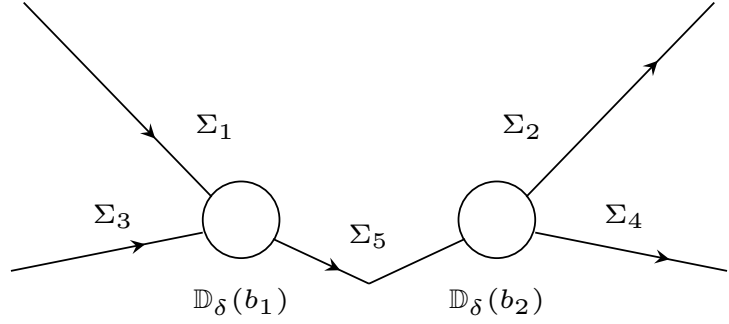

\begin{remark}
\label{remark: phinegative}
In the RH analysis, we restricted ourselves to the cases $j=1, 2$ and $j=3$ with $\theta\leq 1$. If $j=3$ and $\theta>1$, some modifications are required.  Since the angle $\phi$ in \eqref{def b1b2} becomes negative, the endpoints $b_1, b_2$ lie in the lower half plane and the jump contours in the RH problems for $T$ and $S$ need to be changed. Compared to the contours in Figure \ref{fig: TomRHPstraight}, all contours have to be mirrored with respect to the real line, in particular we need to take
\[\Sigma_5\mapsto \overline{\Sigma_5},\ \Sigma_1\mapsto\overline{\Sigma_3},\ \Sigma_2\mapsto\overline{\Sigma_4},\ \Sigma_3\mapsto\overline{\Sigma_1},\ \Sigma_4\mapsto\overline{\Sigma_2}.\]
With these modified contours, it is straightforward to adapt the proof of Lemma \ref{lemma3.1} to the case where $\phi<0$. The rest of the analysis, namely the construction of the global and local parametrices and of the small norm RH problem for $R$, is similar to the case $\phi\geq 0$.
\end{remark}

\section{The Fredholm determinant} \label{sec: fredholm_det}

We can now invert all the transformations $Y \mapsto U \mapsto T \mapsto S \mapsto R$ and find an explicit asymptotic expression for the Fredholm determinant as $s\to +\infty$. From (\ref{identity Fredholm U}), we know that
\begin{equation*}
\frac{\d}{\d s} \ln \det\le(1 - \K^{(j)}\bigg|_{[0, s]}\ri) = - i s^{\rho^{(j)} - 1} \le(U_1(s)\ri)_{2, 2}.
\end{equation*}
 Combining \eqref{S1}, \eqref{Pinftyinfty}, \eqref{eq: Pinfty} and \eqref{Rinfty}, we obtain
 \begin{align}
\frac{\d}{\d s} \ln \det\le(1 - \K^{(j)}\bigg|_{[0, s]}\ri) &= - i s^{\rho - 1} \le(\le(S_1(s)\ri)_{2,2} - s^{\rho} g_1\ri) \nonumber \\
&=   i g_1 s^{2 \rho - 1} - i s^{\rho - 1} \le( \le(P^{\infty}_1(s)\ri)_{2,2} + \le(R_1(s) \ri)_{2,2}  \ri) \nonumber \\
&=   i g_1 s^{2 \rho - 1} - i s^{\rho - 1} \le( -p_1(s) + \le(R_1(s) \ri)_{2,2}  \ri). \label{Fredholm expansion 1}
\end{align}
The large $s$ asymptotics for $\le(R_1(s) \ri)_{2,2}$ are given in \eqref{expansion R1 s}, and we will now compute $g_1(s)$ and $p_1(s)$.

\subsection{Calculation of $g_1$}
We recall the definition of the second derivative of the $g$-function (see (\ref{def g2})),
\begin{equation}\label{def g2 bis}
g''(\zeta)=-i\frac{c_1+c_2}{2}\left(\frac{1}{\zeta}-\frac{1}{r(\zeta)}+\frac{i \Im b_1}{\zeta r(\zeta)}\right),
\end{equation}
where as before
\begin{equation}
r(\zeta):=\le[(\zeta-b_1)(\zeta-b_2)\ri]^{\frac{1}{2}}.
\end{equation}

In order to calculate $g_1$, we expand (\ref{def g2 bis}) as $\zeta \to \infty$ and obtain
\begin{align}
g''(\zeta) &= -\frac{i(c_1+c_2)}{4\zeta^3} \le[ b_1b_2 + \le(\Im(b_1)\ri)^2 \ri] + \mathcal{O}\le(\frac{1}{\zeta^4}\ri) \nonumber \\
&= \frac{i (c_1+c_2) \le(\Re(b_1)\ri)^2 }{4 \zeta^3} + \mathcal{O}\le(\frac{1}{\zeta^4}\ri).
\end{align}
On the other hand, $g(\zeta) := \frac{g_1}{\zeta} + \mathcal{O}\le(\zeta^{-2}\ri)$ as $\zeta \rightarrow \infty$ (see \eqref{eq: cond2_gfun}), which implies 
\begin{equation}
g''(\zeta) = \frac{2g_1}{\zeta^3} + \mathcal{O}\le(\frac{1}{\zeta^4}\ri)
\end{equation}
Therefore, we can identify the coefficient $g_1$ as
\begin{equation}
g_1 = \frac{i \le(\Re(b_1)\ri)^2 (c_1+c_2)}{8 }. \label{g1explicit}
\end{equation}

\subsection{Calculation of $p_1(s)$}
We recall from \eqref{def p1} that
\begin{equation}
p_1 (s)= \frac{1}{2\pi i} \int_{\Sigma_5} \frac{(\xi-i\Im (b_1)) \ln  G(\xi)}{r_+(\xi)} \d\xi.
\end{equation}
We can split the integration over $\Sigma_5 = [b_1,0] \cup [0,b_2]$ into integration over $C_1$ and $C_2$ with
\[
C_1 = \Sigma_5 \cap \le\{ \le|\zeta  \ri| < M s^{-\rho} \ri\},\qquad
C_2= \Sigma_5 \setminus C_1,
\]
for some fixed sufficiently large $M>0$.  

By (\ref{lnGbdd}), one checks that the contribution of the integrals over $C_1$ to $p_1$ can be written as  $\frac{a_1(M)}{s^\rho}+\mathcal{O}\le(s^{-2\rho}\ri)$ as $s\to +\infty$, for some constant $a_1(M)$ depending on $M$. On the other hand, for the integral over $C_2$ we can use (\ref{as G}), and obtain
\begin{multline}
p_1 (s)=  \int_{C_2} \frac{(\xi-i\Im (b_1)) \le[ c_4 \ln s + c_5 \ln \le( i \xi \ri)+ c_6 \ln \le( -i\xi\ri)+ c_7 \ri]}{r_+(\xi)}  \frac{\d\xi}{2\pi i} \\ 
\qquad + s^{-\rho} \le[  - \frac{c_8}{2\pi } \int_{C_2} \frac{(\xi - i\Im (b_1))\d \xi}{\xi\, r_+(\xi)} +a_1(M)\ri] + \mathcal{O}\le( \frac{1}{s^{2\rho}}\ri) 
\end{multline}
as $s \to + \infty$. We now replace the first two integrals over $C_2$ again by integrals over the whole contour $\Sigma_5$; this implies adding a contribution of order $s^{-\rho}$ (possibly depending on $M$) which will be counted in the $s^{-\rho}$ term in the formula below. We get

\begin{align}
p_1(s) &=  (c_4\ln s + c_7) \int_{\Sigma_5} \frac{\xi -i\Im (b_1)}{r_+(\xi)} \frac{\d\xi}{2\pi i} \nonumber \\
& \qquad+ c_5 \int_{\Sigma_5} \frac{(\xi -i\Im (b_1)) \ln \le( i \xi \ri)}{r_+(\xi)} \frac{\d\xi}{2\pi i} + c_6 \int_{\Sigma_5} \frac{(\xi -i\Im (b_1))  \ln \le( -i\xi\ri) }{r_+(\xi)} \frac{\d\xi}{2\pi i}  \nonumber  \\
& \qquad + s^{-\rho} \le[  - \frac{c_8}{2\pi } \int_{C_2} \frac{(\xi - i\Im (b_1))\d \xi}{\xi\, r_+(\xi)} +a_1(M) + a_2(M)\ri] + \mathcal{O}\le( \frac{1}{s^{2\rho}}\ri)  \nonumber \\
& =:  \le(c_4 \ln s + c_7\ri) I_1 + c_5 I_2 + c_6  I_3 +  \frac{\mathcal{K}}{s^{\rho}} + \mathcal{O}\le( \frac{1}{s^{2\rho}}\ri),\label{p1Ij}
\end{align}
for some constant $\mathcal K$. The value of $\mathcal K$ depends on the parameters $\{\nu_j \}$, $\{\mu_k\}$, $\alpha$, $\theta$ but we do not compute its explicit value.
Note that $\mathcal K$ does not depend on $M$, although it may seem to a priori, since $p_1(s)$ does not depend on $M$.
The integrals $I_1,I_2, I_3$ are defined as
\begin{align*}
&I_1 = \int_{\Sigma_5} \frac{\xi -i\Im (b_1)}{r_+(\xi)} \frac{\d\xi}{2\pi i},\\
&I_2 = \int_{\Sigma_5} \frac{(\xi -i\Im (b_1)) \ln \le( i \xi \ri)}{r_+(\xi)} \frac{\d \xi}{2 \pi i},&&
I_3 = \int_{\Sigma_5} \frac{(\xi -i\Im (b_1)) \ln \le( -i \xi \ri)}{r_+(\xi)} \frac{\d \xi}{2 \pi i},
\end{align*}
and they remain to be computed.

\underline{Computation of $I_1$.}
We assume that the endpoints $b_1, b_2$ lie in the upper half-plane, as set in Section \ref{RHPforU} (for the other case, the argument is similar).
By analyticity we can deform the contour $\Sigma_5$ to a horizontal segment between $b_1$ and $b_2$ and we can easily show that $I_1$ is zero: 
\begin{equation}
I_1 =  \int_{\Sigma_5} \frac{(\xi - i\Im (b_1))\d\xi}{2\pi ir_+(\xi)}=-\int_{\Re (b_1)}^{\Re (b_2)} \frac{u}{\sqrt{\Re(b_1)^2-u^2}} \frac{\d u}{2\pi}=0 \label{I1}
\end{equation}
by symmetry.

\underline{Computation of $I_3$.} 
For the integral $I_3$, we use again analyticity to deform as before the contour $\Sigma_5$ into the segment $[b_1, b_2]$ (the logarithmic branch cut lies on $i\R^-$): we obtain
\begin{multline}
I_3 = \int_{\Sigma_5} \frac{(\xi -i\Im (b_1))\ln \le( -i \xi \ri)}{r_+(\xi)} \frac{\d \xi}{2\pi i }  
 = -\int_{\Re(b_1)}^{\Re(b_2)} \frac{u\ln \le(-iu +\Im(b_1) \ri)}{\sqrt{\le(\Re(b_1)\ri)^2 - u^2}} \frac{\d u}{2\pi } \\ = -\frac{i}{\pi} \int_{0}^{\Re(b_2)} \frac{ u\arg\le[ -iu + \Im(b_1) \ri]}{ \sqrt{\le(\Re(b_1)\ri)^2 - u^2}} \d u .
\end{multline}

The last integral is equal to $-\frac{\pi}{2}(|b_1|-\Im(b_1))$. Therefore, 
\begin{equation}
I_3 = \frac{i}{2}(|b_1|-\Im(b_1)). \label{I3}
\end{equation}

\underline{Computation of $I_2$.}
For the integral $I_2$, the branch cut of the logarithm is on $i\R^+$, and the integration over $\Sigma_5$ can be deformed by analyticity to a contour as showed in \figurename \ \ref{contourI2}.

\begin{figure}[h!]
\centering
\begin{tikzpicture}[>=stealth]
\path (0,0) coordinate (O);

\draw[dashed] (0,0) -- (0,3) coordinate (y axis);

\draw[->- = .4]  (-3,1) -- (-0.05,1); 
\draw[->- = .6]  (0.05,1) -- (3,1);

\draw[->- = .4]  (-0.05,1) -- (-0.05,0); 
\draw[->- = .4]  (0.05,0) -- (0.05,1);

\draw[fill] (0,0) circle [radius=0.03];

\draw[fill] (-3,1) circle [radius=0.03];
\draw[fill] (3,1) circle [radius=0.03];

\node [above left] at (-3,1) {\scriptsize $b_1$};
\node [above right] at (3,1) {\scriptsize $b_2$};

\node [below] at (0,0) {\scriptsize $0$};

\end{tikzpicture}
       \caption{Deformation of the contour for the integral $I_2$.}
\label{contourI2}
\end{figure}
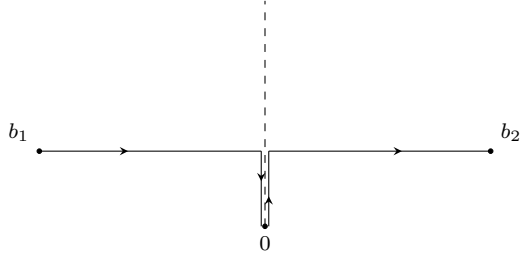

Given a parametrization of the form $\xi = i \Im(b_1) + u$, $u \in [\Re(b_1), \Re(b_2)]$ for the horizontal parts and  the parametrization $\xi = iv$, $v \in [0, \Im(b_1)] $ for the vertical parts, we have
\begin{align}
I_2 &= \int_{\Sigma_5} \frac{(\xi -i\Im (b_1))\ln \le( i \xi \ri)}{r_+(\xi)} \frac{\d \xi}{2\pi i } \nonumber \\
& = -\int_{\Re(b_1)}^{\Re(b_2)} \frac{u\ln \le(iu -\Im(b_1) \ri)}{\sqrt{\le(\Re(b_1)\ri)^2 - u^2}} \frac{\d u}{2\pi } + \int_0^{\Im(b_1)} \frac{(v - \Im (b_1))(\ln_+(-v) - \ln_-(-v))}{ \sqrt{\le(\Re(b_1)\ri)^2 + \le(v-\Im(b_1)\ri)^2} }\frac{\d v}{2\pi} \nonumber \\
& = -\frac{i}{\pi} \int_{0}^{\Re(b_2)} \frac{ u\arg\le[ iu - \Im(b_1) \ri]}{ \sqrt{\le(\Re(b_1)\ri)^2 - u^2}} \d u+ i \int_0^{\Im(b_1)} \frac{ v-\Im(b_1) }{ \sqrt{\le(\Re(b_1)\ri)^2 + \le(v-\Im(b_1)\ri)^2} }\d v.
\end{align}
The second integral is equal to $-|b_1|-\Re(b_1)$.
For the first integral, we note that
\[\arg\le[ iu - \Im(b_1) \ri]=\arg\le[ -iu + \Im(b_1) \ri]+\sgn (u) \pi,\]
and then it follows that
\begin{equation}\label{I2 final}
I_2=I_3-i|b_1|= \frac{i}{2}(|b_1|-\Im(b_1))-i|b_1|.
\end{equation}

In conclusion, substituting (\ref{I1}), (\ref{I2 final}) and (\ref{I3}) (\ref{p1Ij}), we get as $s \to + \infty$,
\begin{equation}
p_1(s) =- i c_5 |b_1|  +i \frac{c_5+c_6}{2 }   \le(|b_1|-\Im(b_1) \ri) +  \frac{\mathcal{K}}{s^{\rho}} + \mathcal{O}\le( \frac{1}{s^{2\rho}}\ri) \label{p1explicit}
\end{equation}
as $s \to + \infty$.

\subsection{The final asymptotic expansion of the Fredholm determinant}\label{subsec: log term}

Using \eqref{expansion R1 s}, \eqref{g1explicit} and \eqref{p1explicit} in \eqref{Fredholm expansion 1}, we obtain
\begin{multline}
\frac{\d}{\d s} \ln \det\le(1 - \K^{(j)}\bigg|_{[0, s]}\ri) =  - \frac{ \le(\Re(b_1)\ri)^2 (c_1+c_2)}{8 } s^{2 \rho - 1}   \\
 -  \le( -c_5 |b_1|  + \frac{c_5+c_6}{2}   \le(|b_1|-\Im(b_1)  \ri) \ri) s^{\rho - 1}  \nonumber \\
+ \frac{-\mathcal{K} + \le(R_1^{(1)}\ri)_{2,2} }{is}  + \mathcal{O}\le( s^{-\rho-1}\ri),\qquad s\to + \infty .
\end{multline}
Integrating in $s$, we obtain
\begin{equation}
\ln \det\le(1 - \K^{(j)}\bigg|_{[0, s]}\ri)
 =  - a^{(j)} \, s^{2 \rho}   
+b^{(j)} \, s^{\rho}
 + c^{(j)}\, \ln s  + \ln C^{(j)} +\mathcal{O}\le( s^{-\rho}\ri),\qquad s\to + \infty,
\end{equation}
for some integration constant $\ln C^{(j)}$, and with
\begin{align}
&a^{(j)}=\frac{ \le(\Re(b_1)\ri)^2 (c_1+c_2)}{16\rho },\\
&b^{(j)}=-\frac{1}{\rho}  \le(- c_5 |b_1|  + \frac{c_5+c_6}{2}   \le(|b_1|-\Im(b_1) \ri) \ri).
\end{align}
Combining \eqref{eq: reb2_j12} and \eqref{eq: reb2_j3} with the specific values for the constants $\{c_i\}$ from \eqref{eq: ci_j1}, \eqref{eq: ci_j2} and \eqref{eq: ci_j3} we immediately get Theorem \ref{theorem large gap}.

\begin{remark}
The values of $\mathcal{K}$ and $\le(R^{(1)}_{1}\ri)_{2,2}$ will determine the coefficient $c^{(j)}$ in front of the logarithmic term in the asymptotic expansion of the Fredholm determinant. As already stressed in the introduction, their value can in principle be explicitly computed, but the computations are quite involved, and we do not proceed with this. 
\end{remark}

\appendix

\section{Limit of the smallest eigenvalue distribution as a Fredholm determinant}
\label{appendix: fredholm}
\paragraph{Correlation kernels and scaling limits}
We first express the finite $n$ correlation kernels $K_n^{(j)}$ as double contour integrals.
\begin{proposition}\label{prop: finite n kernels} We denote $K_n^{(1)}$ for the eigenvalue correlation kernel of $M^{(1)}$, $K_n^{(2)}$ for the eigenvalue correlation kernel of $M^{(2)}$, and $K_n^{(3)}$ for the correlation kernel of the determinantal point process \eqref{MB}. The correlation kernels admit the following double integral representations:
\begin{equation}\label{correlation kernel Kn}
 K_n^{(j)}(x,y)=\frac{1}{4\pi^2}\int_{\gamma} \d u\int_{\tilde\gamma} \d v \frac{F_n^{(j)}(u)}{F_n^{(j)}(v)}\frac{x^{-u}y^{v-1}}{u-v},\qquad j=1,2,3,
 \end{equation}
 with contours $\gamma$ and $\tilde \gamma$ as shown in \figurename \ \ref{Kalphatheta}.
 The functions $F_n^{(1)}, F_n^{(2)}, F_n^{(3)}$ are given by
\begin{align}
&\label{Fn1}F_n^{(1)}(z)=\frac{\Gamma(-z-n+1)}{\prod_{k=0}^r\Gamma(-z+\nu_k+1)}\\
&\label{Fn2}F_n^{(2)}(z)=\prod_{k=0}^r\frac{ \Gamma\le(1+\ell_k-n -z\ri)}{  \Gamma\le(1+\nu_k -z\ri)}\\
&\label{Fn3}F_n^{(3)}(z)=\frac{\Gamma(z+\frac{\alpha}{2})}{\Gamma(\frac{\frac{\alpha}{2}+1-u}{\theta})}\Gamma\left(n+\frac{\frac{\alpha}{2}+1-u}{\theta}\right).
\end{align}  
Moreover, we have for $j=1,2,3$ and $s>0$,
\begin{equation}\label{norm operators}
\int_0^s\int_0^s \left|K_n^{(j)}(x,y)\right|^2 \d x\d y<\infty.
\end{equation}
\end{proposition}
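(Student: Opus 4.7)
My plan is to split the claim into two parts: the double-contour representations \eqref{correlation kernel Kn}--\eqref{Fn3}, and the $L^2$-integrability \eqref{norm operators}.

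For the integral representations, I would not re-derive them from scratch but read them off from the literature in which each model was first cast in double-integral form. For $j=1$, the Meijer-$G$ kernel for products of Ginibre matrices is given as a Mellin--Barnes double integral in \cite{ArnoLun}; collecting the $n$-dependent Gamma factors in the integrand produces exactly $F_n^{(1)}$ as in \eqref{Fn1}. For $j=2$, the analogous representation for products of truncated unitary matrices is given in \cite{ArnoDriesMario}, yielding $F_n^{(2)}$ as in \eqref{Fn2}. For $j=3$, the original derivation of Borodin \cite{Borodin} already presents the correlation kernel of \eqref{MB} as a double integral, from which \eqref{Fn3} follows. In each case the task is only to check that the integrand in the cited formula matches the one stated here and that the contours $\gamma,\tilde\gamma$ described in Figure \ref{Kalphatheta} encircle the correct poles, which is routine.

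For the $L^2$-bound \eqref{norm operators}, the plan is to mirror the argument that was used for the limiting kernels in the excerpt just before \eqref{norm operators lim}. I would first derive Stirling-type asymptotics for $F_n^{(j)}(x+iy)$ as $|y|\to\infty$, uniformly in $x$ on compacts, completely parallel to \eqref{as F1}--\eqref{as F3}; the additional $n$-dependent Gamma factors contribute extra $|y|^{\mathrm{polynomial}} e^{\pm\pi|y|/2}$ terms whose coefficients differ from the limiting case only by terms of the form $\pm n$. Combining these estimates with the fact that $\gamma$ and $\tilde\gamma$ escape to infinity in sectors strictly in the left, respectively right, half-plane, one checks that the double integral in \eqref{correlation kernel Kn} converges absolutely and, for $x,y$ small, admits a bound
\[
\bigl|K_n^{(j)}(x,y)\bigr|\le C_n\,|xy|^{-\frac12+\delta}
\]
for some $\delta>0$ and some constant $C_n$ depending on $n$ but not on $x,y$. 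Integrating this bound over $[0,s]^2$ yields \eqref{norm operators}. As a simpler alternative, since each $K_n^{(j)}$ is the kernel of a finite-rank determinantal point process, one can write $K_n^{(j)}(x,y)=\sum_{k=0}^{n-1}\phi_k(x)\psi_k(y)$ with polynomials $\phi_k,\psi_k$ against the relevant weights on $(0,\infty)$, and then \eqref{norm operators} follows immediately from Cauchy--Schwarz once one checks that each $\phi_k,\psi_k$ is square-integrable on $[0,s]$, which is clear from the $x^{\alpha}$- and related integrable singularities of the weight functions at $0$.

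The main obstacle lies in the first approach, namely verifying that the exponential factor $e^{c\pi|y|/2}$ appearing in the asymptotics of $F_n^{(j)}$ is controlled by the decay provided by $x^{-u}y^{v-1}$ along the tails of $\gamma,\tilde\gamma$. Concretely, along these contours $\operatorname{Re}u\to-\infty$ on $\gamma$ and $\operatorname{Re}v\to+\infty$ on $\tilde\gamma$ at a positive rate relative to $|\operatorname{Im}u|,|\operatorname{Im}v|$, so $|x^{-u}|=x^{-\operatorname{Re}u}$ decays super-exponentially for $x\in(0,s)$ bounded away from $0$, and one must argue additionally that the blow-up at $x,y\to 0^+$ does not destroy integrability; this is handled by moving the vertical parts of $\gamma,\tilde\gamma$ slightly to the left or right of $\operatorname{Re}z=1/2$ to extract the exponent $\tfrac12-\delta$. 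Once these estimates are in place, both parts of the proposition follow.
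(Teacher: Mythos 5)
The first half of your proposal — reading off the double-contour representations from \cite{ArnoLun} ($j=1$), \cite{ArnoDriesMario} ($j=2$), and the Muttalib--Borodin literature ($j=3$), then changing variables and deforming contours — is essentially the route the paper takes. But there is a substantive gap for $j=3$ that you gloss over by calling the verification ``routine.'' The finite-$n$ double-contour formula the paper actually starts from is \cite[formula (1.11)]{LZhang} (not the Borodin paper, which concerns the scaling limit), and that kernel, as stated, is \emph{not} in $L^2$ near the origin for all admissible $\alpha>-1$: it is asymmetric in $x$ and $y$, with different powers appearing. The paper must first conjugate by the gauge factor $x^{\alpha/2}y^{-\alpha/2}$ — which leaves the determinantal point process and the Fredholm determinant unchanged but symmetrizes the kernel — before making the substitution $u=-\theta t-\frac{\alpha}{2}-1$, $v=-\theta s-\frac{\alpha}{2}-1$. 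Without this step you neither obtain the stated $F_n^{(3)}$ of \eqref{Fn3} nor can you run the $L^2$ estimate, since one of the two variables would carry an exponent that may be $\le -\frac12$ near $0$. So ``check that the integrand matches'' is not routine here; it requires an extra transformation.

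On the $L^2$ bound itself, your Stirling-based argument would work, but the paper's argument is shorter and does not need a full asymptotic expansion of $F_n^{(j)}$: once the double integral is known to converge (for which the rough decay of $F_n^{(j)}$ in the left/right half-plane suffices), one simply slides $\gamma$ and $\tilde\gamma$ to cross the real axis at points $\epsilon$ and $1-\epsilon$ (respectively at $\frac{\alpha}{2}+\epsilon$ and $1+\frac{\alpha}{2}-\epsilon$ for $j=3$), which immediately gives $K_n^{(j)}(x,y)=\mathcal{O}((xy)^{-\epsilon})$ (resp.\ $\mathcal{O}((xy)^{\alpha/2-\epsilon})$) as $x,y\to 0^+$; choosing $\epsilon$ small (or $\epsilon<\frac{\alpha+1}{2}$ for $j=3$) yields \eqref{norm operators}. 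Your target exponent $-\frac12+\delta$ is what one needs but not what the contour placement naturally delivers; the sharper $-\epsilon$ is free. Your alternative finite-rank decomposition $K_n^{(j)}=\sum_{k=0}^{n-1}\phi_k\psi_k$ with Cauchy--Schwarz is a legitimate shortcut for $j=1,2$, but for $j=3$ you would again run into the gauge issue: the biorthogonal functions associated to the raw Muttalib--Borodin kernel carry weight factors whose square-integrability on $(0,s)$ depends on which gauge you chose, so the gauge transformation cannot be avoided.
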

\begin{proof}
\begin{enumerate}
\item For $j=1$, it was shown in  \cite[formula (5.1)]{ArnoLun} that
\begin{multline}
K_n^{(1)}(x,y) = \frac{1}{(2\pi i)^2} \int_{-\frac{1}{2}+i \R} \d s \int_{\Sigma_n} \d t  \frac{x^t y^{-s-1}}{s-t}\\
\times\ \prod_{j=0}^r \frac{\Gamma(s+\nu_j+1)}{\Gamma(t+\nu_j+1)} \frac{\Gamma(t-n+1)}{\Gamma(s-n+1)},\label{discrete kernel j=1 AL}
\end{multline}
with $\Sigma_n$ a contour enclosing $0,1,\ldots, n-1$ once in the counterclockwise direction and such that ${\Re}\, t>-\frac{1}{2}$.
Substituting $u=-t$ and $v=-s$, we obtain a double contour integral representation of the kernel $K^{(1)}_n$ with curves $-\Sigma_n$ and $\frac{1}{2}+i \R$. Since $F^{(1)}$ is analytic off the real line, and decays/blows up fast in the left/right half plane (this follows from Stirling's approximation after a straightforward calculation), we can deform $-\Sigma_n$ to $\gamma$ and $\frac{1}{2}+i \R$ to $\tilde\gamma$ without modifying the integrals and thus obtaining \eqref{correlation kernel Kn}.

The function $F_n^{(1)}$ has poles at $-n+1, -n+2,\ldots, 0$ and possibly zeros at $1,2,\ldots$
One can choose $\gamma$ and $\tilde\gamma$ arbitrarily close to the points $0$ and $1$, in such a way that they cross the real line at points $\epsilon$ and $1-\epsilon$. From \eqref{correlation kernel Kn}, we then obtain immediately that $K_n^{(1)}(x,y)=\mathcal O((xy)^{-\epsilon})$ as $x,y\to 0_+$, which implies \eqref{norm operators}.

\item For $j=2$, it was shown in \cite[formula (2.33)]{ArnoDriesMario} that
\begin{multline}
K_n^{(2)}(x,y) = \frac{1}{(2\pi i)^2} \int_C \d s \int_{\Sigma_n} \d t  \frac{ x^t y^{-s-1} }{s-t}\\ \times\ \prod_{j=0}^r \frac{ \Gamma(s+1+\nu_j) \Gamma(t+1+\ell_j-n) }{ \Gamma(t+1+\nu_j) \Gamma(s+1+\ell_j-n) }, \label{discrete kernel j=2 AMD}
\end{multline}
where $\Sigma_n$ is as in the case $j=1$ and $C$ is a counterclockwise oriented curve which starts and ends at $-\infty$ and encircles the negative real line. 

Again, substituting $u=-t$ and $v=-s$ and deforming the integration contours (thanks to the analyticity of the functions involved), we obtain \eqref{correlation kernel Kn}.
Similarly as for $j=1$, we have $K_n^{(2)}(x,y)=\mathcal O((xy)^{-\epsilon})$ as $x,y\to 0_+$, which implies \eqref{norm operators}.
\item For $j=3$, it was shown in \cite[formula (1.11)]{LZhang} that the correlation kernel $K_n^{(3)}$ can be written as
\begin{multline}
K_n^{(3)}(x,y) = \frac{\theta}{(2\pi i)^2} \int_{c+i \R} \d s \int_{\Sigma_n} \d t \frac{x^{-\theta s-1}y^{\theta t}}{s-t} \\\times\  \frac{\Gamma(s+1)\Gamma(\alpha+1+\theta s)\Gamma(t-n+1)}{\Gamma(t+1)\Gamma(\alpha+1+\theta t)\Gamma(s-n+1)}
\end{multline}
with $c=-\frac{1}{2}+\frac{1}{2}\max\{0,1-\frac{\alpha+1}{\theta}\}$, and $\Sigma_n$ a closed counter-clockwise contour going around $0,1,\ldots, n-1$ and such that ${\rm Re }\, t>c$.
However, in order to have a correlation kernel $K_n^{(3)}$ satisfying \eqref{norm operators}, we multiply $K_n$ with the gauge factor $x^{\alpha/2}y^{-\alpha/2}$. This preserves the associated determinantal point process. After substituting $u=-\theta t-\frac{\alpha}{2}-1$ and $v=-\theta s-\frac{\alpha}{2}-1$ and appropriately deforming the integration contours, we obtain \eqref{correlation kernel Kn} with $F^{(3)}$ given by \eqref{Fn3}.

One can choose $\gamma$ and $\tilde\gamma$ arbitrarily close to the points $-\frac{\alpha}{2}$ and $1+\frac{\alpha}{2}$, in such a way that they cross the real line at points $\frac{\alpha}{2}+\epsilon$ and $1+\frac{\alpha}{2}-\epsilon$. From \eqref{correlation kernel Kn}, we then obtain immediately that $K_n^{(3)}(x,y)=\mathcal O((xy)^{\alpha/2-\epsilon})$ as $x,y \to 0_+$. Provided that $0<\epsilon < \frac{\alpha+1}{2}$ (this ensures that $\frac{\alpha}{2}-\epsilon>-\frac{1}{2}$), we have \eqref{norm operators}.
\end{enumerate}
\end{proof}
As a consequence of \eqref{norm operators}, the integral operators $\left.K_n^{(j)}\right|_{[0,s]}$
defined by 
\begin{equation}\label{def operators}
\left.K_n\right|_{[0,s]} f(y)=\int_0^s K_n(x,y)f(x)\d x,\qquad f\in L^2(0,s),\quad y\in [0,s]
\end{equation}
are well-defined bounded linear operators on $L^2(0,s)$.
They are of finite rank $n$ and thus trace-class, hence the Fredholm determinants $\det\left(1-\left. K_n^{(j)}\right|_{[0,s]}\right)$
are well-defined.

\begin{proposition}\label{prop: limiting kernels}Let $c_n^{(1)}, c_n^{(2)}, c_n^{3)}$ be defined by \eqref{def cj}.
The scaling limits \eqref{scaling limit} hold for $j=1,2,3$, and the limiting kernels $\mathbb K^{(1)}, \mathbb K^{(2)}, \mathbb K^{(3)}$ are given by \eqref{limiting kernels1}. 
\end{proposition}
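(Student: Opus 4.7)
The plan is to deduce Proposition~\ref{prop: limiting kernels} directly from the finite-$n$ double contour integral representations of Proposition~\ref{prop: finite n kernels} by combining Stirling's asymptotics with a dominated convergence argument on the contours. Performing the substitution $x \mapsto x/c_n^{(j)}$, $y \mapsto y/c_n^{(j)}$ in \eqref{correlation kernel Kn} and absorbing the scaling into the integrand yields
\begin{equation*}
\frac{1}{c_n^{(j)}} K_n^{(j)}\!\left(\frac{x}{c_n^{(j)}},\frac{y}{c_n^{(j)}}\right) = \frac{1}{4\pi^2}\int_{\gamma}\d u\int_{\tilde\gamma}\d v\,\Phi_n^{(j)}(u,v)\,\frac{x^{-u}y^{v-1}}{u-v},\qquad \Phi_n^{(j)}(u,v) := \bigl(c_n^{(j)}\bigr)^{u-v}\,\frac{F_n^{(j)}(u)}{F_n^{(j)}(v)}.
\end{equation*}
The contours $\gamma,\tilde\gamma$ of Figure~\ref{Kalphatheta} can be chosen once and for all, independently of $n$, so that for every $n$ sufficiently large they still separate the poles and zeros of $F_n^{(j)}$: the $n$-dependent poles of $F_n^{(1)}$ and $F_n^{(2)}$ accumulate on the non-positive reals, while those of $F_n^{(3)}$ escape to $+\infty$, and in all cases the line $\{\tfrac12+it:t\in\R\}$ remains between $\gamma$ and $\tilde\gamma$. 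The statement then reduces to (i) the pointwise limit $\Phi_n^{(j)}(u,v)\to F^{(j)}(u)/F^{(j)}(v)$ for every $(u,v)\in\gamma\times\tilde\gamma$, and (ii) an integrable majorant for the integrand that is uniform in $n$.

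For step (i), the two basic ingredients are the large-shift identity $\Gamma(n+a)/\Gamma(n+b)=n^{a-b}(1+\mathcal{O}(1/n))$ (uniformly on compact sets of $(a,b)$) and Euler's reflection formula $\Gamma(z)\Gamma(1-z)=\pi/\sin(\pi z)$. In case $j=1$, the reflection formula turns the factor $\Gamma(1-z-n)$ in $F_n^{(1)}$ into $(-1)^n\pi/(\sin(\pi z)\,\Gamma(z+n))$; the ratio $\Gamma(v+n)/\Gamma(u+n)$ is then precisely compensated by $(c_n^{(1)})^{u-v}=n^{u-v}$, and the surviving $\sin(\pi\,\cdot)$ and $\Gamma(1-\,\cdot)$ factors combine with the $n$-independent part of $F_n^{(1)}$ to rebuild $F^{(1)}(u)/F^{(1)}(v)$. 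Case $j=2$ is analogous: the factors with $k\notin J$ yield a compensation $(\ell_k-n)^{u-v}$ absorbed by $c_n^{(2)}=n\prod_{k\notin J}(\ell_k-n)$, whereas those with $k\in J$ leave precisely the $\Gamma(1+\mu_k-z)$ factors present in $F^{(2)}$. In case $j=3$, the ratio $\Gamma(n+(\tfrac{\alpha}{2}+1-u)/\theta)/\Gamma(n+(\tfrac{\alpha}{2}+1-v)/\theta)$ is compensated by $(c_n^{(3)})^{u-v}=n^{(u-v)/\theta}$, leaving the ratio $F^{(3)}(u)/F^{(3)}(v)$.

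For step (ii), I apply Stirling along the imaginary direction, $|\Gamma(x+iy)|\sim \sqrt{2\pi}\,|y|^{x-\frac12}e^{-\pi|y|/2}$ as $|y|\to\infty$ uniformly for $x$ in a compact set, to every Gamma factor in $F_n^{(j)}$. The $n$-independent Gamma factors contribute the exponential decay already exhibited in \eqref{as F1}--\eqref{as F3}, which is enough to guarantee convergence of the $(u,v)$-integrals against any fixed polynomial weight on the sectors where $\gamma$ and $\tilde\gamma$ tend to infinity. The $n$-dependent Gamma factors ($\Gamma(z+n)$ for $j=1$, $\Gamma(z+\ell_k-n)$ for $j=2$, $\Gamma(n+(\tfrac{\alpha}{2}+1-z)/\theta)$ for $j=3$) have arguments whose real parts diverge; a uniform Stirling estimate, with remainder of order $1/n$, shows that after multiplication by $(c_n^{(j)})^{u-v}$ their contribution to $|\Phi_n^{(j)}(u,v)|$ is dominated, for $n$ large, by a fixed polynomial in $(1+|u|)(1+|v|)$. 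Combining these two effects furnishes an $n$-independent integrable majorant on $\gamma\times\tilde\gamma$, and dominated convergence completes the proof that the rescaled kernels converge to \eqref{limiting kernels1}.

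The main technical obstacle lies in step (ii), namely obtaining the uniform (in $n$) control of the $n$-dependent Gamma factors along the tails of the contours. Because both the real and imaginary parts of their arguments tend to infinity, I need a version of Stirling with explicit error bounds valid in a full sector around $\R_+$, and I must then verify that the subexponential prefactor produced does not degrade as $n\to\infty$. The remaining verification that the resulting limit agrees, as a double contour integral, with \eqref{limiting kernels1} is a matter of matching up the explicit $F_n^{(j)}\to F^{(j)}$ asymptotics computed in step (i); this has been carried out for each model separately in \cite{ArnoLun,ArnoDriesMario,Borodin}, and the adaptation to the rescaled quantity $\Phi_n^{(j)}$ is routine.
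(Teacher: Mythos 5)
Your proposal takes a genuinely different route from the paper. The paper's proof of Proposition~\ref{prop: limiting kernels} is short: it simply \emph{cites} the scaling limits already established in \cite{ArnoLun}, \cite{ArnoDriesMario} and \cite{Borodin} (Theorems~5.3, 2.8 and 4.2 respectively), and then devotes itself entirely to \emph{rewriting} the limiting kernels of those references into the double contour integral form \eqref{limiting kernels1} --- via Euler's reflection formula and the substitutions $u=-t$, $v=-s$ for $j=1,2$, and via the contour integral representation of the Wright generalized Bessel function for $j=3$. In contrast, you set out to \emph{reprove} the scaling limit directly, by performing the rescaling inside the finite-$n$ double contour integral, extracting the pointwise limit $\Phi_n^{(j)}(u,v)\to F^{(j)}(u)/F^{(j)}(v)$ with Stirling and reflection, and invoking dominated convergence. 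That is a more self-contained plan, and it is in fact much closer in spirit to what the paper actually does one step later, in the proof of Lemma~\ref{lemma scaling limit uniform} (where the authors do carry out the reflection-plus-Stirling manipulation on the rescaled integrand and argue tail decay along $\gamma,\tilde\gamma$); the paper just doesn't use that machinery for Proposition~\ref{prop: limiting kernels} itself.

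A few honest caveats. First, your step~(ii) --- the $n$-uniform integrable majorant --- is the real technical content and you acknowledge you leave it unresolved. It is more delicate than it may appear: for $j=3$ the $n$-dependent factor $\Gamma\bigl(n+\tfrac{\alpha/2+1-v}{\theta}\bigr)$ has argument whose real part becomes \emph{negative} far along $\tilde\gamma$ (since $\tilde\gamma$ tends to infinity in the right half plane), so the crude Stirling regime you invoke fails there and one must instead combine the exponential decay in $|\Im|$ from the Stirling envelope with the pole-avoidance of $\tilde\gamma$; similar care is needed for $j=2$ since $\ell_k-n$ itself may depend on $n$. The paper sidesteps all of this for the proposition by simply citing the references. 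Second, at the end you also fall back on those same references (``this has been carried out for each model separately in \cite{ArnoLun,ArnoDriesMario,Borodin}''), so in its present form your argument is not actually more self-contained than the paper's --- it does the algebraic rewriting and the dominated-convergence skeleton, but still delegates the hard uniform estimates. Either commit fully to proving the majorant (then you genuinely subsume the cited results, at the price of nontrivial Stirling bookkeeping in a sector), or do what the paper does and cite the scaling limits outright, keeping only the change-of-variables/reflection-formula rewriting, which is short and clean. As written, the proposal is in a middle ground where the key lemma is promised but not delivered.
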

\begin{proof}
\begin{enumerate}
\item For $j=1$, the scaling limit was proven in \cite[Theorem 5.3]{ArnoLun}, where the discrete kernel $K_n^{(1)}$ had the form \eqref{discrete kernel j=1 AL} and the limiting kernel was
\begin{gather}
\mathbb{K}^{(1)}(x,y) =  \frac{1}{(2\pi i)^2} \int_{-\frac{1}{2}+i\R} \d s \int_{\Sigma} \d t  \frac{ x^t y^{-s-1} }{s-t}  \prod_{j=0}^r \frac{ \Gamma(s+1+\nu_j) }{ \Gamma(t+1+\nu_j) } \frac{\sin(\pi s)}{\sin (\pi t)} 
\end{gather}
with $\Sigma$ a contour around the positive real axis in the half-plane $\Re t > - \frac{1}{2}$. We now use Euler's reflection formula for the $\Gamma$ function $\Gamma(z)\Gamma(1-z) = \frac{\pi}{\sin (\pi z)}$ and we recall the fact that $\nu_0=0$. By a standard change of variables ($u=-t$ and $v=-s$) and  deformation of contours, we obtain $\mathbb{K}^{(1)}$ as given in \eqref{limiting kernels1}.
\item Similarly, for $j=2$, the scaling limit was shown in \cite[Theorem 2.8]{ArnoDriesMario}, where the finite $n$ kernel $K_n^{(2)}$ has the form \eqref{discrete kernel j=2 AMD} and the limiting kernel is
\begin{gather}
\mathbb{K}^{(2)}(x,y) = \nonumber \\ \frac{1}{(2\pi i)^2} \int_{-\frac{1}{2}+i\R} \d s \int_{\Sigma} \d t  \frac{ x^t y^{-s-1} }{s-t}  \prod_{j=0}^r \frac{ \Gamma(s+1+\nu_j) }{ \Gamma(t+1+\nu_j) } \frac{\sin(\pi s)}{\sin (\pi t)} \prod_{k\in J} \frac{\Gamma(t+1+\ell_k)}{\Gamma(s+1+\ell_k) }.
\end{gather}
As for the case $j=1$, similar straightforward manipulations lead to  the expression for the kernel $\mathbb{K}^{(2)}$ as given in \eqref{limiting kernels1}.
\item We recall the definition of the limiting kernel $\mathbb K^{(3)}$ appearing in \cite{Borodin} (see also \cite{TomStefano}),
\begin{gather}
\mathbb K^{(3)}(x,y) 
= \theta \le(xy\ri)^{\frac{\alpha}{2}} \int_0^1 J_{\frac{\alpha+1}{\theta}, \frac{1}{\theta}}\le(xt\ri) \, J_{\alpha+1, \theta} \le( (yt)^\theta \ri) t^\alpha \d t \label{Borodin01}
\end{gather}
where $J_{a,b}(x) = \sum_{m=0}^\infty \frac{(-x)^m}{m! \Gamma \le( a+bm \ri)}$ is the Wright's generalized Bessel function. The scaling limit result has been proved in \cite[Theorem 4.2]{Borodin}.

By a residue calculation, it is easy to see that the Wright's Bessel function can be expressed as
\begin{gather}
J_{a,b}(x) = \int_\gamma \frac{\d u}{2\pi i} x^{-u} \frac{\Gamma(u)}{\Gamma\le(a-bu\ri)}
\end{gather}
where $\gamma$ can be any curve that encloses all the negative integers and the origin in the counterclockwise direction. 

Substituting the above expression in the definition of $\mathbb{K}^{(3)}$ and performing some integrations and changes of variables, one easily obtains the desired representation of the kernel \eqref{limiting kernels1}. An equivalent double-contour representation has also been obtained in \cite[Corollary 1.2]{LZhang2}.
\end{enumerate}

\end{proof}

We will now justify the limit \eqref{lim eig distr Fredholm}. To this end, we need to show that the operator acting on $L^2(0,s)$ with kernel $\frac{1}{c_n^{(j)}}K_n^{(j)}\left(\frac{x}{c_n^{(j)}},\frac{y}{c_n^{(j)}}\right)$ converges, as $n\to + \infty$, to the operator with kernel $\mathbb K^{(j)}$ for the trace norm. 
\begin{lemma}\label{lemma scaling limit uniform}
Let $s>0$ and let $K_n^{(j)}$ be the correlation kernels defined in \eqref{correlation kernel Kn}.
For $j=1,2,3$, there exists a constant $\beta\in \le[0,\frac{1}{2}\ri)$ such that
\begin{equation}\label{scaling limit uniform}
\lim_{n\to + \infty}\frac{(xy)^{\beta}}{c_n^{(j)}}K_n^{(j)}\left(\frac{x}{c_n^{(j)}},\frac{y}{c_n^{(j)}}\right)=(xy)^{\beta}\mathbb K^{(j)}(x,y),
\end{equation}
uniformly for $x,y\in [0,s]$. For $x=0$ and $y=0$, both the left and right hand sides of the above equation are understood as the limit as $x,y\to 0_+$. We can take $\beta=\frac{1}{4}$ for $j=1,2$ and $\beta=\max\{-\frac{\alpha}{2},0\}$ for $j=3$.
\end{lemma}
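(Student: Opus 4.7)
The plan is to establish the uniform convergence directly from the double-contour integral representations in Propositions~\ref{prop: finite n kernels} and \ref{prop: limiting kernels}. The pointwise scaling limit \eqref{scaling limit} has already been established in \cite{ArnoLun, ArnoDriesMario, Borodin}; the new ingredient is (i) the identification of a weight $(xy)^\beta$ that makes the convergence uniform on $[0,s]\times[0,s]$ and (ii) an application of dominated convergence at the level of the double contour integrals.

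First I would rescale: substituting $x\mapsto x/c_n^{(j)}$ and $y\mapsto y/c_n^{(j)}$ in \eqref{correlation kernel Kn}, and (for $j=3$) performing the affine change of contour variables $u\mapsto\theta u+\tfrac{\alpha}{2}+1$, $v\mapsto\theta v+\tfrac{\alpha}{2}+1$, puts $\frac{1}{c_n^{(j)}}K_n^{(j)}(x/c_n^{(j)},y/c_n^{(j)})$ into the form of \eqref{limiting kernels1} but with $F^{(j)}$ replaced by a rescaled $\widetilde F_n^{(j)}$. Then the reflection formula $\Gamma(1-n-z)=\pi(-1)^n/(\sin(\pi z)\Gamma(n+z))$ applied to the factors involving $n$ in \eqref{Fn1}-\eqref{Fn3}, combined with Stirling's approximation in the form $\Gamma(n+z)/\Gamma(n)\sim n^{z}$ as $n\to +\infty$, shows that $\widetilde F_n^{(j)}(z)\to F^{(j)}(z)$ uniformly on compact subsets of $\C$ avoiding the poles of $F^{(j)}$. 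Consequently the pointwise limit $\widetilde F_n^{(j)}(u)/\widetilde F_n^{(j)}(v)\to F^{(j)}(u)/F^{(j)}(v)$ holds on $\gamma\times\tilde\gamma$.

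Next, I would choose the contours $\gamma,\tilde\gamma$ so that they cross the real axis at $a=\beta$ and $b=1-\beta$. For $j=1,2$ this is admissible whenever $0<\beta<1/2$, so $\beta=1/4$ works (since the poles/zeros of $F^{(j)}$ lie at $0,-1,\ldots$ and at $1+\nu_{\min},2+\nu_{\min},\ldots$); for $j=3$ the relevant poles sit at $-\alpha/2,-\alpha/2-1,\ldots$ and the zeros at $\tfrac{\alpha}{2}+1,\tfrac{\alpha}{2}+1+\theta,\ldots$, so we may take $\beta=\max\{-\alpha/2,0\}$. This choice guarantees that
\begin{equation*}
(xy)^\beta\,|x^{-u}y^{v-1}|=x^{\beta-\Re u}\,y^{\beta+\Re v-1}\leq C,\qquad u\in\gamma,\ v\in\tilde\gamma,\ x,y\in(0,s],
\end{equation*}
uniformly. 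On the other hand, Stirling's formula (as it was already used in the asymptotics \eqref{as F1}-\eqref{as F3}) provides, uniformly in $n$, a bound $|\widetilde F_n^{(j)}(u)/\widetilde F_n^{(j)}(v)|\leq \Psi(u,v)$ with $\Psi(u,v)/|u-v|$ integrable on $\gamma\times\tilde\gamma$, thanks to the exponential decay of $\widetilde F_n^{(j)}$ along the tails of $\gamma$ (left half plane) and of $1/\widetilde F_n^{(j)}$ along the tails of $\tilde\gamma$ (right half plane), which strictly improves as $|\Im u|$ or $|\Im v|\to\infty$.

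The main obstacle is making this last uniform bound precise: one needs Stirling's approximation with error estimates that are uniform both in $n$ and along the entire length of the (unbounded) contours, including at the tails, and one needs to check that $\widetilde F_n^{(j)}$ has no zeros/poles accumulating on $\gamma$ or $\tilde\gamma$ for large $n$. Once this is established, dominated convergence yields the uniform convergence \eqref{scaling limit uniform} on $(0,s]\times(0,s]$. The extension to $x=0$ or $y=0$ is then automatic: both $(xy)^\beta\mathbb K^{(j)}(x,y)$ and $(xy)^\beta c_n^{(j),-1}K_n^{(j)}(x/c_n^{(j)},y/c_n^{(j)})$ extend continuously to the boundary by the same argument used to derive \eqref{norm operators} and \eqref{norm operators lim}, and the uniform bound on the contour integrals transfers to their boundary limits.
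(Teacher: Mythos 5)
Your proposal follows essentially the same route as the paper's proof: rescale, absorb the $n$-dependence of the kernel into a modified $F_n^{(j)}$, apply Euler's reflection formula together with the Stirling ratio estimate $\Gamma(n+v)/\Gamma(n+u)=n^{v-u}(1+\mathcal{O}(1/n))$, choose the contours to cross the real axis near $\beta$ and $1-\beta$ so that the weight $(xy)^\beta$ renders the factor $x^{\beta-\Re u}y^{\beta+\Re v-1}$ uniformly bounded on $[0,s]^2$, and then interchange the $n\to\infty$ limit with the double contour integral using a uniform-in-$n$ dominating bound. The paper carries this out in the same way (see the displayed rewriting for $j=1$ with the factor $x^{-u+1/4}y^{v-3/4}/n^{v-u}$ and the subsequent discussion of convergence along $\gamma$ and $\tilde\gamma$), and is no more detailed than you are on the point you flag as "the main obstacle", namely the uniform integrability of the dominating function; the paper simply notes that the relevant Gamma-factors decay exponentially along the unbounded tails and that $|\sin(\pi u)|\geq |\sinh(\pi\Im u)|$, then concludes that the limit and the integrals may be interchanged.
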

\begin{proof}
\begin{enumerate}
\item The uniform convergence of the kernel $K_n^{(1)}$ has already been proven in \cite[Theorem 5.3]{ArnoLun} for compact subsets of the positive real line. We will perform here almost the same calculations, adding the fact that in our case $x,y \in [0,s]$: the additional factor $(xy)^{\frac{1}{4}}$ will guarantee a uniform convergence also in such a neighbourhood of zero.
\begin{gather}
\frac{ (xy)^{\frac{1}{4}} }{n} K_n^{(1)}\le( \frac{x}{n},\frac{y}{n}\ri) = \nonumber \\
=\frac{(xy)^{\frac{1}{4}}}{4\pi^2}\int_{\gamma}\int_{\tilde\gamma}  \prod_{k=0}^r\frac{\Gamma(1+\nu_k-v)}{\Gamma(1+\nu_k-u)} \frac{\Gamma(1-n-u)}{\Gamma(1-n-v)} \frac{ \le(\frac{x}{n}\ri)^{-u} \le(\frac{y}{n}\ri)^{v-1} }{n(u-v)} \d v \, \d u \nonumber \\
= \frac{1}{4\pi^2}\int_{\gamma}\int_{\tilde\gamma}  \prod_{k=0}^r\frac{\Gamma(1+\nu_k-v)}{\Gamma(1+\nu_k-u)} \frac{\Gamma(1-n-u)}{\Gamma(1-n-v)} \frac{x^{-u+\frac{1}{4}} y^{v-\frac{3}{4}} }{n^{v-u}(u-v)} \d v\, \d u. \label{A.15}
 \end{gather}
Thanks to the analyticity of the contours away from the points $\le\{ 0, -1, \ldots, -n-1 \ri\} \cup \le\{ k+\nu_{\min}\ri\}_{k\in \mathbb N}$, we can assume that $\max \le\{ \Re \le(\gamma(u)\ri) \ri\} < \frac{1}{4}$ and $\min \le\{ \Re \le( \tilde \gamma (v) \ri) \ri\} > \frac{3}{4} $. Thus the factor $x^{-u+\frac{1}{4}} y^{v-\frac{3}{4}}$ is uniformly bounded on $[0,s]$. 

Using the Euler's reflection formula for the Gamma function, we have 
\begin{gather}
 \frac{\Gamma(1-n-u)}{\Gamma(1-n-v)} = \frac{\sin (\pi v) }{\sin (\pi u)} \frac{\Gamma(n+v)}{\Gamma(n+u)};
\end{gather}
moreover, as $n\to + \infty$
\begin{gather}
\frac{\Gamma(n+v)}{\Gamma(n+u)} = n^{v-u} \le(1+ \mathcal{O} \le(\frac{1}{n} \ri)\ri)
\end{gather}
uniformly, thanks to the Stirling formula \cite[formula (6.1.37)]{NIST}. Therefore, the uniform convergence of the integrand holds. 

The integral over $\gamma$ in \eqref{A.15} converges since $\Gamma (1+\nu_j -u)$ increases along the contour $\gamma$ towards $-\infty$ and $\le| \sin(\pi u) \ri| \geq \le| \sinh (\pi) \Im (u) \ri|$. Similarly, $\Gamma (1+\nu_k-v)$ tends to zero at an exponential rate along the contour $\tilde \gamma$ towards $+\infty$ and the integral over $\tilde \gamma$ in \eqref{A.15} converges as well. In conclusion, we can indeed interchange the limit and integrals and obtain the result \eqref{scaling limit uniform}. 
\item Similar arguments hold for the case $j=2$ with $\beta= \frac{1}{4}$.
\item For $j=3$, if $\alpha \geq 0$, the uniform convergence is straightforward, while if $\alpha <0$, we need to introduce again the additional factor $(xy)^\beta$, with $\beta = -\frac{\alpha}{2}$ in order to guarantee uniform convergence in a (right) neighbourhood of zero. The proof is again similar as in the case $j=1$.
\end{enumerate}
\end{proof}
\begin{corollary}\label{corollary Fredholm}
The limit \eqref{lim eig distr Fredholm} holds for $j=1,2,3$.
\end{corollary}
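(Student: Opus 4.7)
The first equality in \eqref{lim eig distr Fredholm} is automatic: since $K_n^{(j)}$ has rank $n$, the series defining ${\rm Prob}(c_n^{(j)}x^*>s)$ terminates after $n$ terms and coincides exactly with the Fredholm determinant expansion.

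For the second equality the plan is to invoke \cite[Theorem 2.21 and Addendum H]{Simon}, by which Fredholm determinants are continuous with respect to the trace norm topology. It therefore suffices to show that the rescaled operators $\widetilde K_n^{(j)}|_{[0,s]}$ with kernels $\frac{1}{c_n^{(j)}}K_n^{(j)}\bigl(\frac{x}{c_n^{(j)}},\frac{y}{c_n^{(j)}}\bigr)$ converge in trace norm to $\mathbb K^{(j)}|_{[0,s]}$. To this end, I would adapt to the finite-$n$ setting the factorization scheme of Section \ref{section: identity}. Using the double contour integral representation of Proposition \ref{prop: finite n kernels} together with the Mellin conjugation of Proposition \ref{prop conjugation}, the operator $\widetilde K_n^{(j)}|_{[0,s]}$ is unitarily equivalent, via the Mellin transform, to a composition of two Hilbert--Schmidt operators between auxiliary spaces $L^2(\gamma)$, $L^2(\tilde\gamma)$, and $L^2\bigl(\frac{1}{2}+i\R\bigr)$, exactly as in the proof of Proposition \ref{prop J IIKS} but with $F_n^{(j)}$ replacing $F^{(j)}$. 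The same factorization for $\mathbb K^{(j)}|_{[0,s]}$ is already recorded in Proposition \ref{prop J IIKS}. Since trace-norm convergence of a product of Hilbert--Schmidt operators follows from Hilbert--Schmidt convergence of each factor, the problem reduces to proving that each factor converges in Hilbert--Schmidt norm to its limiting counterpart.

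The Hilbert--Schmidt factors are explicit contour integral operators involving the ratios $F_n^{(j)}(u)/F_n^{(j)}(v)$, which by Stirling's formula converge pointwise to $F^{(j)}(u)/F^{(j)}(v)$ as $n\to\infty$; indeed this is precisely the computation already carried out in the proof of Lemma \ref{lemma scaling limit uniform}, where the extra gamma factor of the form $\Gamma(\cdot - n + 1)$ in $F_n^{(j)}$ contributes a ratio $\sim n^{v-u}$ that cancels against the rescaling factor $c_n^{(j)}$. Hilbert--Schmidt norm convergence then follows from the dominated convergence theorem applied on $\gamma$ and $\tilde\gamma$, provided one can exhibit a square-integrable majorant that is uniform in $n$.

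The main obstacle is precisely the construction of this uniform-in-$n$ majorant. The pointwise estimates \eqref{as F1}--\eqref{as F3} on $F^{(j)}$ must be upgraded to $n$-uniform Stirling bounds for $F_n^{(j)}$ that control simultaneously the behavior near the real axis (where the poles of $F_n^{(j)}$ accumulate) and the exponential Gamma decay on the tails of the vertical contours $\gamma,\tilde\gamma$. This is a routine but somewhat delicate calculation, structurally parallel to the estimates underlying the proof of Lemma \ref{lemma scaling limit uniform}, and once carried out it completes the proof.
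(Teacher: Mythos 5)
Your plan is structurally sound (trace-norm convergence plus continuity of the Fredholm determinant), and the first-equality remark matches the paper. But for the trace-norm convergence you take a genuinely different and harder route than the paper does, and you stop at exactly the point where the difficulty lies.

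You propose to lift the Mellin conjugation and Hilbert--Schmidt factorization of Propositions \ref{prop conjugation} and \ref{prop J IIKS} to the finite-$n$ kernels, and then show Hilbert--Schmidt convergence of each factor. That would indeed give trace-norm convergence, since $\|A_nB_n-AB\|_1\le\|A_n-A\|_2\|B_n\|_2+\|A\|_2\|B_n-B\|_2$. The problem is the step you yourself flag: you would need $n$-uniform Stirling-type bounds for $F_n^{(j)}(u)/F_n^{(j)}(v)$ on the vertical contours $\gamma,\tilde\gamma$ that simultaneously control the accumulating poles near the real axis and the exponential tails, so as to produce a square-integrable dominating function independent of $n$. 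None of this is in the paper, Lemma \ref{lemma scaling limit uniform} is an estimate in physical $(x,y)$-space and not on the contours, and it is not obviously ``routine'': the very factor $\Gamma(1-n-z)$ in $F_n^{(j)}$, after the rescaling $n^{u-v}$, requires care to bound uniformly in $n$ along the whole unbounded contour before dominated convergence applies. As written, the proposal is a plan, not a proof.

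The paper avoids all of this by staying on $L^2(0,s)$ and using a softer operator-theoretic criterion. Lemma \ref{lemma scaling limit uniform} gives uniform convergence of $(xy)^\beta$ times the rescaled kernels on $[0,s]^2$ with $\beta<\tfrac12$, so multiplying by the $L^2$-integrable weight $(xy)^{-\beta}$ and applying dominated convergence yields both Hilbert--Schmidt convergence $\|U_n^{(j)}-\mathbb K^{(j)}\|_2\to 0$ (hence weak convergence) and convergence of the traces $\Tr U_n^{(j)}\to\Tr\mathbb K^{(j)}$. Since these are positive trace-class operators, weak convergence plus convergence of the traces already implies trace-norm convergence by \cite[Theorem 2.21 and Theorem A.6]{Simon}. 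This sidesteps the Mellin-side majorant entirely. If you want to keep your factorization strategy, you must actually construct the uniform dominating function on $\gamma,\tilde\gamma$; otherwise, replace that step by the weak-plus-trace criterion, which is what makes the argument close.
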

\begin{proof}
Write $U_n^{(j)}$ for the integral operator acting on $L^2(0,s)$ with kernel \[U_n^{(j)}(x,y):=\frac{1}{c_n^{(j)}}K_n^{(j)}\left(\frac{x}{c_n^{(j)}},\frac{y}{c_n^{(j)}}\right), \qquad j=1,2,3.
\]
First, if we take a continuous test function $f$, one shows using Lemma \ref{lemma scaling limit uniform} and the dominated convergence theorem that \[\le\| \le(U_n^{(j)}-\mathbb K^{(j)}\ri) f \ri\|_2\leq \|U_n^{(j)}-\mathbb K^{(j)}\|_2 \ \|f\|_\infty\to 0.\] It follows that $U_n^{(j)} \to \mathbb K^{(j)}$ weakly. Secondly, again using Lemma \ref{lemma scaling limit uniform} and the dominated convergence theorem, we show easily that $\Tr U_n^{(j)}\to \Tr \mathbb K^{(j)}$.
From \cite[Theorem 2.21 and Theorem A.6]{Simon}, it follows that $U_n^{(j)}\to \mathbb K^{(j)}$ in trace norm. Since the Fredholm determinant is continuous under trace norm, we obtain 
\eqref{lim eig distr Fredholm}.
\end{proof}

\section*{Acknowledgements}
TC and MG are supported by the European Research Council under the European Union's Seventh Framework Programme (FP/2007/2013)/ ERC Grant Agreement 307074. DS is supported by FWO Flanders Project G.0934.13
and KU Leuven Research Grant OT/12/073. The authors are also supported by the Belgian Interuniversity Attraction Pole P07/18.

\bibliography{biblionotes}

\end{document}